\title{Differential Equations as Fixpoints and Games}
\author{Noah {Abou El Wafa}}
\email{noah.abouelwafa@kit.edu}
\author{Andr\'e Platzer} 
\email{platzer@kit.edu}
\affiliation{%
  \institution{Karlsruhe Institute of Technology}
  \city{Karlsruhe}
  \country{Germany}
  }
\newif\iflongversion
\newif\ifinlineproofs 
\newcommand{\missing}{\ensuremath{{}\color{tomato}{?}\kern-1.5pt\ldots?}\xspace}
\newcommand{\mutex}{\texorpdfstring{\ensuremath{\mu}}{mu}}
    \newcommand{\apprefexp}[2]{\Cref{#1} in \Cref{#2}\xspace}
    \newcommand{\appref}[1]{\Cref{#1}\xspace}
    \newcommand{\citefullarx}{} 
    \newcommand{\thelongversion}[2]{the long version~\cite{arxivversion}\xspace}
    \renewenvironment{proofE}{\begin{proof}}{\end{proof}}
\definecolor{darkishgray}{rgb}{.35,.35,.35}
\definecolor{darkred}{rgb}{0.6, 0, 0}
\definecolor{crimson}{rgb}{0.86, 0.08, 0.24}
\definecolor{firebrick}{rgb}{0.7, 0.13, 0.13}
\definecolor{tomato}{rgb}{1.0, 0.39, 0.28}
\definecolor{darkyellow}{rgb}{0.85, 0.65, 0.13}
\definecolor{brownevdpurpmix}{rgb}{0.5, 0.3, 0.3}
\definecolor{maroon}{rgb}{0.5, 0.0, 0.0}
\definecolor{olivegreen}{rgb}{0.5, 0.5, 0.0}
\definecolor{coral}{rgb}{1.0, 0.5, 0.31}
\definecolor{semblue}{rgb}{0,0,0.7}
\definecolor{vgreen}{rgb}{.1,.5,0}
\definecolor{vred}{rgb}{.7,0,0}
\definecolor{vblue}{rgb}{.1,.15,.62}
\definecolor{vgray}{rgb}{.35,.35,.35}
\definecolor{semblue}{rgb}{0,0,0.7}
\definecolor{linkblue}{RGB}{0, 102, 204}
\colorlet{syntacticfpcolor}{teal}
\colorlet{syntacticreplacecolor}{teal}
\colorlet{propositionalreductioncolor}{brownevdpurpmix}
\colorlet{synequivcolor}{olivegreen}
\colorlet{syncolor}{vgreen}
\colorlet{odevarcolor}{orange}
\colorlet{glfcol}{teal}
\colorlet{glgcol}{vblue}
\colorlet{lmcol}{brown}
\colorlet{glftopropcolor}{glfcol}
\colorlet{glftofocolor}{glfcol}
\colorlet{glgtopropcolor}{glgcol}
\colorlet{glgtofocolor}{glgcol}
\colorlet{lmtopropcolor}{lmcol}
\colorlet{lmtofocolor}{lmcol}
\colorlet{normcolor}{darkred}
\colorlet{seqcolor}{tomato}
\colorlet{evdpurp}{purple}
\colorlet{highlightcolor}{red}
\colorlet{syntacticfpcolor}{black}
\colorlet{syntacticreplacecolor}{black}
\colorlet{propositionalreductioncolor}{black}
\colorlet{synequivcolor}{black}
\colorlet{syncolor}{black}
\colorlet{odevarcolor}{black}
\colorlet{evdpurp}{black}
\colorlet{glfcol}{black}
\colorlet{glgcol}{black}
\colorlet{lmcol}{black}
\colorlet{glftopropcolor}{black}
\colorlet{glftofocolor}{black}
\colorlet{glgtopropcolor}{black}
\colorlet{glgtofocolor}{black}
\colorlet{lmtopropcolor}{black}
\colorlet{lmtofocolor}{black} 
\colorlet{normcolor}{black}
\colorlet{seqcolor}{black}
\newcommand{\pow}[1]{\mathcal{P}(#1)}
\newcommand{\setcomplement}[2]{#1\setminus#2}
\newcommand{\setdual}[1]{{#1}^{\mathrm{d}}}
\newcommand{\Intersection}{\bigcap}
\newcommand{\upwardclose}[1]{#1{\kern-2pt}\uparrow{}}
\newcommand{\grassign}{::=}
\newcommand{\grsep}{\mid}
\renewcommand{\|}{\grsep}
\newcommand{\term}{\theta}
\newcommand{\termb}{\delta}
\newcommand{\fml}{\varphi}
\newcommand{\fmlb}{\psi}
\newcommand{\fmlc}{\rho}
\newcommand{\fmld}{\eta}
\newcommand{\atfml}{p}
\newcommand{\ivars}{\mathcal{V}}
\newcommand{\ivar}{\mathrm{x}}
\newcommand{\ivarb}{\mathrm{y}}
\newcommand{\ivarc}{\mathrm{z}}
\newcommand{\ivarseq}{\bar{\ivar}}
\newcommand{\ivarbseq}{\bar{\ivarb}}
\newcommand{\ivarcseq}{\bar{\ivarc}}
\newcommand{\funcsymb}{f}
\newcommand{\ivarname}{individual variable\xspace}
\newcommand{\pvarname}{fixpoint variable\xspace}
\newcommand{\pvars}{\mathbb{V}}
\newcommand{\pvar}{X}
\newcommand{\pvarb}{Y}
\newcommand{\relsymb}{R}
\newcommand{\game}{\alpha}
\newcommand{\gameb}{\beta}
\newcommand{\gamesymb}{\mathfrak{a}}
\newcommand{\xdashleftarrow}[2][]{\ext@arrow 3095\leftarrowfill@@{#1}{#2}}
\def\rightarrowfill@@{\arrowfill@@\relax\relbar\rightarrow}
\def\leftarrowfill@@{\arrowfill@@\dashleftarrow \relbar\relax}
\def\leftrightarrowfill@@{\arrowfill@@\leftarrow\relbar\rightarrow}
\def\arrowfill@@#1#2#3#4{%
    $\m@th\thickmuskip0mu\medmuskip\thickmuskip\thinmuskip\thickmuskip
        \relax#4#1
        \xleaders\hbox{$#4#2$}\hfill
        #3$%
}
\newcommand{\gamesymbat}[3][\gamesymb]{#2{\xleftarrow{#1}}#3}
\newcommand{\gamesymbatprop}[3][\gamesymb]{{#2{{\color{propositionalreductioncolor}\xdashleftarrow{#1}}}#3 }}
\renewcommand{\gamesymbat}[3][\gamesymb]{%
  #2{%
    \ensuremath{%
      \raisebox{0ex}{$%
        \xleftarrow{%
          \smash{\raisebox{-2pt}{\text{$\scriptscriptstyle\kern1pt#1$}}}%
        }$}%
    }%
  }\,#3%
}
\renewcommand{\gamesymbatprop}[3][\gamesymb]{%
  #2{%
    \ensuremath{%
      \raisebox{0ex}{{\color{propositionalreductioncolor}$%
        \xdashleftarrow{%
          \smash{\raisebox{-2pt}{\text{$\scriptscriptstyle\kern1pt#1$}}}%
        }$}}%
    }%
  }\,#3%
}
\newcommand{\lpossible}[1]{\langle#1\rangle}
\newcommand{\lnecessary}[1]{[#1]}
\newcommand{\gtest}[1]{{?}{#1}}
\newcommand{\gtestp}[1]{({?}{#1})}
\newcommand{\gcom}{;}
\newcommand{\gassign}[2]{{#1{\gassignsymb}#2}}
\newcommand{\gdassign}[2]{\gdual{#1{\gassignsymb}#2}}
\newcommand{\gachoice}{\cup}
\newcommand{\gdtest}[1]{\gdual{{?}{#1}}}
\newcommand{\garepeat}[1]{{#1}^*}
\newcommand{\garepeatp}[1]{\garepeat{(#1)}}
\newcommand{\gassignsymb}{\coloneq}
\newcommand{\gndassignsymb}{*}
\newcommand{\gndassign}[1]{{#1{\gassignsymb}*}}
\newcommand{\gndassignrat}[1]{\gassign{#1}{*}_{\upharpoonright\rationals}}
\newcommand{\evdsep}{\,\&\,}
\newcommand{\gode}[2]{\D{#1}{=}#2}
\newcommand{\godeev}[3]{\D{#1}{=}#2\evdsep{#3}}
\newcommand{\godet}[2]{\D{#1}(\timevar){\overset{\scriptscriptstyle}{=}}#2}
\newcommand{\godetev}[3]{\D{#1}(\timevar){\overset{\scriptscriptstyle}{=}}#2\evdsep#3}
\newcommand{\termseq}{\bar{\term}}  
\newcommand{\termbseq}{\bar{\termb}}
\newcommand{\gassignprop}[2]{\gamesymbatprop[\gassignsymb]{#1}{#2}}
\newcommand{\contreach}[4][]{#3\xrightarrow{#2}_{#1}#4}
\newcommand{\contreachin}[5][]{#3\xrightarrow{#2\evdsep#5}_{#1}#4}
\newcommand{\flfp}[2]{{\color{syntacticfpcolor}\mu}{#1}.{#2}}
\newcommand{\fgfpnum}[3]{{\color{syntacticfpcolor}\nu}^{#1}{#2}.{#3}}
\newcommand{\fgfp}[2]{{\color{syntacticfpcolor}\nu}{#1}.{#2}}
\newcommand{\gdual}[1]{{#1}^{\mathrm{d}}}
\newcommand{\gdualp}[1]{\gdual{(#1)}}
\newcommand{\fosignature}{first-order signature\xspace}
\newcommand{\gamesignature}{game signature\xspace}
\newcommand{\gsig}{\mathcal{L}}
\newcommand{\gamesymbol}{action symbol\xspace}
\newcommand{\GameSymbol}{Action Symbol\xspace}
\newcommand{\fonstructure}{first-order neighbourhood structure\xspace}
\newcommand{\FonStructure}{First-order Neighbourhood Structure\xspace}
\newcommand{\fonstr}{\mathcal{N}}
\newcommand{\fonstrdom}[1][\fonstr]{\abs{#1}}
\newcommand{\fonstrint}[2][\fonstr]{{#2}^{#1}}
\newcommand{\fonstrel}{u}
\newcommand{\fonstrelb}{v}
\newcommand{\fostates}{\mathcal{S}}
\newcommand{\fostate}{\omega}
\newcommand{\fostateb}{\eta}
\newcommand{\fosubset}{E}
\newcommand{\fosubsetvar}{X}
\newcommand{\fostatessubset}{S}
\newcommand{\fostatessubsetb}{T}
\newcommand{\fostatessubsetvar}{U}
\newcommand{\pnstrpintp}[2]{{#1}\llparenthesis{#2}\rrparenthesis}
\renewcommand{\pnstrpintp}[2]{{#1}\envelope{#2}}
\newcommand{\pnstrtrintp}[2]{\pnstrpintp{#1}{#2}}
\newcommand{\streplaceby}[3]{#1_{#2}^{#3}}
\newcommand{\strestrvar}[2]{#1{\upharpoonright}#2}
\newcommand{\varcomp}[1]{{#1}^c}
\newcommand{\envelope}[1]{\llbracket{#1}\rrbracket}
\newcommand{\envelopecol}[2][red]{{\color{#1}\llbracket}{#2}{\color{#1}\rrbracket}}
\newcommand{\interp}{interpretation\xspace}
\newcommand{\intp}{\mathcal{I}}
\newcommand{\intreplaceby}[3][\intp]{#1_{#2}^{#3}}
\renewcommand{\intreplaceby}[3][\intp]{#1[{#2}/{#3}]}
\renewcommand{\intreplaceby}[3][\intp]{#1[{#2}{\mapsto}{#3}]}
\newcommand{\Gamefuncs}[1]{\mathcal{G}(#1)}
\newcommand{\gamefunc}{F}
\newcommand{\Nfuncsmix}[2]{\mathcal{Q}(#1;#2)}
\newcommand{\nfunc}{F}
\newcommand{\mlfp}[2]{{\mu}{#1}.{#2}}
\newcommand{\mlfpp}[2]{\mlfp{#1}{(#2)}}
\newcommand{\mlfpi}[3]{{\mu^{#1}}{#2}.{#3}}
\newcommand{\mlfppi}[3]{{\mu^{#1}}{#2}.{(#3)}}
\newcommand{\mgfp}[3][]{{\nu^{#1}}{#2}.{#3}}
\newcommand{\ltden}[2][\fostate]{#1\envelope{#2}}
\newcommand{\glfden}[2][\fonstr]{#1\envelopecol[glfcol]{#2}}
\newcommand{\glgden}[2][\fonstr]{#1\envelopecol[glgcol]{#2}}
\newcommand{\lmden}[3][\fonstr]{#1\envelopecol[lmcol]{#3}^{#2}}
\newcommand{\glfdenprop}[2][\pnstr]{#1\envelopecol[glfcol!80]{#2}}
\newcommand{\glgdenprop}[2][\pnstr]{#1\envelopecol[glgcol!60]{#2}}
\newcommand{\lmdenprop}[3][\pnstr]{#1\envelopecol[lmcol!80]{#3}^{#2}}
\newcommand{\FirstorderMuCalculus}{First-order \mutex-Calculus\xspace}
\newcommand{\firstordermucalculus}{first-order \mutex-calculus\xspace}
\newcommand{\firstordergamelogic}{first-order game logic\xspace}
\newcommand{\Firstordergamelogic}{First-order game logic\xspace}
\newcommand{\FirstorderGameLogic}{First-order Game Logic\xspace}
\newcommand{\gamelogic}{game logic\xspace}
\newcommand{\fotoprops}{\flat}
\newcommand{\propstofo}{\sharp}
\newcommand{\gsigtoprop}[1][\gsig]{#1^\fotoprops}
\newcommand{\atfmltoprop}[1]{\smash{\color{propositionalreductioncolor}{{\color{black}#1}}^\fotoprops}}
\newcommand{\fgentoprop}[1]{#1^\fotoprops}
\newcommand{\fgentopropp}[1]{(#1)^\fotoprops}
\newcommand{\fglfmltoprop}[1]{#1{\color{glftopropcolor}^\fotoprops}}
\newcommand{\fglgtoprop}[1]{#1^{\color{glgtopropcolor}\fotoprops}}
\newcommand{\flmfmltoprop}[1]{#1^{\color{lmtopropcolor}\fotoprops}}
\newcommand{\fglfmltopropp}[1]{\fglfmltoprop{(#1)}}
\newcommand{\pglfmltofo}[1]{#1^{\color{glftofocolor}{\propstofo}}}
\newcommand{\pglgtofo}[1]{#1^{\color{glgtofocolor}{\propstofo}}}
\newcommand{\plmfmltofo}[1]{#1{}^{\color{lmtofocolor}{\propstofo}}}
\newcommand{\pglfmltofoplmutoglflmfmltoprop}[1]{{#1}^{\fotoprops\plmutoglsym\propstofo}}
\newcommand{\fonstrtons}[1][\fonstr]{#1^\fotoprops}
\newcounter{aligntablecounter}
\newenvironment{aligntable}[1][1]{%
    \newcommand{\eqsym}{=}
    \newcommand{\displayitem}[2]{&{##1}\eqsym##2}%
    \newcommand{\nextit}[2]{\displayitem{##1}{##2}\nexti}%
    \newcommand{\nexti}{
        \stepcounter{aligntablecounter}%
        \ifnum\value{aligntablecounter}=#1%
            \setcounter{aligntablecounter}{0}\\%
        \else%
            &%
        \fi%
    }%
    \newcommand{\lastit}[2]{\displayitem{##1}{##2}}%
}{\setcounter{aligntablecounter}{0}}
\newtheorem{theorem}{Theorem}
\newtheorem{proposition}[theorem]{Proposition}
\newtheorem{lemma}[theorem]{Lemma}
\newtheorem{corollary}[theorem]{Corollary}
\newtheorem{definition}[theorem]{Definition}
\theoremstyle{remark}
\newtheorem{remark}{Remark}
\newcommand{\folmus}[1][\gsig]{\ensuremath{#1_\mu}\xspace}
\newcommand{\folmusort}{\texorpdfstring{\folmus\kern-3pt\xspace}{first-order mu-calculus}}
\newcommand{\fogls}[1][\gsig]{\ensuremath{\mathrm{G}{\kern-0.15em}{#1}}\xspace}
\newcommand{\foglsort}{\texorpdfstring{\fogls}{first-order game logic}}
\newcommand{\foglss}[1][\gsig]{\ensuremath{\mathrm{G}{\kern-0.15em}{#1}_\mathrm{s}}\xspace}
\newcommand{\plmus}{\ensuremath{\mathsf{L}_\mu}\xspace}
\newcommand{\pgls}{\ensuremath{\mathsf{GL}}\xspace}
\newcommand{\pglss}{\ensuremath{\mathsf{GL}_\mathrm{s}}\xspace}
\newcommand{\intervals}{\mathbb{I}}
\newcommand{\differentialgamelogic}{differential game logic\xspace}
\newcommand{\Differentialgamelogic}{Differential game logic\xspace}
\newcommand{\DifferentialGameLogic}{Differential Game Logic\xspace}
\newcommand{\differentialdynamiclogic}{differential dynamic logic\xspace}
\newcommand{\Differentialdynamiclogic}{Differential dynamic logic\xspace}
\newcommand{\DifferentialDynamicLogic}{Differential Dynamic Logic\xspace}
\newcommand{\differentialmucalculus}{differential \(\mu\)-calculus\xspace}
\newcommand{\systemicdlmus}{\ensuremath{\dLmu^{\lor}}\xspace}
\newcommand{\systemicformula}{simple\xspace}
\newcommand{\angelicformulas}{existential\xspace}
\newcommand{\klary}[2]{\(#1\leftarrow#2\)~-ary}
\renewcommand{\klary}[2]{\((#1,#2)\)-ary}
\renewcommand{\klary}[2]{\({#1}{\shortleftarrow}{#2}\)}
\renewcommand{\klary}[2]{\({#2}\)-to-\({#1}\)}
\newcommand{\dLmu}{\ensuremath{\mathrm{d}\kern-1pt\folmus}\xspace}
\newcommand{\dsymb}[1][\odefof]{\mathrm{d}({#1\evdsep\evdfml})}
\newcommand{\dglsig}{\gsig^{\mathbb{R}}}
\newcommand{\logics}{\mathbb{L}}
\newcommand{\logicsa}{\mathbb{L}}
\newcommand{\logicsb}{\mathbb{K}}
\newcommand{\provrel}[3]{#1\vdash_{{}_{#2}}#3}
\newcommand{\provlog}[2][]{\provrel{#1}{#2}{}}
\newcommand{\gentranslations}{K}
\newcommand{\gentranslationsb}{L}
\newcommand{\gentranslation}[2][\gentranslations]{{#2}^{#1}}
\newcommand{\gentranslationp}[2][\gentranslations]{{(#2)}^{#1}}
\newcommand{\gentranslationb}[2][\gentranslationsb]{{#2}^{#1}}
\newcommand{\gentranslationab}[1]{{#1}^{\gentranslations\gentranslationsb}}
\newcommand{\gentranslationba}[1]{{#1}^{\gentranslationsb\gentranslations}}
\newcommand{\semparset}{\mathbb{P}}
\newcommand{\semparsetg}[1]{\mathbb{S}(#1)}
\newcommand{\sempar}{p}
\newcommand{\semparb}{q}
\newcommand{\gensem}[4][\semantics]{#2,#3\vDash_{#1}#4}
\newcommand{\gensemsemival}[3][\semantics]{#2\vDash_{#1}#3}
\newcommand{\gensemval}[2][\semantics]{\vDash_{#1}#2}
\newcommand{\gensemb}[3][\semantics]{#2\envelope{#3}_{#1}}
\newcommand{\folmusem}[3]{\gensem[\folmus]{#1}{#2}{#3}}
\newcommand{\foglsem}[3]{\gensem[\fogls]{#1}{#2}{#3}}
\newcommand{\provfolmu}[2][]{\provrel{#1}{\folmus}{#2}}
\newcommand{\provfogl}[2][]{\provrel{#1}{\fogls}{#2}}
\newcommand{\foaxiomsinprop}{\mathcal{F}}
\newcommand{\provplmuplus}[1]{\provrel{\foaxiomsinprop}{\plmus}{#1}}
\newcommand{\provpglplus}[1]{\provrel{\foaxiomsinprop}{\pgls}{#1}}
\newcommand{\provplmuplusassumption}[2]{\provrel{\foaxiomsinprop,#1}{\plmus}{#2}}
\newcommand{\provpglplusassumption}[2]{\provrel{\foaxiomsinprop,#1}{\pgls}{#2}}
\newcommand{\plmupluss}{\plmus+\foaxiomsinprop}
\newcommand{\plmutoglsym}{\mathsf{g}}
\newcommand{\pgltolmus}{\mathsf{f}}
\newcommand{\plmutogl}[1]{{#1}^{\plmutoglsym}}
\newcommand{\pgltolmu}[1]{#1{}^{\pgltolmus}}
\newcommand{\Playerone}{Angel\xspace}
\newcommand{\Playertwo}{Demon\xspace}
\newcommand{\ud}{\mathrm{d}}
\newcommand{\Rc}{\mathcal{R}}
\newcommand{\cont}{c_\top}
\newcommand{\conf}{c_\bot}
\newcommand{\ivarsab}[1][\propatgame]{p_{#1}}
\newcommand{\ivarsabb}[1][\propatgame]{q_{#1}}
\newcommand{\gasab}[1]{{\sim{}\kern-3pt #1}}
\newcommand{\gdsab}[1]{\gdual{{\sim{\kern-3pt}#1}}}
\newcommand{\semantics}{\mathcal{S}}
\newcommand{\semanticsc}{\mathcal{S}}
\newcommand{\semanticsd}{\mathcal{T}}
\newenvironment*{caselist}[1][- Case]{%
    \newcommand{\case}[1]{\item[\noindent\emph{#1} ##1:]}
    \begin{inparaitem}
        }{%
    \end{inparaitem}
}
\newcommand{\fmlfreesetto}[3][\fml]{\fmlreplacepvarby[#1]{#2}{#3}}
\newcommand{\fmlreplacevarby}[3][\fml]{#1\tfrac{#3}{#2}}
\newcommand{\fmlreplacepvarby}[3][\fml]{#1{\color{syntacticreplacecolor}[}{#2}/{#3}{\color{syntacticreplacecolor}]}} 
\renewcommand{\fmlreplacepvarby}[3][\fml]{#1{\color{syntacticreplacecolor}[}{#2}{\mapsto}{#3}{\color{syntacticreplacecolor}]}} 
\newcommand{\fmlreplacepvarpby}[3][\fml]{\fmlreplacepvarby[(#1)]{#2}{#3}}
\newcommand{\fmlreplacevarpby}[3][\fml]{(#1)\tfrac{#3}{#2}}
\newcommand{\gamereplacevarby}[3][\game]{#1\tfrac{#3}{#2}}
\newcommand{\gamereplacevarpby}[3][\game]{(#1)\tfrac{#3}{#2}}
\newcommand{\pstatepre}{{\bar{\fonstrel}}}
\newcommand{\pstatepreel}[1]{{{\fonstrel}_{#1}}}
\newcommand{\pstatepreatvar}[2][\pstatepre]{^#1_{#2}}
\newcommand{\replpstate}[2][\fostate]{#1{}#2}
\newcommand{\compactset}{{\color{olive} K}}
\newcommand{\derbound}{{\color{odevarcolor} k_1}}
\newcommand{\secondderbound}{{\color{odevarcolor} k_2}}
\newcommand{\derbounds}{{\color{odevarcolor} k}}
\newcommand{\supnorm}[2][\compactset]{{\color{normcolor}\norm[{#1}]{{\color{black}#2}}}}
\renewcommand{\supnorm}[2][\compactset]{\abs{#2}_{#1}}
\newcommand{\realnorm}[1]{{\color{evdpurp}\abs{{\color{black}#1}}}}
\newcommand{\reachrelsymbol}{\Rc}
\newcommand{\reachrelofin}[2]{\reachrelsymbol({#1{\evdsep}#2})}
\newcommand{\growthcond}[3][\evdset]{\mathcal{B}_{#2\evdsep#1,#3}}
\newcommand{\evdset}{{\color{evdpurp}C}}
\newcommand{\evdsetbound}[2][\evdset]{{#1}_{\upharpoonright#2}}
\newcommand{\evdfml}{{\color{evdpurp}\psi}}
\newcommand{\splitsymbol}{\mathcal{S}}
\newcommand{\splitop}[3][\evdset]{\splitsymbol_{#2\evdsep#1,#3}}
\newcommand{\rfnstr}{\mathbb{R}}
\newcommand{\epsnbhd}[2][\varepsilon]{B_{#1}(#2)}
\newcommand{\cepsnbhd}[2][\varepsilon]{\bar{B}_{#1}(#2)}
\newcommand{\singularsemantics}{semantics\xspace}
\newcommand{\ivarsall}{\bar{\mathrm{v}}}
\newcommand{\ivarsallel}[1]{\mathrm{v}_{#1}}
\newcommand{\synequiv}{{\;}{\color{synequivcolor}\equiv}{\;}}
\newcommand{\notsynequiv}{{\;}{\color{synequivcolor}\not\equiv}{\;}}
\newcommand{\dLmurat}{\ensuremath{\mathrm{d}\mathcal{L}_{\mu\rationals}}\xspace}
\newcommand{\dGLrat}{\ensuremath{\dGL_{\rationals}}\xspace}
\newcommand{\provdlmu}{\provlog{\mathrm{d}\kern-1pt\mathcal{L}_{\kern-1pt\mu}}}
\newcommand{\provdl}{\provlog{\dL}}
\newcommand{\provdgl}{\provlog{\dGL}}
\newcommand{\provdglplus}{\provlog[\dL]{\dGL}}
\newcommand{\rationaldifferentialgamelogic}{rational differential game logic\xspace}
\newcommand{\Rationaldifferentialgamelogic}{Rational differential game logic\xspace}
\newcommand{\valdgl}{\vDash_{\dGL}}
\newcommand{\smallGame}{{\scriptscriptstyle\Game}}
\newcommand{\syndot}[1]{\dot{\color{syncolor}#1}}
\newcommand{\evdfmlbound}[3][\evdfml]{\syndot{\mathrm{D}}_{#1}(#3,{#2})}
\newcommand{\syngrowth}[3][\evdfml]{\syndot{\mathrm{B}}_{#2,#1}(\odevarsx,\odevarsb,\timecalcvar,\derbounds)}
\newcommand{\synsplitmu}[3][\evdfml]{\syndot{\mathrm{S}}^\mu_{#2,#1}(\odevarsx,\odevarsb,\timecalcvar,\derbounds,\pvar)}
\newcommand{\synsplitgame}[3][\evdfml]{\syndot{\mathrm{S}}^\smallGame_{#2,#1}(\odevarsx,\odevarsb,\timecalcvar,\derbounds)}
\newcommand{\synsplitmutime}[4][\evdfml]{\syndot{\mathrm{S}}^\mu_{#2,#1}(\odevarsx,\odevarsb,#4,\derbounds,\pvar)}
\newcommand{\midpointseq}{\bar{u}}
\newcommand{\synreachrelsymbol}{\syndot{\mathrm{R}}}
\newcommand{\synreachrelgeneral}[3]{\synreachrelsymbol^{#3}_{#2}(#1)}
\newcommand{\synreachrelinsuper}[6][\odevarsx,\odevarsb]{\synreachrelgeneral{#1,#3,#4}{#2,#5}{#6}}
\newcommand{\synreachrelsuper}[5][\odevarsx,\odevarsb]{\synreachrelinsuper[#1]{#2}{#3}{#4}{\evdfml}{#5}}
\newcommand{\synreachrel}[4][\odevarsx,\odevarsb]{\synreachrelsuper[#1]{#2}{#3}{#4}{}}
\newcommand{\synreachrelmu}[4][\odevarsx,\odevarsb]{\synreachrelsuper[#1]{#2}{#3}{#4}{\mu}}
\newcommand{\synreachrelgame}[4][\odevarsx,\odevarsb]{\synreachrelsuper[#1]{#2}{#3}{#4}{\smallGame}}
\newcommand{\synreachrelnoderbounds}[3][\odevarsx,\odevarsb]{\synreachrelgeneral{#1,#3}{#2,\evdfml}{}}
\newcommand{\synreachrelnoderboundsnoevd}[3][\odevarsx,\odevarsb]{\synreachrelgeneral{#1,#3}{#2}{}}
\newcommand{\synreachrelnoevd}[4][\odevarsx,\odevarsb]{\synreachrelgeneral{#1,#3,#4}{#2}{}}
\newcommand{\dGLheadline}{\texorpdfstring{\dGL}{differential game logic}}
\newcommand{\generalparamname}{global\xspace}
\newcommand{\localparamname}{local\xspace}
\newcommand{\Localparamname}{Local\xspace}
\newcommand{\pref}[1]{{\color{linkblue}\ref{#1})}}
\newcommand{\fostatessubsetafterdas}[3][\fostatessubset]{#1_{#2}^{#3}}
\newcommand{\fostatessubsetafterdasp}[3][\fostatessubset]{(#1)_{#2}^{#3}}
\newcommand{\ciff}{\quad\text{iff}\quad}
\newcommand{\fostatessubsetboundeff}[3][\fostatessubset]{#1\cap\fostatesame{#2}{#3}}
\newcommand{\fostatesame}[2]{I_{#1}^{#2}}
\newcommand{\fostatessubsetcomp}{\fostatessubset^c}
\newcommand{\contextaxiomname}{context\xspace}
\newcommand{\excontextaxiomname}{extended context\xspace}
\newcommand{\propatgame}{\mathsf{a}}
\newcommand{\iref}[1]{{\color{linkblue}\ref{#1})}}
\newcommand{\synlie}[2]{\mathcal{L}_{#1}(#2)}
\newcommand{\doubledbound}[2]{\mathcal{D}_{#1}(\odef)}
\newcommand{\realat}[3]{\mathsf{at}(#1,#2,#3)}
\newcommand{\ivarnat}{n}
\newcommand{\ivarrat}{q}
\newcommand{\lforallnat}[1]{\lforall{{#1{\in}\naturals}}\,}
\newcommand{\lforallrat}[1]{\lforall{{#1{\in}\rationals}}\,}
\newcommand{\lexistsrat}[1]{\lexists{#1{\in}\rationals}\,}
\newcommand{\lexistsnat}[1]{\lexists{#1{\in}\naturals}\,}
\newcommand{\elof}[2]{{#1}_{{\color{seqcolor}(}{#2}{\color{seqcolor})}}}
\newcommand{\loglfplfp}[4]{[l\mathsf{F}_{#1,#2}#3](#4)}
\newcommand{\LFP}[1][\fosig]{\ensuremath{\mathsf{LFP}}\xspace}
\newcommand{\lfptofml}[2][\mu]{{#2}^{{{#1}}}}
\newcommand{\lfptofmlp}[2][\mu]{\lfptofml[#1]{(#2)}}
\newcommand{\lfpden}[2][\fonstr]{#1\envelope{#2}}
\newcommand{\allexp}[1]{{#1}^{\ivarseq}}
\newcommand{\allexpp}[1]{{(#1)}^{\ivarseq}}
\newcommand{\ivarforpvar}[1]{\iota_{#1}}
\newcommand{\predofreal}[1]{\check{#1}}
\newcommand{\predofivar}[1]{\check{\iota}_{#1}}
\newcommand{\reccode}[1]{\mathsf{T}_{#1}}
\newcommand{\progcode}{\sigma}
\newcommand{\progcodecom}[2]{#1{}^\frown#2}
\newcommand{\progcodequote}[1]{\raisebox{0.2ex}{$\ulcorner$}\kern-2pt#1\kern-2pt\raisebox{0.2ex}{$\urcorner$}}
\newcommand{\applyprogcodetoprog}[2][\sigma]{\progseqeval{#1}#2}
\newcommand{\applyprogcodetoprogp}[2][\sigma]{\applyprogcodetoprog[#1]{(#2)}}
\newcommand{\forallprogcode}[1]{\applyprogcodetoprog[\kern-2pt*\kern-2pt]{#1}}
\newcommand{\forallprogcodep}[1]{\forallprogcode{(#1)}}
\newcommand{\upcorners}[1]{\raisebox{0.2ex}{$\ulcorner$}\kern-2pt#1\kern-2pt\raisebox{0.2ex}{$\urcorner$}}
\newcommand{\locorners}[1]{\raisebox{-0.4ex}{$\llcorner$}\kern-2pt#1\kern-2pt\raisebox{-0.4ex}{$\lrcorner$}}
\newcommand{\progseq}{\bar{a}}
\newcommand{\emptyprogseq}{\emptyset}
\newcommand{\progseqeval}[1]{\langle\kern-2.2pt|{}#1{}|\kern-2.2pt\rangle}
\newcommand{\finseq}{u}
\newcommand{\finseqlen}[1][\finseq]{{\color{seqcolor}|}{#1}{\color{seqcolor}|}}
\newcommand{\bananaleft}{(\kern-1.9pt\adjustbox{width=0.2em,height=0.7em}{(}}
\newcommand{\existsatmost}[2][1]{\exists^{\leq#1}#2{\;}}
\newcommand{\existsunique}[1]{\exists!#1{\;}}
\newcommand{\selection}{replacement\xspace}
\newcommand{\Selection}{Replacement\xspace}
\newcommand{\splitilength}[1][m]{\tfrac{t}{2^{#1}}}
\renewcommand{\splitilength}[1][m]{\ell(#1)}
\newcommand{\splitilengthinv}[1][m]{\ell(#1)^{-1}}
\newcommand{\splitilengthdoubinv}[1][m]{\ell(#1)^{-2}}
\newcommand{\dffnorm}{C}
\newcommand{\fmlepsminus}[3][\odevarsx]{#2_{-#3}(#1)}
\newcommand{\setepsminus}[2]{#1_{-#2}}
\newcommand{\synlipbd}[1][\odefof]{\mathrm{Lip}_{#1}(M,L)}
\newcommand{\odef}{{\color{odevarcolor}\mathrm{F}}}
\newcommand{\odefof}[1][\odevarsx]{\odef(#1)}
\newcommand{\timevar}{{\color{odevarcolor} \mathrm{t}}}
\newcommand{\timecalcvar}{{\color{odevarcolor} \tau}}
\newcommand{\odevarsx}{{\color{odevarcolor}\ivarseq}}
\newcommand{\odevarsb}{{\color{odevarcolor}\ivarbseq}}
\newcommand{\fmlcl}[2][\odevarsx]{\mathrm{cl}_{#1}({#2})}
\newcommand{\fmlint}[2][\odevarsx]{\mathrm{int}_{#1}({#2})}
\newcommand{\stdode}{\godeev{\odevarsx}{\odefof}{\evdfml}}
\newcommand{\stdodewo}{\gode{\odevarsx}{\odefof}}
\newcommand{\stdodewith}[1]{\godeev{\odevarsx}{\odefof}{#1}}
\newcommand{\stdodet}{\godetev{\odevarsx}{\odefof}{\evdfml}}
\newcommand{\stdodetwith}[1]{\godetev{\odevarsx}{\odefof}{#1}}
\newcommand{\stdodetbdd}{\godetev{\odevarsx}{\odefof}{\evdfml\land \timevar\leq T}}
\newcommand{\stdodetclbdd}{\godetev{\odevarsx}{\odefof}{\fmlcl{\evdfml}\land\timevar\leq T}}
\newcommand{\stdodetintbdd}{\godetev{\odevarsx}{\odefof}{\fmlint{\evdfml}\land\timevar\leq T}}
\newcommand{\stdodetwo}{\godet{\odevarsx}{\odefof}{}}
\newcommand{\stdoderat}{\godeev{\odevarsx(\timevar)}{\odefof}{\evdfml}_{\upharpoonright\rationals}}
\newcommand{\stdoderatwo}{\gode{\odevarsx(\timevar)}{\odefof}_{\upharpoonright\rationals}}
\newcommand{\odefi}[1]{\odef_{#1}}
\newcommand{\dffnotation}{D\odefinterp\cdot\odefinterp}
\newcommand{\odefinterp}{\envelope{\odefof}}
\newcommand{\xel}{x}
\newcommand{\yel}{y}
\newcommand{\varisseq}[2]{{\color{seqcolor}#1{=}\upcorners{#2}}}
\newcommand{\seqasvar}[2]{{\color{seqcolor}\gassign{#1}{\locorners{#2}}}}
\newcommand{\varincode}[2]{#1{\color{seqcolor}\in}#2}
    \crefname{enumi}{}{}
    \Crefname{enumi}{}{}%
    \definecolor{darkishgray}{rgb}{.35,.35,.35}
    \renewcommand{\ltrue}{\top}
    \renewcommand{\lfalse}{\bot}
    \renewcommand{\mand}{{\,}{\&}{\,}}
    \newcommand{\univaxsymbol}{C}
\begin{document}

\begin{abstract}
    Games and fixpoints are unified by proving that first-order game logic $\fogls$ and the first-order modal $\mu$-calculus $\folmus$ are proved to be equiexpressive and equivalent, thereby fully aligning their expressive and deductive power.
    That is, there is a semantics-preserving translation from $\fogls$ to $\folmus$, and vice versa.
    And both translations are provability-preserving, while equivalence with there-and-back-again roundtrip translations are provable in both calculi.
    This is to be contrasted with the propositional case, where game logic is strictly less expressive than the modal $\mu$-calculus (without adding sabotage games).

    The extensions with differential equations, differential game logic (\dGL) and differential modal $\mu$-calculus, are also proved equiexpressive and equivalent.
    Moreover, as the continuous dynamics are definable by fixpoints or via games, ODEs can be axiomatized completely and, as a consequence, infinitesimally robust properties of ODEs can be decided via proof search.
    Rational gameplay provably collapses the games into single-player games to yield a strong arithmetical completeness theorem for \dGL with rational-time ODEs.
\end{abstract}
\maketitle

\keywords{Modal-logic,Mu-calculus,Expressiveness,Differential equations,Fixpoint,Game Logic,Completeness}

\pratendSetLocal{category=proofs}

\section{Introduction}

Where the core challenges in the analysis and verification of \emph{discrete} computer programs are loops and recursion, the analysis and verification of \emph{simultaneously discrete and continuous} cyber-physical systems introduces the additional complexity of reasoning about 
differential equations.
While repetition in program behaviour \cite{DBLP:journals/computing/Clarke79} has long been understood in terms of fixpoints and games to great effect, the continuous dynamics of physical systems, modeled with differential equations, are difficult to handle using discrete computation and symbolic reasoning.
This article both unifies the fixpoint and game perspectives for discrete programs and extends them to cover the continuous dynamics through a fixpoint and a game understanding of differential equation reachability.
The complex primitives of fixpoints, adversarial behaviour and differential equations are shown to be naturally closely related and provide complementary perspectives.

The contexts of this article are the two general-purpose, first-order interpreted program logics: the $\mu$-calculus as a fixpoint logic, and game logic.
It is shown that despite their different roots their characterizations via fixpoints and via games are fundamentally the same.
That is, first-order game logic $\fogls$ and the first-order modal $\mu$-calculus $\folmus$ are equiexpressive and equivalent\footnote{%
    This is in contrast to the propositional case \cite{DBLP:conf/focs/Parikh83}, where expressiveness has a subtle but wide gap \cite{DBLP:journals/mst/BerwangerGL07}, and completeness, despite significant attention \cite{DBLP:conf/lics/EnqvistHKMV19}, remains elusive \cite{kloibhofer2023note}, so that the addition of sabotage games is needed to complete game logic and establish equivalence and completeness \cite{DBLP:conf/lics/WafaP24}.
}, which completely aligns both their expressive power and their deductive power.
This is established via semantics- and provability-preserving translations that support provable roundtrip translations.
The syntactic provability of the correctness of these translations lifts the semantic equiexpressiveness proofs to complete syntactic proof transfers.
Every proved property of one syntactically lifts to a proved property of the other.
As a consequence, relative completeness theorems for one fragment readily transfer to the other.

The general perspective in this paper, by supporting general atomic game symbols, makes it possible to lift these findings to the presence of various additional dynamics.
Discrete, continuous, and adversarial dynamics are all shown to be fixpoints.
Most interestingly, differential equations can be characterized by fixpoints and, thus, also as games, providing a \emph{global} and \emph{discrete} perspective on differential equations.
Instead of approximating differential equations by a sequence of steps, whose precision is linear in the time step, the fixpoint provably approximates the function globally and symmetrically.
This nondeterministic dynamic definition of continuous reachability allows for approximations with precision exponential in the number of fixpoint iterations.
Beyond the theoretical interest, the axiomatization enables strong (absolute) completeness results for systems where safety is not contingent on infinitesimal behaviour on the boundary.
Moreover, the truth of robust properties is decidable via proof search and can be used to automate verification.

The equivalence of games and fixpoints via logic extends to show that differential game logic \dGL (with discrete, continuous, and adversarial dynamics) \cite{DBLP:journals/tocl/Platzer15} is equiexpressive and equivalent to the differential $\mu$-calculus \dLmu (with discrete and continuous dynamics).
For computer programs described in Hoare calculus \cite{DBLP:journals/cacm/Hoare69}, Cook showed that the proof calculus is relatively complete \cite{DBLP:journals/siamcomp/Cook78}.
Harel extended this to show the arithmetic completeness of interpreted dynamic logic \cite{DBLP:conf/icalp/Harel78}, by an equivalent reduction to the assertion language.
For the more complex adversarial and continuous dynamics, this approach cannot work to show arithmetic completeness of \differentialgamelogic, as it has significant additional expressive power \cite{DBLP:journals/tocl/Platzer15}.
However, when the choices of the players are restricted to rational values, $\dGL$ is now shown to be equiexpressive and relatively complete.
This means games with \emph{rational} choices, in contrast to general games, do \emph{not add} complexity to the game free version.
To summarize, while \differentialgamelogic is conceptually significantly richer than \differentialdynamiclogic, with surprisingly small restrictions, such as robustness for continuous properties or rational choices for games, neither the adversarial nor the continuous dynamics fundamentally increase the expressive and deductive complexity.

\subsubsection*{Contributions}

First-order game logic, the first-order \(\mu\)-calculus, and their abstract \gamesymbol{s} are introduced.
In contrast to their propositional counterparts, both logics are proved to be logically equivalent and the fixpoint variable hierarchy of the first-order $\mu$-calculus is shown to collapse.
The proof techniques are general and show the power of provable roundtrip translations between logics and propositional reductions of first-order questions to purely semantic propositional properties.

The general theory is showcased for proofs of new theoretical properties of \differentialgamelogic.
Via a fixpoint axiomatization of ODEs, the \(\mu\)-calculus correspondence is used to prove equiexpressiveness of \differentialgamelogic to its ODE-free fragment, showing how the continuous dynamics in hybrid games can be handled completely.
Moreover, the fixpoint characterization is shown to have practical value, as it delivers practically applicable completeness and decidability of robust safety and reachability properties of differential equations.
Although adversarial behavior of \dGL significantly increases its expressiveness over the non-adversarial fragment \dL \cite{DBLP:conf/lics/Platzer12b,DBLP:journals/tocl/Platzer15}, this difference is shown to vanish when restricting to rational play.
This exhibits a fundamental difference in how games or fixpoints interact with the uncountable than they do interact with the countable.

\subsubsection*{Outline}

In \Cref{sec:logics}, general expressive power and (proof-theoretic) equivalence between logics is introduced.
The extension of first-order primitives with \gamesymbol{s} for state change and the first-order extensions of game logic and the \(\mu\)-calculus are introduced in \Cref{sec:glandmu}.
\Cref{sec:proofcalc} introduces proof calculi for these logics and the equiexpressiveness and equivalence of \fogls and \folmus are established in \Cref{sec:equiexpressivenessandequivalence}.
\Differentialgamelogic is introduced as an interpreted version of game logic and a fixpoint axiomatization of ODEs is presented in \Cref{sec:diffequationsasfixpoints}.
This is applied in \Cref{sec:completionforrobust} to prove completeness and decidability for robust safety and reachability properties.
Finally, completeness of \dGL with rational-time ODEs relative to the base logic is proved in \Cref{sec:relcompleteness}. \Cref{sec:relwork} discusses related work.

\section{Logics: Expressive and Deductive Power}\label{sec:logics}

Formulating the results of this paper benefits from a formal notion of equivalence of logics.
This section introduces the abstract notion of logics, semantics and translations.
An important general condition of equivalence is identified: the soundness of translations needs to be provable \emph{in the logical calculi} via there-and-back translations.
This condition is the crucial ingredient when it comes to proof transfers.

A \emph{logic} \(\logics\) is viewed abstractly to consist of a (computable) set of formulas \(\logics\) and a proof calculus \(\provlog{\logics}\), which is abstractly viewed as a (semi-computable) reflexive, transitive provability relation \(\provlog[\fmlb]{\logics}\fml\) on formulas \(\fml,\fmlb\in\logics\).
It is also assumed that the set of formulas of a logic is closed under propositional connectives, that the proof calculus proves all propositional tautologies and admits the modus ponens proof rule:
\[
    \cinferenceRule[mp|MP]{modus ponens}
    {
        \linferenceRule[sequent]
        {\fml & \fml\limply\fmlb}
        {\fmlb}
    }{}
\]
That is \(\provlog[\fmlc]{\logics}{\fmlb}\) for any formula \(\fmlc\) with \(\provlog[\fmlc]{\logics}{\fml}\) and \(\provlog[\fmlc]{\logics}{\fml\limply\fmlb}\).
Write \(\provlog{\logics}\fml\) for \(\provlog[\ltrue]{\logics}\fml\).

The \singularsemantics of a logic may depend on parameters.
In model-theoretic semantics of first-order logic, for example, a formula can be interpreted in a structure for different values of variables.
It is sometimes useful to distinguish two types of parameters: \emph{\generalparamname} and \emph{\localparamname} parameters.
\Localparamname parameters can depend on the value of the \generalparamname parameters.
For example in first-order logic, structures can be viewed as \generalparamname parameters and variable assignments as \localparamname parameters.
Formally a (denotational) \emph{\singularsemantics} \(\semantics\) of \(\logics\) with \generalparamname parameter set \(\semparset\) and \localparamname parameter sets \(\semparsetg{\sempar}\) for \(\sempar\in\semparset\) is of the type
\[
    \semantics:\prod_{\sempar\in\semparset}(\logics\to\pow{\semparsetg{\sempar}})
\]
Write \(\gensemb[\semantics]{\sempar}{\fml}\) for \(\semantics(\sempar)(\fml)\).
Write \(\gensem[\semantics]{\sempar}{\semparb}{\fml}\) if \(\semparb\in\gensemb[\semantics]{\sempar}{\fml}\) for \(\sempar\in\semparset\) and \(\semparb\in\semparsetg{\sempar}\).
If \(\gensem[\semantics]{\sempar}{\semparb}{\fml}\) holds for all \(\semparb\in\semparsetg{\sempar}\) write \(\gensemsemival[\semantics]{\sempar}{\fml}\).
And if \(\gensemsemival[\semantics]{\sempar}{\fml}\) holds for all \(\sempar\in\semparset\) write \(\gensemval[\semantics]{\fml}\) and say \(\fml\) is \emph{valid} for \(\semantics\).
A logic \(\logics\) is
\begin{enumerate}
    \item (globally) \emph{sound} with respect to $\semantics$ if
    \(\provlog[\fmlb]{\logics}\fml\;\mimply\;\gensemval[\semantics]{\fmlb}\mimply \gensemval[\semantics]{\fml}.\)
    \item \emph{locally} sound with respect to $\semantics$ if
    \(\provlog[\fmlb]{\logics}\fml\;\mimply\;\mforall{\sempar\in\semparset}( \gensemsemival[\semantics]{\sempar}{\fmlb}\mimply \gensemsemival[\semantics]{\sempar}{\fml}).\)
\end{enumerate}
Note that local soundness implies soundness.
The choice of which parameters are \localparamname and which are \generalparamname is crucial to the notion of local soundness.
The logic \(\logics\) is said to be \emph{complete} with respect to $\semantics$ if \(\provlog{\logics}{\fml}\) whenever \(\gensemval[\logics]{\fml}\).

Consider two logics \(\logicsa,\logicsb\) with the same \generalparamname and \localparamname parameters.
A \emph{translation} \(\gentranslations:\logicsa\to\logicsb\) from \(\logicsa\) to \(\logicsb\) is a function mapping each formula \(\fml\) of \(\logicsa\) to a formula \(\gentranslation{\fml}\) of \(\logicsb\).
A translation \(\gentranslations\) is \emph{sound} with respect to \(\sempar\in\semparset\) and the respective semantics \(\semanticsc\) of \(\logicsa\) and \(\semanticsd\) of \(\logicsb\) if \(\gensemb[\semanticsc]{\sempar}{\fml}=\gensemb[\semanticsd]{\sempar}{\gentranslation{\fml}}\) for all \(\fml\).
If such a sound translation exists, say \(\logicsb\) is \emph{at least as expressive} over \(\sempar\) as \(\logicsa\).
If in addition \(\logicsa\) is at least as expressive as \(\logicsb\)  over \(\sempar\) then \(\logicsa\) and \(\logicsb\) are said to be \emph{equiexpressive over \(\sempar\)}.
The logics \(\logicsa\) and \(\logicsb\) are \emph{equiexpressive} if they are equiexpressive over all \generalparamname parameters.
The choice of which parameters are local and which are \generalparamname is crucial to the notion of equiexpressiveness.
The translation can depend on the \generalparamname parameters, but \emph{not} on the local parameters.

Besides their expressiveness, the \emph{deductive power} of the proof calculi is of interest.
The following definition captures the concept that two logics have the same deductive power.
\begin{definition}
    An \emph{equivalence} of two logics \(\logicsa,\logicsb\) is a pair of translations \(\gentranslations:\logicsa\to\logicsb\) and \(\gentranslationsb:\logicsb\to\logicsa\) such that
    \begin{enumerate}
        \item \(\provlog{\logicsa}\fml ~\mimply~ \provlog{\logicsb}\gentranslation{\fml}\) and
              \(\provlog{\logicsb}\fmlb ~\mimply~ \provlog{\logicsa}\gentranslationb{\fmlb}\) \label{prop:logicequivalencethereonly}
        \item \(\provlog{\logicsa}\fml\lbisubjunct\gentranslationab{\fml}\)
              and \(\provlog{\logicsb}\fmlb\lbisubjunct\gentranslationba{\fmlb}\) \label{prop:logicequivalencethereandback}
    \end{enumerate} \label{def:logicequivalent}
\end{definition}

The first condition requires that both logics prove the same formulas, up to translation.
It is possible that the proofs for \(\fml\) and \(\gentranslation{\fml}\) are very different. For example in the case of sabotage game logic and the modal \(\mu\)-calculus \cite{DBLP:conf/lics/WafaP24}, the length of the proof of the translation to the \(\mu\)-calculus can be non-elementary in the length of the original game proof, while the reverse embedding into games is linear.

The roundtrip condition \iref{prop:logicequivalencethereandback} requires that the two translations are inverse to each other up to (provable) logical equivalence.
This crucially ensures that the correctness of the translation depends \emph{only} on the proof calculus and makes it possible to transfer relative provability properties.
The role of the two conditions is best illustrated on the useful consequence of the equivalence that completeness transfers:
\begin{propositionE}[Completeness Transfer][normal]\label{prop:completenesstransfer}
    Let \(\gentranslations,\gentranslationsb\) be an equivalence between \(\logicsa\) and \(\logicsb\), and let \(\gentranslationsb\) be sound for \(\semanticsd\) and~\(\semanticsc\).
    If \(\logicsa\) is complete for \(\semanticsc\), then \(\logicsb\) is complete for \(\semanticsd\).
\end{propositionE}
\begin{proofE}
    Suppose \(\gensemval[\logicsb]{\fml}\), then \(\gensemval[\logics]{\gentranslation[\gentranslationsb]{\fml}}\) by soundness of \(\gentranslationsb\), hence \(\provlog{\logicsa}\gentranslation[\gentranslationsb]{\fml}\) by completeness of \(\logicsa\).
    By condition \iref{prop:logicequivalencethereonly} this transfers to \(\provlog{\logicsb}\gentranslation[\gentranslationsb\gentranslations]{\fml}\).
    With \irref{mp} \(\provlog{\logicsb}\fml\) derives by roundtrip condition~\iref{prop:logicequivalencethereandback}.
\end{proofE}
\begin{remark}
    Equivalence of logics can also be understood in category-theoretic terms.
    A logic \(\logics\) can be viewed as a category with the formulas of \(\logics\) as objects, such that there is a unique arrow from \(\fml\) to \(\fmlb\) iff \(\provlog{\logics}\fml\limply\fmlb\) holds.
    From this perspective an equivalence of two logics is an equivalence of the two corresponding categories.
    The notion of equivalence is \emph{local}.
    By defining the category so that there is an arrow from \(\fml\) to \(\fmlb\) iff \(\provlog[\fml]{\logics}\fmlb\) holds, the resulting notion of equivalence is \emph{global}.
    For logics satisfying the deduction theorem (if \(\fmlb\provlog{\logics}{\fml}\) then \(\provlog{\logics}{\fml\limply\fmlb}\)) besides modus ponens both notions coincide.
    Without the deduction theorem, such as in most modal logics, the notion of local equivalence is more interesting.
\end{remark}

\section{First-Order Game Logic and \mutex-Calculus}\label{sec:glandmu}

\Firstordergamelogic and the \firstordermucalculus add dynamics to ordinary first-order logic in two ways.
On the one hand they add \emph{atomic games}, which are a very general notion of state change, generalizing the quantifiers of first-order logic, which have a limited form of state change.
On the other hand, both logics allow for new (and different) combinations of such state changes, through fixpoints or games, respectively.

General atomic games, which \firstordergamelogic and the \firstordermucalculus have in common, are introduced by extending a first-order signature and providing game semantics in \Cref{sec:actionsymbols}.
The dynamics of fixpoints and  games are introduced in \Cref{sec:fogls,sec:folmu}, respectively.
Particular atomic games for assignment are discussed and related to the usual first-order logic quantifiers in \Cref{sec:assignmentandquant}.
The subtleties of substitution are discussed in \Cref{sec:freebound}.

\subsection{\GameSymbol Dynamics} \label{sec:actionsymbols}

In \firstordergamelogic it is natural to start from \emph{atomic game} primitives.
This is in contrast to other first-order modal logics, where the basic modalities are interpreted as relations~\cite{Fitting1998-MELFML}.
Defining atomic games is more subtle, as the interactive nature of the gameplay needs to be considered and, at the same time, the variables that the game depends on and affects need to be discernible.
The use of \gamesymbol{s} abstracts from the particular interpretation of a game.

\subsubsection{First-order Modal Signature}
{Fix a set \(\ivars\) of \ivarname{s}.}
A  \fosignature consists of
function symbols~\(\funcsymb\) and relation symbols~\(\relsymb\) with fixed arities.
A \emph{\gamesignature}~\(\gsig\) is a \fosignature together with a list of \gamesymbol{s}~\(\gamesymb\) with fixed \emph{pairs} of arities.
An \gamesymbol is an abstract representation of a game and the arities refer to the number of variables that the game depends on and the number of variables that the game affects.
As usual first-order \(\gsig\)-terms \(\term\) and \emph{atomic} \(\gsig\)-formulas \(\atfml\) are defined by the grammar:
\begin{align*}
    \term
     & \grassign \ivar \| \funcsymb (\term_1,\ldots,\term_n)
    \\
    \atfml
     & \grassign \term_1 = \term_2 \| \relsymb(\term_1,\ldots,\term_n)
\end{align*}%
where
\(\funcsymb\) is an \(n\)-ary \(\gsig\)-function symbol and \(\relsymb\) is an \(n\)-ary \(\gsig\)-relation symbol.
The notation \(\termseq,\termbseq\) is used for finite sequences of terms of the form \(\term_1,\ldots,\term_\ell\) and \(\ivarseq,\ivarbseq\) stand for sequences of \ivarname{s} \(\ivar_1,\ldots,\ivar_k\).

\subsubsection{Effectivity Functions}
The semantics of \gamesymbol{s} requires some basic definitions.
Let \(\Gamefuncs{X}\) be the set of monotone (with respect to \(\subseteq\)) functions \(\gamefunc:\pow{X}\to\pow{X}\) called \emph{effectivity functions} \cite{DBLP:conf/lics/EnqvistHKMV19}.
Effectivity functions assign to a goal region \(W\subseteq X\) the set of states \(\gamefunc(W)\) from which \Playerone can win a game with the goal to get to a state in \(W\).
The monotonicity condition requires that if \Playerone can win the game with the goal \(W\) from a state, then \Playerone can also force the game to go into any superset \(V\supseteq W\) of \(W\) from the same state.

\begin{definition}
    For a function \(\gamefunc \in \Gamefuncs{X}\) define
    \begin{enumerate}
        \item the \emph{dual} \(\setdual{\gamefunc}(\fostatessubset) = \setcomplement{X}{\gamefunc(\setcomplement{X}{\fostatessubset})}\)
        \item the \emph{least fixpoint} \(\mlfp{\fostatessubset}{\gamefunc(\fostatessubset)} = \Intersection\{\fostatessubsetvar\subseteq X:\gamefunc(\fostatessubsetvar)\subseteq \fostatessubsetvar\}.\)
    \end{enumerate}
\end{definition}
Note that \(\mlfp{\fostatessubset}{\gamefunc(\fostatessubset)}\) is in fact least fixpoint of \(\gamefunc\) by the Knaster-Tarski theorem \cite{DBLP:journals/pjm/Tarski55}, since \(\gamefunc\) is monotone.
Effectivity functions are closely related to \emph{monotone neighbourhood functions}, which are functions \(\nfunc:X\to\pow{\pow{Y}}\), such that every \(\nfunc(x)\) is an \emph{upward closed} family of sets (i.e. if \(x\in X\), \(W\in\nfunc(x)\) and \(V\supseteq W\) then \(V\in\nfunc(x)\)).
Let \(\Nfuncsmix{X}{Y}\) be the set of monotone neighbourhood functions \(\nfunc:X\to\pow{\pow{Y}}\).
In terms of games, a monotone neighbourhood function can be viewed as a function assigning to an initial state the set of all sets into which \Playerone can force the game to go.
Neighbourhood functions and effectivity functions are in a natural correspondence.

\subsubsection{\FonStructure{s}}

A first-order neighbourhood structure is a first-order structure with additional interpretations of the \gamesymbol{s} in the signature.
Similarly to \(n\)-ary predicate symbols \(\relsymb\), which appear in the form \(\relsymb(\termseq)\) in a formula, an \klary{k}{\ell} \gamesymbol{s} \(\gamesymb\) may appear in the form \(\gamesymbat[\gamesymb]{\ivarseq}{\termseq}\) in a formula, were \(\ivarseq\) is of length \(k\) and \(\termseq\) is of length \(\ell\).
Intuitively this can be read as the game \(\gamesymb\) being played with parameters \(\termseq\) and affecting the values of the variables~\(\ivarseq\).
A structure interprets a \gamesymbol \(\gamesymb\) as a monotone neighborhood function \(\fonstrdom^\ell\to \pow{\pow{\fonstrdom^k}}\), which assigns to all parameter values the set of those (\(\gamesymb\)-achievable) sets into which \Playerone can force the game to go.

\begin{definition}
    An \emph{\(\gsig\)-\fonstructure~\(\fonstr\)} consists of a non-empty domain~\(\fonstrdom\) and interpretations for the \(\gsig\)-symbols:
    \begin{enumerate}
        \item \(\fonstrint{\funcsymb}:\fonstrdom^n\to\fonstrdom\) for \(n\)-ary \(\gsig\)-function symbols \(\funcsymb\),
        \item \(\fonstrint{\relsymb}\in\pow{\fonstrdom^n}\) for \(n\)-ary \(\gsig\)-relation symbols \(\relsymb\) and
        \item \(\fonstrint{\gamesymb}\in\Nfuncsmix{\fonstrdom^{\ell}}{\fonstrdom^k}\) for \klary{k}{\ell} \(\gsig\)-\gamesymbol{s}~\(\gamesymb\).
    \end{enumerate}
\end{definition}

A state \(\fostate\) is a function \(\ivars\to\fonstrdom\) assigning values to \ivarname and~\(\fostates\) denotes the set of all states.
The notation \(\strestrvar{\fostate}{\ivarseq}=(\fostate(\ivar_1), \ldots,\fostate(\ivar_k))\) is the restriction of the state \(\fostate\) to variables from \(\ivarseq\).
For a tuple \(\pstatepre=(\pstatepreel{1},\ldots,\pstatepreel{k})\in\fonstrdom^k\) and a sequence \(\ivarseq=(\ivar_1,\ldots,\ivar_k)\) of \ivarname{s}, let \(\replpstate[\fostate]{\pstatepreatvar{\ivarseq}}\) be the state that agrees with \(\fostate\) everywhere, except \(\replpstate[\fostate]{\pstatepreatvar{\ivarseq}}(\ivar_i)=\pstatepreel{i}\) for \(1\leq i \leq k\).

\subsection{First-order Game Logic}
\label{sec:fogls}

\subsubsection{Syntax}
\label{sec:syntax}

The syntax of~\(\gsig\)-game logic (\fogls) formulas $\fml$ and games $\game$ is given by
\begin{align*}
    \fml
     & \grassign \atfml \| \lnot \fml \| \fml_1\land\fml_2 \| \lpossible{\game}\fml
    \\
    \game
     & \grassign \gamesymbat{\ivarseq}{\termseq} \| \gtest{\fml} \| \game_1\gachoice\game_2 \| \game_1\gcom\game_2 \| \garepeat{\game} \| \gdual{\game}
\end{align*}
where~\(\atfml\) is an atomic~\(\gsig\)-formula, \(\ivarseq\) is a \(k\)-sequence of \ivarname{s}, \(\termseq\) is an \(\ell\)-sequence of \(\gsig\)-terms and~\(\gamesymb\) is a \klary{k}{\ell}~\gamesymbol of \(\gsig\).
The logical connectives \(\fml\lor\fmlb\) and \(\lnecessary{\game}\fml\) are definable as usual as \(\lnot(\lnot\fml\land\lnot\fmlb)\) and \(\lnot\lpossible{\game}\lnot\fml\).
Note that the usual first-order quantifiers are \emph{not} included in the basic grammar.
Just as in ordinary first-order logic, basic equality can be viewed as merely a fixed relation symbol with fixed interpretation, it is convenient to introduce quantifiers as fixed \gamesymbol{s} with particular fixed interpretation in \Cref{sec:assignmentandquant}.
Note that the \(\gsig\) in the notation \fogls indicates the signature.

\subsubsection{Semantics}
\label{sec:semantics}

The semantics of terms, formulas and games in \(\gsig\)-first-order game logic is defined with respect to a \(\gsig\)-\fonstructure~\(\fonstr\).
The semantics of a term \(\term\) is defined as usual to denote an element \(\ltden{\term}\in\fonstrdom\) of the domain with respect to a state \(\fostate\in\fostates\).
The semantics of \fogls formulas \(\fml\) is defined as a subset \(\glfden{\fml}\subseteq \fostates\) by induction on formulas.
For propositional connectives, this is as usual and for modalities it is \(\glfden{\lpossible{\game}\fml}=\glgden{\game}(\glfden{\fml})\), where
the semantics of games \(\game\) is defined by mutual recursion to be the \emph{monotone} function \(\glgden{\game}:\pow{\fostates}\to\pow{\fostates}\)
{%
        \allowdisplaybreaks
        \begin{aligntable}[2]
            \begin{align*}
                \nextit{ \glgden{\game\gachoice\gameb}                           }{ \glgden{\game}\cup\glgden{\gameb} }
                \nextit{ \glgden{\gtest{\fml}}(\fostatessubset)                     }{ \glfden{\fml}\cap\fostatessubset }
                \nextit{ \glgden{\game\gcom\gameb}                               }{ \glgden{\game}\circ\glgden{\gameb} }
                \nextit{ \glgden{\garepeat{\game}}(\fostatessubset)                 }{ \mlfpp{\fostatessubsetvar}{\fostatessubset\cup\glgden{\game}(\fostatessubsetvar)}}
                \nextit{ \glgden{\gdual{\game}}                                     }{ \setdual{\glgden{\game}} }
                \lastit{ \glgden{\gamesymbat{\ivarseq}{\termseq}}(\fostatessubset)  }{ \{\fostate \in \fostates : \mexists{U\in\fonstrint{\gamesymb}(\ltden[\fostate]{\termseq})}\;                   \forall{\pstatepre\in U}\replpstate[\fostate]{\pstatepreatvar{\ivarseq}}\in\fostatessubset\} }
            \end{align*}%
        \end{aligntable}%
    }

Except for the new \singularsemantics of atomic games, most of the \singularsemantics is similar to the propositional \singularsemantics of \gamelogic \cite{DBLP:conf/focs/Parikh83,DBLP:conf/lics/WafaP24}.
The diamond formula \(\lpossible{\game}\fml\) says that \Playerone has a winning strategy in \(\game\) to achieve $\fml$ and \(\glgden{\game}\) is a winning region function, which assigns to every goal region \(\fostatessubset\subseteq\fostates\) the set of states from which player \Playerone can force the game to end in a state in \(\fostatessubset\) (or win prematurely).
In the interpretation of atomic games of the form \(\gamesymbat{\ivarseq}{\termseq}\) \Playerone can win the game into a region \(\fostatessubset\) starting in a state \(\fostate\in\fostates\), if there is an  \(\gamesymb\)-achievable set \(U\in\fonstrint{\gamesymb}(\ltden[\fostate]{\termseq})\) from the parameters \(\ltden[\fostate]{\termseq}\) such that any possible final state \(\replpstate[\fostate]{\pstatepreatvar{\ivarseq}}\) with \(\pstatepre\in U\) is in the goal region~\(\fostatessubset\).
Since \(\game\gachoice\gameb\) denotes \Playerone's choice between playing the game \(\game\) or \(\gameb\), \Playerone can force the game \(\game\gachoice\gameb\) into \(\fostatessubset\) exactly if she can force it into \(\fostatessubset\) in \(\game\) \emph{or} in \(\gameb\).
The dual operator, intuitively, switches the roles of the players, so that any choice and test taken by \Playerone in \(\game\) becomes player \Playertwo's in \(\gdual{\game}\) and vice versa.
Composed games \(\game;\gameb\) are played sequentially, playing $\gameb$ after $\game$.
The winning region of \(\game;\gameb\) for the goal~\(W\) is  the winning region \(\glgden{\game}(V)\) of \(\game\), where the goal is the winning region \(V=\glgden{\gameb}(W)\) of the subsequent game \(\gameb\).
In a test game \(\gtest{\fml}\) \Playerone loses prematurely if the formula \(\fml\) is not satisfied and nothing happens if it is.
The repetition game \(\garepeat{\game}\) is played repeatedly, where \Playerone gets to choose after every round of $\game$ whether to continue, yet she loses if she never chooses to stop.

The monotonicity of the \singularsemantics of games is crucial.
Intuitively, it means that if \Playerone can win the game with the goal region \(\fostatessubset\), she can win into any larger goal region \(\fostatessubsetb\supseteq\fostatessubset\).

\begin{propositionE}[][normal] \label{prop:gamemonotone}
    The function \(\glgden{\game}\) of a game \(\game\) is monotone and \(\glgden{\garepeat{\game}}(\fostatessubset)\) is the least fixpoint of \(\fosubsetvar\mapsto \fostatessubset \cup \glgden{\game}(\fosubsetvar)\).
\end{propositionE}
\begin{proofE}
    Immediate by induction on \(\game\).
\end{proofE}

\subsection{\FirstorderMuCalculus}\label{sec:folmu}

\subsubsection{Syntax}

{Fix a set \(\pvars\) of \pvarname{s}.}
The syntax of~\(\gsig\)-\firstordermucalculus (\(\folmus\)) formulas is given by
\begin{align*}
    \fml \grassign \pvar \| \atfml \| \lnot \fml \| \fml_1\land\fml_2 \| \lpossible{\gamesymbat{\ivarseq}{\termseq}}\fml \| \flfp{\pvar}{\fml}
\end{align*}%
where \(\pvar\in\pvars\) is a \pvarname,~\(\atfml\) is an atomic~\(\gsig\)-formula, \(\ivarseq\) is a \(k\)-sequence of \(\gsig\)-variables, \(\termseq\) is an \(\ell\)-sequence of \(\gsig\)-terms and~\(\gamesymb\) is a \klary{k}{\ell}~\(\gsig\)-\gamesymbol.
As usual \(\pvar\) can appear only positively in \(\fml\) when it is bound by \(\flfp{\pvar}{\fml}\).
For a \gamesignature \(\gsig\) write \(\folmus\) for the \firstordermucalculus in the signature \(\gsig\).
Again quantifiers are introduced as specific \gamesymbol{s}.

\subsubsection{Semantics}

The semantics of formulas of the \firstordermucalculus is defined as a set of states \(\lmden{\intp}{\fml}\subseteq\fostates\) with respect to an \interp \(\intp:\pvars\to\pow{\fostates}\).
For atomic formulas the definition is as usual and the semantics of the remaining connectives is defined recursively as follows:
\begin{aligntable}[3]
    \begin{align*}
        \nextit{\lmden{\intp}{\pvar}}{\intp(\pvar)\quad}
        \nextit{\lmden{\intp}{\fml_1\land\fml_2}}{\lmden{\intp}{\fml_1}\cap\lmden{\intp}{\fml_2}}
        \nextit{\lmden{\intp}{\flfp{\pvar}{\fml}}}{\mlfp{\fostatessubsetvar}{\lmden{\intreplaceby{\pvar}{\fostatessubsetvar}}{\fml}}}
        \nextit{\lmden{\intp}{\lnot\fml}}{\fonstr\setminus\lmden{\intp}{\fml}}
        \lastit{\lmden{\intp}{\lpossible{\gamesymbat{\ivarseq}{\termseq}}\fml}}{
            \glgden{\gamesymbat{\ivarseq}{\termseq}}(\lmden{\intp}{\fml})
        }
    \end{align*}%
\end{aligntable}%
Here \(\intreplaceby{\pvar}{\fostatessubsetvar}\) is the \interp that agrees with \(\intp\) everywhere, except that \(\intreplaceby{\pvar}{\fostatessubsetvar}(\pvar)=\fostatessubsetvar\).
If a formula does not have free \pvarname{s}, the \interp \(\intp\) is dropped.
The semantics of \(\lmden{\intp}{\flfp{\pvar}{\fml}}\) is a least fixpoint by monotonicity:
\begin{propositionE}[][normal]
    For all \folmus formulas \(\fml\) the map \(\fostatessubsetvar\mapsto\lmden{\intreplaceby{\pvar}{\fostatessubsetvar}}{\fml}\) is monotone, if \(\pvar\) appears only positively in \(\fml\).
\end{propositionE}

\begin{proofE}
    Immediate by induction on \(\fml\).
\end{proofE}

\subsection{Quantifier and Assignment Actions}\label{sec:assignmentandquant}

As \emph{first-order} variants, the \firstordermucalculus and \firstordergamelogic should have universal and existential quantifiers.
While these could be added separately, it is convenient and illuminating to introduce quantifiers as particular \gamesymbol{s} instead.

\subsubsection{The Quantifier \GameSymbol}
Let \(\gndassignsymb\) be a  \klary{1}{0} \gamesymbol \(\gndassignsymb\) called the quantifier \gamesymbol or nondeterministic assignment \gamesymbol.
This symbol is interpreted over \emph{any} \fonstructure \(\fonstr\) as
\(\fonstrint{\gndassignsymb}(\emptyset) = \pow{\fonstrdom}\setminus\{\emptyset\}.\)
Instances of \(\gndassignsymb\)-atomic games \(\gamesymbat[*]{\ivar}{}\) are also written \(\gndassign{\ivar}\).
The semantics is such that \(\fostate \in \glgden{\gndassign{\ivar}}(\fostatessubset)\) iff there is \(\fonstrel\in\fonstrdom\) with \(\streplaceby{\fostate}{\ivar}{\fonstrel}\in\fostatessubset\) and coincide with the typical nondeterministic assignment in the literature \cite{DBLP:journals/tocl/Platzer15,DBLP:conf/stoc/HarelMP77}.
Interestingly, the quantifier \gamesymbol{s} are exactly the usual first-order quantifiers.
In fact \(\lexists{\ivar}\fml\) can be viewed as alternative notation for \(\lpossible{\gndassign{\ivar}}\fml\) and the semantics is exactly as for the existential quantifier:
\[\lmden{\intp}{\lexists{\ivar}\fml} = \{\fostate : \mexists{\fonstrel\in\fonstrdom} \streplaceby{\fostate}{\ivar}{\fonstrel}\in\lmden{\intp}{\fml}\}.\]
The universal quantifier \(\lforall{\ivar}\fml\) is defined by \(\lnot\lexists{\ivar}{\lnot\fml}\) as usual.

As quantifiers are fundamental in the first-order context, in the sequel, these will be assumed to exist in \fogls and \folmus.
\textbf{In the following it is assumed that every \gamesignature contains the quantifier symbol \(\gndassignsymb\) and every \fonstructure interprets it as described above.}
This is analogous to the assumption, common in first-order logic, that every every signature contains equality \(=\) as a relation symbol with fixed interpretation.

Viewing quantifiers as \gamesymbol{s} shows that \gamesymbol{s} generalize quantifier symbols.
Even Mostoswki quantifiers \cite{Mostowski1957OnAG} are \gamesymbol{s}.
The \firstordermucalculus can be viewed as first-order logic \emph{with} least fixpoints and generalized quantification (in the form of atomic game modalities).
In contrast, \firstordergamelogic can be viewed as the logic \emph{of} generalized quantifiers (which are defined by games with repetitions).
The equiexpressiveness and equivalence of \folmus and \fogls in \Cref{sec:equiexpressivenessandequivalence} unifies these perspectives logically.

The \firstordermucalculus \emph{without} any \gamesymbol{s} other than \(\gndassignsymb\) is equiexpressive with least fixpoint logic (\LFP) \cite{DBLP:journals/bsl/DawarG02}.
(See \appref{sec:lfpapp} for details.)
The difference is that in \LFP, fixpoints are finitary predicate symbols, whereas in the \firstordermucalculus they are predicates on the state (so on \emph{all variables}).
Nonetheless, the expressible properties coincide.

\subsubsection{Deterministic Assignment} \label{sec:deterministicassignment}

Another important \gamesymbol is the \klary{1}{1} deterministic assignment \(\gassignsymb\), which deterministically assigns the value of \(\term\) to variable \(\ivar\).
This state-change primitive is foundational for describing deterministic computer programs in first-order dynamic logic.
The semantics of deterministic assignment is
\[\fonstrint[\fonstr_\intervals]{(\gassignsymb)}(\fonstrel) = \{\fosubset \subseteq \fonstrdom : \fonstrel\in\fosubset\}.\]
Writing \(\gassign{\ivar}{\term}\) to mean \(\gamesymbat[\gassignsymb]{\ivar}{\term}\), observe that the semantics is \(\glgden{\gassign{\ivar}{\term}}(\fostatessubset) = \{\fostate : \streplaceby{\fostate}{\ivar}{\ltden[\fostate]{\term}}\in \fostatessubset\}\) as expected \cite{DBLP:journals/tocl/Platzer15,DBLP:conf/stoc/HarelMP77}.

The \gamesymbol \(\gassignsymb\) does not need to be added separately, as it is syntactically definable with \(\gndassignsymb\) in \fogls and \folmus.
In \fogls any formula \(\lpossible{\gassign{\ivar}{\term}}\fml\) can be written equivalently without \(\gassignsymb\) using a fresh variable \(\ivarb\) (in case \(\ivar\) is free in \(\term\)) as:
\(\lpossible{\gndassign{\ivarb};\gtest{\ivarb=\term};\gndassign{\ivar};\gtest{\ivar=\ivarb}}\fml\).
In \folmus the formula
\(\lpossible{\gndassign{\ivarb}}(\ivarb=\term\land\lpossible{\gndassign{\ivar}}(\ivar=\ivarb\land\fml))\)
is equivalent to \(\lpossible{\gassign{\ivar}{\term}}\fml\), when \(\ivarb\) is a fresh variable.
In what follows the \gamesymbol \(\gassignsymb\) is treated as an abbreviation.

\subsection{Free Variables, Bound Variables and Substitutions} \label{sec:freebound}

In the context of \firstordergamelogic and the \firstordermucalculus, as generally in first-order logic, the concepts of free and bound variables and substitutions are of critical importance for axiomatizations.
The fixpoint variables, repetition games and atomic games of the form \(\gamesymbat{\ivarseq}{\termseq}\) add additional subtleties.
This is outlined here while full definitions and detailed proofs of the relevant properties are in \appref{appendixfreeandbound}.

\subsubsection{Variables in \folmusort}
An \ivarname \(\ivar\) is (syntactically) free in an \folmus formula \(\fml\), if it appears in \(\fml\) and is not within the scope of an atomic game \(\gamesymbat{\ivarbseq}{\termseq}\).
A \pvarname \(\pvar\) is (syntactically) free in \(\fml\) if it appears in \(\fml\) and is not within the scope of a fixpoint quantifier \(\flfp{\pvar}{\fmlb}\).
The set of (syntactically) free variables \(\freevars{\fml}\) of an \folmus formula \(\fml\) consists of all its free \ivarname{s} \(\ivar\) \emph{and} all its free \pvarname{s} \(\pvar\).
The formula \(\fmlreplacevarby[\fml]{\ivar}{\term}\) is the result of substituting the term \(\term\) for the \ivarname \(\ivar\) in a formula \(\fml\) and this is \emph{defined} so that no free variables are captured. 
Importantly, substitution into \pvarname{s} \(\fmlreplacevarpby[\pvar]{\ivar}{\term} = \lpossible{\gassign{\ivar}{\term}}\pvar\) results in additional deterministic assignments.
Substitutions in atomic games are also subtle, since they need to be defined so that the substitutive adjoint property
(\(\fostate\in\lmden{\intp}{\fmlreplacevarby{\ivar}{\term}}\) iff \(\streplaceby{\fostate}{\ivar}{\ltden{\term}}\in\lmden{\intp}{\fml}\)) is maintained.
Substitution for \pvarname{s} is simpler and has the property that \(\lmden{\intp}{\fmlreplacepvarby{\pvar}{\fmlb}}=\lmden{\intreplaceby{\pvar}{\lmden{\intp}{\fmlb}}}{\fml}\).

\subsubsection{Variables in \foglsort}
In \firstordergamelogic subtleties arise when substituting into composite games.
As in other contexts \cite{DBLP:conf/cade/Platzer18} it is important to consider the variables that a game can potentially bind and those it necessarily binds.
The definition is such that substitution is \emph{always} allowed and variable capture is impossible.
For example the substitution \(\gamereplacevarpby[{\game;\gameb}]{\ivar}{\term}\) is defined by \(\gamereplacevarby[{\game}]{\ivar}{\term};\gameb\) if \(\game\) \emph{necessarily} binds \(\ivar\).
However, if \(\game\) only possibly binds \(\ivar\) or some variables that are free in \(\term\), then the substitution must be defined to be \(\gassign{\ivar}{\term};\game;\gameb\).
In cases like this, substitution may introduce additional deterministic assignments.
However, thanks to atomic games being explicit about their free and bound variables, this can be avoided by bound renaming.
For example the game \({\gamesymbat[\gamesymb]{\ivar}{\ivar};\game}\) is equivalent to \({\gamesymbat[\gamesymb]{\ivarb}{\ivar};\gamereplacevarby[\game]{\ivar}{\ivarb}}\) for a fresh variable \(\ivarb\). 
Substituting \(\ivar\) in this formula does not introduce deterministic assignments.

\section{Proof Calculus}\label{sec:proofcalc}

This section introduces proof calculi for \fogls and \folmus.
Since the two logics share a common core, a proof calculus for the shared fragment is introduced first in \Cref{sec:generalcalculus}, followed by proof calculi for \fogls and \folmus in \Cref{sec:glcalculus,sec:mucalculuscalculus}, respectively.
A deduction theorem is proved in \Cref{sec:univaxiomexpl}.

\subsection{A Basic First-order Modal Calculus for Game Actions}\label{sec:generalcalculus}
The basic first-order modal calculus is a Hilbert-style proof calculus for the shared fragment of \fogls and \folmus.
It consists of all propositional tautologies as axioms and the usual axioms for equality together with modus ponens \irref{mp} and the following axioms and rule:

\begin{center}
    \begin{calculuscollection}%
        \begin{calculus}
            \cinferenceRule[mon|M]{monotonicity rule}
            {
                \linferenceRule[sequent]
                {\fml\limply\fmlb}
                {\lpossible{\gamesymbat{\ivarseq}{\termseq}}\fml\limply\lpossible{\gamesymbat{\ivarseq}{\termseq}}\fmlb}
                }{}
            \end{calculus}
            \qquad\qquad
            \begin{calculus}%
                \cinferenceRule[existsaxiom|$\exists$]{existential quantifier axiom}
                {
                    \linferenceRule[impl]
                    {\fmlreplacevarby{\ivar}{\term}}
                    {\lexists{\ivar}\fml}
                }{}
            \end{calculus}
        \\
        \begin{calculus}
            \cinferenceRule[univ|\univaxsymbol]{universality axiom}
            {
                \linferenceRule[impl]
                {(\fmlb\land \lpossible{\gamesymbat{\ivarseq}{\termseq}}\fml)}
                {\lpossible{\gamesymbat{\ivarseq}{\termseq}}(\fmlb\land\fml)}\quad
            }{$\freevars{\fmlb}\subseteq \ivars\setminus{\ivarseq}$}
        \end{calculus}
    \end{calculuscollection}%
\end{center}

\noindent
The monotonicity axiom \irref{mon} captures the monotonicity of the interpretations of \gamesymbol{s} and it generalizes the usual monotonicity property of modal logic to the first-order setting.
Axiom \irref{existsaxiom} is the usual existential quantifier axiom from the Hilbert-calculus for first-order logic.
For its soundness, it is crucial that \(\fmlreplacevarby[\pvar]{\ivar}{\term}\synequiv\lpossible{\gassign{\ivar}{\term}}\pvar\) in the definition of the syntactic substitution of \folmus.
The \contextaxiomname axiom \irref{univ} captures the restricted bounding behavior of games.
It generalizes the first-order theorem \((\fmlb\land\lforall{x}\fml)\limply\lforall{x}(\fmlb\land\fml)\), which holds when \(\ivar\) is not free in \(\fmlb\), from quantifiers binding a single variable to atomic games binding multiple variables.
In \firstordergamelogic, axiom~\irref{univ} can be replaced by adapting the proof rules to retain contextual information (\Cref{sec:univaxiomexpl}).

The usual \(\lexists\) quantifier rule \irref{existsrule} is derivable from \irref{mon} with context axiom \irref{univ}:

\begin{center}
    \begin{calculuscollection}%
        \begin{calculus}%
            \dinferenceRule[existsrule|$G_\exists$]{existential quantifier rule}
            {
                \linferenceRule[sequent]
                {\fmlb\limply\fml}
                {\lexists{\ivar}\fmlb\limply\fml}\qquad
            }{$\freevars{\fml}\subseteq\ivars\setminus\{\ivar\}$}
        \end{calculus}
    \end{calculuscollection}
\end{center}

Note that the side-condition \(\freevars{\fml}\subseteq\ivars\setminus\{\ivar\}\) in \irref{univ} and \irref{existsrule} ensures that (in the case of \folmus) the formula \(\fml\) does not have free fixpoint variables, which is critical for soundness.
For example \(\lexists{\ivar}\pvar\limply\pvar\) is not valid in \folmus, even though \(\pvar\limply\pvar\) is valid.
For purely first-order formulas, this side-condition is vacuously satisfied.
The proof calculus, thus, is an extension of the complete Hilbert proof calculus for first-order logic, and as such is itself complete for the first-order fragment.

Substitution is sometimes defined in terms of deterministic assignment, which is itself defined in terms of nondeterministic assignment.
Consequently, some care is needed when using axiom \irref{existsaxiom} since the substitution \({\fmlreplacevarby{\ivar}{\term}}\) is not necessarily less complex than \({\lexists{\ivar}\fml}\).
(However, this can be avoided by bound renaming.)

\subsection{Proof Calculus for \FirstorderGameLogic}\label{sec:glcalculus}
The proof calculus for \fogls is an extension of the basic first-order modal calculus with the following axioms and rule

\begin{center}
    \begin{calculuscollection}
        \begin{calculus}
            \cinferenceRule[dual|$\lpossible{d}$]{duality axiom}
            {
                \linferenceRule[viuqe]
                {\lpossible{\gdual{\game}}\fml}
                {\lnot \lpossible{\game}\lnot\fml}
            }{}
            \cinferenceRule[test|$\lpossible{?}$]{test axiom}
            {
                \linferenceRule[viuqe]
                {\lpossible{\gtest{\fmlb}}\fml}
                {(\fmlb\land\fml)}
            }{}
        \end{calculus}%
        \qquad%
        \begin{calculus}
            \cinferenceRule[choice|$\lpossible{\cup}$]{choice axiom}
            {
                \linferenceRule[viuqe]
                {\lpossible{\game\gachoice\gameb}\fml}
                {\lpossible{\game}\fml\lor\lpossible{\gameb}\fml}
            }{}
            \cinferenceRule[composition|$\lpossible{{;}}$]{composition axiom}
            {
                \linferenceRule[viuqe]
                {\lpossible{\game;\gameb}\fml}
                {\lpossible{\game}\lpossible{\gameb}\fml}
            }{}
        \end{calculus}%
        \qquad%
        \begin{calculus}
            \cinferenceRule[diaind|I${}_*$]{diamond induction axiom}
            {
                \linferenceRule[sequent]
                {(\fml\lor\lpossible{\game}\fmlb)\limply\fmlb}
                {\lpossible{\garepeat{\game}}\fml\limply\fmlb}
            }{}
            \cinferenceRule[loop|$\lpossible{*}$]{loop axiom}
            {
                \linferenceRule[impl]
                {\fml\lor\lpossible{\game}\lpossible{\garepeat{\game}}\fml}
                {\lpossible{\garepeat{\game}}\fml}
            }{}
        \end{calculus}
    \end{calculuscollection}%
\end{center}

\noindent
Write \(\provfogl[\fml]{\fmlb}\) if there is a proof of \(\fmlb\) from \(\fml\) in this calculus.

\begin{theoremE}[\fogls Soundness][normal]\label{thm:foglssound}
    The \fogls proof calculus is sound.
\end{theoremE}

\begin{proofE}
    Soundness for most of the axioms and rules is standard.
    For \irref{existsaxiom} this follows by \apprefexp{lem:substgl}{appendixfreeandbound}.
    Note that \irref{existsaxiom} does not need a side condition due to the definition of the substitution \(\fmlreplacevarby[\fml]{\ivar}{\term}\).
    Soundness of the monotonicity rule \irref{mon} is immediate from \Cref{prop:gamemonotone}.

    For axiom \irref{univ} take some state \(\fostate\in\glfden{\fmlb\land\lpossible{\gamesymbat{\ivarseq}{\termseq}}\fml}\).
    By definition of the semantics of atomic games there is some \(\pstatepre\in\fonstrint{\gamesymb}(\ltden[\fostate]{\termseq})\)
    such that
    \(\replpstate[\fostate]{\pstatepreatvar{\ivarseq}}\subseteq\glfden{\fml}\).
    Since \(\freevars{\fmlb}\subseteq\varcomp{\ivarseq}\)
    also \(\strestrvar{\replpstate[\fostate]{\pstatepreatvar{\ivarseq}}}{\freevars{\fmlb}}=\strestrvar{\fostate}{\freevars{\fmlb}}\).
    It follows from \(\fostate\in\glfden{\fmlb}\) by \apprefexp{lem:coincidencegl}{appendixfreeandbound} that also \(\replpstate[\fostate]{\pstatepreatvar{\ivarseq}}\in\glfden{\fml\land\fmlb}\).
    Consequently, \(\fostate\in\glfden{\lpossible{\gamesymbat{\ivarseq}{\termseq}}(\fmlb\land\fml)}\) as required.

    The soundness of game axioms \irref{dual}, \irref{composition}, \irref{test}, \irref{choice} and \irref{loop}, as well as the induction rule \irref{diaind} is standard \cite{DBLP:journals/tocl/Platzer15}.
\end{proofE}

\subsection{Proof Calculus for \FirstorderMuCalculus}\label{sec:mucalculuscalculus}
The proof calculus for \folmus extends the basic first-order modal calculus with the fixpoint axiom and induction rule:

\begin{center}
    \begin{calculuscollection}
        \begin{calculus}
            \cinferenceRule[fp|$\mu$]{fixpoint axiom}
            {
                \linferenceRule[impl]
                {\fmlfreesetto{\pvar}{\flfp{\pvar}{\fml}}}
                {\flfp{\pvar}{\fml}}
            }{}
        \end{calculus}
        \qquad
        \begin{calculus}
            \cinferenceRule[muind|I${}_\mu$]{fixpoint induction rule}
            {
                \linferenceRule[sequent]
                {\fmlfreesetto{\pvar}{\fmlb}\limply\fmlb}
                {\flfp{\pvar}{\fml}\limply\fmlb}
            }{}
        \end{calculus}
    \end{calculuscollection}
\end{center}

\noindent
Write \(\provfolmu[\fml]{\fmlb}\) if there is a proof of \(\fmlb\) from \(\fml\) in this calculus.
Soundness is similar to  \Cref{thm:foglssound}.

\begin{theoremE}[\folmus Soundness][]
    The \folmus proof calculus is sound.
\end{theoremE}
\begin{proofE}
    Soundness for most axioms and rules is simple.
    Axiom~\irref{existsaxiom} is sound by \apprefexp{lem:substlmu}{appendixfreeandbound}.
    For axiom \irref{univ} take some state \(\fostate\in\lmden{\intp}{\fmlb\land\lpossible{\gamesymbat{\ivarseq}{\termseq}}\fml}\).
    By definition of the semantics of atomic games there is some \(\pstatepre\in\fonstrint{\gamesymb}(\ltden[\fostate]{\termseq})\)
    such that
    \(\replpstate[\fostate]{\pstatepreatvar{\ivarseq}}\subseteq\lmden{\intp}{\fml}\).
    Since \(\freevars{\fmlb}\subseteq\varcomp{\ivarseq}\)
    also \(\strestrvar{\replpstate[\fostate]{\pstatepreatvar{\ivarseq}}}{\freevars{\fmlb}}=\strestrvar{\fostate}{\freevars{\fmlb}}\).
    It follows from \(\fostate\in\lmden{\intp}{\fmlb}\) by \apprefexp{lem:coincidencelmu}{appendixfreeandbound} that also \(\replpstate[\fostate]{\pstatepreatvar{\ivarseq}}\in\lmden{\intp}{\fml\land\fmlb}\).
    Consequently, \(\fostate\in\lmden{\intp}{\lpossible{\gamesymbat{\ivarseq}{\termseq}}(\fmlb\land\fml)}\) as required.
\end{proofE}

\subsection{Contextual Induction, Monotonicity, Deduction Theorem}
\label{sec:univaxiomexpl}

In \fogls, the \contextaxiomname axiom \irref{univ} can be generalized  from instances of atomic games \(\gamesymbat{\ivarseq}{\termbseq}\) to arbitrary (composite) games $\game$:

\begin{center}
    \begin{calculus}
        \cinferenceRule[univext|\univaxsymbol${}^+$]{extended universality axiom}
        {
            \linferenceRule[impl]
            {(\fmlb\land \lpossible{\game}\fml)}
            {\lpossible{\game}(\fmlb\land\fml)}\quad
        }{$\freevars{\fmlb}\cap{\boundvars{\game}}=\emptyset$}
    \end{calculus}
\end{center}

\begin{lemmaE}[][all end] \label{lem:deriveunivext}
    The axiom \irref{univext} is derivable in \fogls from \irref{univ}.
\end{lemmaE}
\noindent The \excontextaxiomname axiom \irref{univext} is derivable in \fogls from \irref{univ}. (See \Cref{app:conextaxproofs}).
There are stronger versions \irref{monstrong}, \irref{diaindstrong} of the monotonicity rule \irref{mon} and the induction rule \irref{diaind} rule with context, respectively.

\begin{lemmaE}\label{lem:mcicderivable}
    The two rules \irref{monstrong} and \irref{diaindstrong} are derivable for $\freevars{\fmlc}\cap(\boundvars{\game})=\emptyset$

    \begin{center}
        \begin{calculuscollection}%
            \begin{calculus}%
                \cinferenceRule[monstrong|M${}_c$]{strong monotonicity rule}
                {
                    \linferenceRule[sequent]
                    {\fmlc\limply(\fml\limply\fmlb)}
                    {\fmlc\limply (\lpossible{\game}\fml\limply\lpossible{\game}\fmlb)}
                }{}
            \end{calculus}
            \qquad
            \begin{calculus}
                \cinferenceRule[diaindstrong|I${}_c$]{strong box induction axiom}
                {
                    \linferenceRule[sequent]
                    {\fmlc\limply((\fml\lor\lpossible{\game}\fmlb)\limply\fmlb)}
                    {\fmlc\limply(\lpossible{\garepeat{\game}}\fml\limply\fmlb)}
                }{}
            \end{calculus}
        \end{calculuscollection}
    \end{center}
\end{lemmaE}
For the proof see \Cref{app:conextaxproofs}.
The advantage of rules \irref{monstrong} and \irref{diaindstrong} is that they retain as much information as possible.
As a consequence a deduction theorem can be proved.

\begin{theoremE}[\(\fogls\) Deduction Theorem][normal] \label{thm:gldeduction}
    For  \fogls formulas \(\fmlc, \fmlb\) such that \(\freevars{\fmlc}\cap\boundvars{\fmlb}=\emptyset\):
     \[\fmlc\provfogl{\fmlb}\quad\mimply\quad\provfogl{\fmlc\limply\fmlb}\]
\end{theoremE}
\noindent The reverse implication holds by \irref{mp}.

\begin{proofE}
    The proof of the deduction theorem is as usual by induction on the length of the proof of \(\fml\provfogl{\fmlb}\) distinguishing on the last step.
    For axioms there is nothing to show and the case for \irref{mp} is straightforward.
    Suppose the last step is an application of \irref{mon} showing
    \(\fmlc\provfogl{}\lpossible{\gamesymbat{\ivarseq}{\termseq}}\fml\limply\lpossible{\gamesymbat{\ivarseq}{\termseq}}\fmlb\).
    By the induction hypothesis \(\provfogl{\fmlc\limply (\fml\limply\fmlb)}\) and hence \irref{monstrong} concludes
    \(\provfogl{\fmlc\limply(\lpossible{\gamesymbat{\ivarseq}{\termseq}}\fml\limply\lpossible{\gamesymbat{\ivarseq}{\termseq}}\fmlb})\).

    Suppose the last step is a use of \irref{diaind} deducing \(\fmlc\provfogl{\lpossible{\garepeat{\game}}\fml\limply\fmlb}\).
    Then by induction hypothesis \(\provfogl{\fmlc\limply((\fml\lor\lpossible{\game}\fmlb)\limply\fmlb)}\).
    Now \irref{diaindstrong} derives \(\provfogl{\fmlc\limply(\lpossible{\garepeat{\game}}\fml\limply\fmlb)}\).
\end{proofE}

The proof of \Cref{thm:gldeduction} did not use \irref{univ} directly, but only its derived rules \irref{monstrong} and \irref{diaindstrong}.
Conversely, \irref{univ} follows from the deduction theorem. Any instance of \irref{univ}
\[\provfogl{(\fmlb\land \lpossible{\gamesymbat{\ivarseq}{\termseq}}\fml)\limply\lpossible{\gamesymbat{\ivarseq}{\termseq}}(\fmlb\land\fml)}\]
can be derived with \Cref{thm:gldeduction}, by reducing it to the instance \(\fmlb\provfogl{(\lpossible{\gamesymbat{\ivarseq}{\termseq}}\fml)\limply\lpossible{\gamesymbat{\ivarseq}{\termseq}}(\fmlb\land\fml)}\).
With an application of \irref{mon}, this can be reduced to \(\fmlb\provfogl{\fml\limply\fmlb\land\fml}\), which is provable propositionally.
In other words, the two rules \irref{monstrong} and \irref{diaindstrong} are interderivable with axiom \irref{univ}.
So \irref{monstrong} and \irref{diaindstrong} can be used instead of \irref{univ} to axiomatize the same calculus for \fogls.

\section{\folmusort{} and \foglsort: Equiexpressiveness and Equivalence}\label{sec:equiexpressivenessandequivalence}

For this section fix a \gamesignature~\(\gsig\) including the \gamesymbol{} \(\gndassignsymb\).
While game logic and the modal \(\mu\)-calculus in their propositional versions as propositional game logic \pgls and the propositional \(\mu\)-calculus \plmus are \emph{not} equiexpressive \cite{DBLP:journals/mst/BerwangerGL07}, their first-order versions \fogls and \folmus \emph{are} equiexpressive \emph{and} equivalent.
This can, surprisingly, be shown by a reduction to the propositional case.
The propositional reduction of \firstordergamelogic can model \emph{sabotage games} \cite{DBLP:conf/lics/WafaP24}, which allow one player to observably affect future plays of the game by sabotaging the opponent.
The expressive power of sabotage is sufficient to model \plmus on the propositional level.
Definitions of the propositional logics \pgls and \plmus are in the literature \cite{DBLP:conf/lics/WafaP24}.

The reduction to the propositional case is shown in \Cref{sec:propinterpretation}.
Equivalence and equiexpressiveness are shown in \Cref{sec:equiexpressivenss} and \Cref{sec:equivalence}, respectively.

\subsection{Propositional Interpretation}\label{sec:propinterpretation}

\subsubsection{Syntactic Propositional Abstraction}\label{sec:propositionalsyntax}

For any \emph{atomic} \(\gsig\)-formula \(\atfml\) pick a fresh proposition symbol \(\atfmltoprop{\atfml}\).
For any \emph{atomic} \fogls-game \(\game\) (i.e. \(\gamesymbat[\gamesymb]{\ivarseq}{\termseq}\)) pick a fresh \emph{propositional} game symbol~\(\gamesymbatprop[{\gamesymb}]{\ivarseq}{\termseq}\).
Let \(\gsigtoprop\) be the \emph{propositional} neighbourhood signature consisting of all the fresh proposition symbols \(\atfmltoprop{\atfml}\) and the propositional game symbols \({\gamesymbatprop[{\gamesymb}]{\ivarseq}{\termseq}}\).
For \fogls-formulas \(\fml\) and games \(\game\) let \(\fglfmltoprop{\fml}\) be the \fogls[\gsigtoprop] formula and \(\fglgtoprop{\game}\) the propositional game obtained by replacing all atomic formulas \(\atfml\) by \(\atfmltoprop{\atfml}\)
and all atomic games \(\gamesymbat[\gamesymb]{\ivarseq}{\termseq}\) by \({\gamesymbatprop[{\gamesymb}]{\ivarseq}{\termseq}}\).
Similarly, for an \folmus-formula \(\fml\) let \(\flmfmltoprop{\fml}\) be the \folmus[\gsigtoprop] formula obtained from \(\fml\) by replacing all atomic subformulas \(\atfml\) by~\(\atfmltoprop{\atfml}\) and all atomic games \(\gamesymbat[\gamesymb]{\ivarseq}{\termseq}\) by \({\gamesymbatprop[{\gamesymb}]{\ivarseq}{\termseq}}\).

Conversely, for any propositional \fogls[\gsigtoprop] formula \(\fml\) or game \(\game\) let \(\pglfmltofo{\fml}\) and \(\pglgtofo{\game}\) be obtained by replacing all \(\atfmltoprop{\atfml}\) by \(\atfml\) and all appearances \({\gamesymbatprop[{\gamesymb}]{\ivarseq}{\termseq}}\) by atomic games \(\gamesymbat[\gamesymb]{\ivarseq}{\termseq}\).
Similarly, for any propositional \folmus[\gsigtoprop] formula \(\fml\) let \(\plmfmltofo{\fml}\) be the \(\folmus\) formula obtained by replacing all \(\atfmltoprop{\atfml}\) by \(\atfml\) and all symbols \({\gamesymbatprop[{\gamesymb}]{\ivarseq}{\termseq}}\) by atomic games \(\gamesymbat[\gamesymb]{\ivarseq}{\termseq}\).
The operations are syntactic inverses, so \(\pglfmltofo{\fglfmltoprop{\fml}}\equiv\fml\) and \(\fglfmltoprop{\pglfmltofo{\fmlb}}\equiv\fmlb\) for \fogls formulas and \pgls formulas \(\fmlb\).
Likewise, \(\plmfmltofo{\flmfmltoprop{\fml}}\equiv\fml\) and \(\fmlb\equiv\flmfmltoprop{\plmfmltofo{\fmlb}}\) for all \folmus formulas \(\fml\) and all \plmus formulas \(\fmlb\).

\subsubsection{Semantics of Propositional Abstraction}\label{sec:propositionalsemantics}

Corresponding to every \(\gsig\)-\fonstructure~\(\fonstr\) a propositional \(\gsigtoprop\)-neighbourhood structure~\(\fonstrtons\) is defined where the states of \(\fonstrtons\) are the elements of \(\fostates\) and the interpretation of proposition symbols \(\atfmltoprop{\atfml}\) and \gamesymbol{}s \(\gamesymbatprop[\gamesymb]{\ivarseq}{\termseq}\) are  
    \(
     \pnstrpintp{\fonstrtons}{\atfmltoprop{\atfml}}=\lmden{}{\atfml}
     \)
     and
    \(
      \pnstrtrintp{\fonstrtons}{\gamesymbatprop[{\gamesymb}]{\ivarseq}{\termseq}}=\glgden{\gamesymbat[{\gamesymb}]{\ivarseq}{\termseq}}
\).
This abstracts from the first-order elements of \fogls and \folmus to the propositional core.
The soundness of this reduction is clear:
\begin{lemmaE}[Propositional Abstraction]\label{lem:soundnessflat}
    Let \(\fonstr\) be an \(\gsig\)-\fonstructure, \(\fml\) an \folmus formula, \(\fmlb\) a \fogls formula and \(\game\) a \fogls game. Then
    \[\lmden[\fonstr]{\intp}{\fml} = \lmdenprop[\fonstrtons]{\intp}{\flmfmltoprop{\fml}}
        \qquad
        \glfden[\fonstr]{\fmlb} = \glfdenprop[\fonstrtons]{\fglfmltoprop{\fmlb}}
        \qquad
        \glgden[\fonstr]{\game} = \glgdenprop[\fonstrtons]{\fglgtoprop{\game}}
    \]
\end{lemmaE}
\noindent 
Hence \(\fotoprops\) is a semantics preserving translation with syntactic inverse \(\propstofo\), which shows that the two first-order logics and their propositional counterparts are equiexpressive \emph{over the corresponding \(\gsigtoprop\)-neighbourhood structure~\(\fonstrtons\)}.
This equivalence does \emph{not} hold over general structures.
The difficulty that originates with the high expressiveness of the first-order extensions is shifted from the syntactic side to the semantic side.

\subsubsection{Propositional Abstraction of Proofs}
\label{sec:propositionalproofs}

The proof calculi for \fogls and \folmus can also be abstracted to the propositional level, because they crucially add \emph{only axioms} (no rules) over the propositional proof calculus for propositional \pgls and \plmus respectively.
Unlike for rules, it is straightforward to translate axioms between equiexpressive logics.
The original proof calculus for propositional \pgls with respect to the symbols from \(\gsigtoprop\) consists of the translated versions of all rules and axioms, except for \irref{univ} and \irref{existsaxiom} and the equality axioms \cite{DBLP:conf/lics/WafaP24}.
The propositionally abstracted (\(\fotoprops\)) versions of these axioms are:

\begin{center}
    \begin{calculuscollection}
        \begin{calculus}
            \cinferenceRule[propexistsaxiom|${\exists^\flat}$]{propositional existential quantifier axiom}
            {
                \linferenceRule[impl]
                {\fgentopropp{\fmlreplacevarby{\ivar}{\term}}}
                {\lpossible{\gamesymbatprop[\gndassignsymb]{\ivarseq}{\termseq}}\fgentopropp{\fml}}
            }{}
        \end{calculus}
        \qquad
        \begin{calculus}
            \cinferenceRule[propuniv|\univaxsymbol${}^\flat$]{propositional universality axiom}
            {
                \linferenceRule[impl]
                {(\fgentoprop{\fmlb}\land \lpossible{\gamesymbatprop{\ivarseq}{\termseq}}\fgentoprop{\fml})
                }{
                    \lpossible{\gamesymbatprop{\ivarseq}{\termseq}}(\fgentoprop{\fmlb}\land\fgentoprop{\fml})~~~}
            }{{$\freevars{\fmlb}{\subseteq} \ivars{\setminus}{\ivarseq}$}}
        \end{calculus}%
    \end{calculuscollection}%
\end{center}%
These axioms align the propositional reduction fully with the propositional calculus.
Let \(\foaxiomsinprop\) be the set of all instances of the axioms \irref{propexistsaxiom} and \irref{propuniv} and the propositional \(\fotoprops\) abstractions of the axioms for equality.
Then \(\provpglplus{\fml}\) holds if there is a proof of~\(\fml\) in the propositional proof calculus for \(\pgls\) from the axioms in \(\foaxiomsinprop\).
The extended propositional calculus \(\plmupluss\) is defined analogously as the extension of \(\plmus\) with the axioms from \(\foaxiomsinprop\).
Because \emph{only axioms} are added, it is not hard to translate from first-order proofs to propositional proofs and vice versa.

\begin{propositionE}[][normal]
    The propositional reduction preserves proofs:
    \begin{enumerate}[wide,labelindent=0pt]
        \item
              (%
              \(\provfogl[{\fmlb}]{\fml}\)
              iff
              \(\provpglplusassumption{\fglfmltoprop{\fmlb}}{\fglfmltoprop{\fml}}\)%
              )
              and
              (%
              \(\provfogl[\pglfmltofo{\fmlb}]{\pglfmltofo{\fml}}\)
              iff
              \(\provpglplusassumption{\fmlb}{\fml}\)%
              )\label{propredongames}
        \item
              (\(\provfolmu[{\fmlb}]{\fml}\) iff \(\provplmuplusassumption{\flmfmltoprop{\fmlb}}{\flmfmltoprop{\fml}}\))
              and
              (\(\provfolmu[\plmfmltofo{\fmlb}]{\plmfmltofo{\fml}}\) iff \(\provplmuplusassumption{\fmlb}{\fml}\))\label{propredonmu}
    \end{enumerate}\label{prop:propprooflifting}
\end{propositionE}
\begin{proofE}
    The forward implication of the first equivalence of (\ref{propredongames}) is immediate, by translating the proof tree with \(\fotoprops\).
    Similarly, the backward implication of the second equivalence of (\ref{propredongames}) can be obtained by translating the proof tree with \(\propstofo\).
    The remaining directions of (\ref{propredongames}) follow as the translations \(\fotoprops\) and \(\propstofo\) are syntactic inverses.
    Item (\irref{propredonmu}) is shown analogously.
\end{proofE}

Consequently, the propositional reduction \(\fotoprops\) and expansion \(\propstofo\) form an equivalence between \fogls and its propositional counterpart \pgls and between \folmus and the propositional version~\plmus using the additional propositional axioms \(\foaxiomsinprop\) translated from \fogls and \folmus respectively.

\subsection{Equiexpressiveness}\label{sec:equiexpressivenss}

When considering \folmus \emph{as a logic}, the set of formulas is taken to be the set of all formulas of \firstordermucalculus without free \pvarname{s}.
This is important, since equiexpressiveness is only meaningful between logics with semantics that share the same \generalparamname and \localparamname parameter sets.
Restricting to formulas with \(\freevars{\fml}\subseteq \ivars\) does not restrict the generality, as free \pvarname{s} can equivalently be viewed as relation symbols in the signature~\(\gsig\).

The propositional reductions of \fogls and \folmus from \Cref{sec:propinterpretation} are the key to showing the equiexpressiveness of the two logics.
Since the propositional modal \(\mu\)-calculus is easily seen to be at least as expressive as propositional game logic \cite{DBLP:conf/focs/Parikh83}, the challenge lies in showing that \fogls can express everything that \folmus can.
This is proven by distinguishing two cases, assuming first that the domain of the structure is a singleton set.
In this case the expressive power of \folmus is limited by the fact that all dynamics are constant, so they are expressible in~\fogls.

\begin{lemmaE}\label{lem:soundnessoneel}
    There is a translation \(G_1: \folmus\to\fogls\), which is sound with respect to all \fonstructure{s} \(\fonstr\) containing only \emph{one} element.
\end{lemmaE}

\begin{proofE}
    Define the translation \(G_1\) as follows
    \begin{aligntable}[3]
        \begin{align*}
            \nextit{\gentranslation[G_1]{\pvar}}{\lfalse}
            \nextit{\gentranslationp[G_1]{\lnot\fml}}{\lnot\gentranslation[G_1]{\fml}}
            \nextit{\gentranslationp[G_1]{\lpossible{\gamesymbat{\ivarseq}{\termseq}}\fml}}{\lpossible{\gamesymbat{\ivarseq}{\termseq}}\gentranslation[G_1]{\fml}}
            \nextit{\gentranslation[G_1]{\atfml}}{\atfml}
            \nextit{\gentranslationp[G_1]{\fml\land\fmlb}}{\gentranslation[G_1]{\fml}\land\gentranslation[G_1]{\fmlb}}
            \lastit{\gentranslationp[G_1]{\flfp{\pvar}{\fml}}}{\gentranslation[G_1]{\fml}}
        \end{align*}%
    \end{aligntable}%
    Let \(\intp\) be the interpretation with \(\intp(\pvar)=\emptyset\) for all variables \(\pvar\in\pvars\).
    By induction on \(\fml\) show \(\lmden{\intp}{\fml}=\glfden{\gentranslation[G_1]{\fml}}\).
    The only interesting case is for fixpoint formulas \(\flfp{\pvar}{\fml}\).
    Since there is only one element, there are two cases \(\glfden{\gentranslation[G_1]{\fml}}=\emptyset\) or \(\glfden{\gentranslation[G_1]{\fml}} = \fonstrdom^\ivars\).
    In the former case \(\intreplaceby[\intp]{\pvar}{\glfden{\gentranslation[G_1]{\fml}}} =\intp\), and by induction hypothesis
    \(\lmden{\intp}{\fml}=\glfden{\gentranslation[G_1]{\fml}}.\)
    Hence, as it is the least fixpoint  \(\lmden{\intp}{\flfp{\pvar}{\fml}}\subseteq\glfden{\gentranslation[G_1]{\fml}}=\emptyset\).
    In the latter case \(\fonstrdom^\ivars = \glfden{\gentranslation[G_1]{\fml}} = \lmden{\intp}{\fml} = \lmden{\intp}{\flfp{\pvar}{\fml}}\) by monotonicity.
\end{proofE}

In the case that the structure contains at least two elements, the propositional reduction of \fogls can model the sabotage games of Game Logic with Sabotage (\pglss) \cite{DBLP:conf/lics/WafaP24}.
The additional expressive power of sabotage makes it possible to embed the reduction of \folmus to the propositional level into the reduction of \fogls.
To refer to two different elements in a structure fix two distinct constant symbols \(\cont,\conf\).

\begin{lemmaE}\label{lem:soundnesstwoel}
    Let \(\gsig\) be a \gamesignature containing \(\cont,\conf\).
    There is a translation \(G_2: \folmus\to\fogls\), which is sound with respect to all \(\fonstr\) with \(\fonstrint{\cont}\neq\fonstrint{\conf}\).
\end{lemmaE}

\begin{proofE}
    Propositional game logic \emph{with sabotage} (\pglss) and the propositional \(\mu\)-calculus are known to be equiexpressive by a nontrivial construction \cite{DBLP:conf/lics/WafaP24}.
    Hence, it suffices to show that propositional game logic with sabotage is equiexpressive with propositional game logic over propositional \(\gsigtoprop\)-neighbourhood structures~\(\fonstrtons\) coming from a \fonstructure \(\fonstr\).
    Since the only difference between propositional game logic and propositional game logic with sabotage is the introduction of sabotage, it suffices to show that sabotage games~\(\gasab{\propatgame}\) (for propositional game symbols \(\propatgame\)) and their effects can be modeled in propositional game logic over such structures.

    A sabotage game~\(\gasab{\propatgame}\) has no immediate effect, but the atomic game \(\propatgame\) is sabotaged for the opponent in the future.
    If the opponent tries to play \(\propatgame\) at some point afterwards, they lose prematurely, while the saboteur always skips \(\propatgame\).
    This rule remains in effect until the opponent sabotages the game again when playing the game \(\gdsab{\propatgame}\).

    To retain the state of sabotage, fix for every propositional game symbol \(\propatgame\) two fresh \ivarname{s} \(\ivarsab,\ivarsabb\).
    Then the sabotage game \(\gasab{\propatgame}\) can be modeled by the game \(\gassignprop{\ivarsab}{\cont};\gassignprop{\ivarsabb}{\cont}\), where \({\fglfmltopropp{\ivarsab=\cont}}\) indicates that \(\propatgame\) is sabotaged and \(\fglfmltopropp{\ivarsabb=\cont}\) indicates that \Playerone is the saboteur.
    Analogously \(\gdsab{\propatgame}\) can be modeled by \(\gassignprop{\ivarsab}{\cont};\gassignprop{\ivarsabb}{\conf}\) to indicate \Playertwo is the saboteur.
    Finally, atomic games \(\propatgame\) can be replaced by case distinctions depending on the state of sabotage:
    \[(\gtest{\fglfmltopropp{\ivarsab=\conf}};\propatgame)\gachoice\fglfmltopropp{\gtest{\ivarsab=\cont};(\gtest{\ivarsabb=\cont};\gdtest{\lfalse})\cup(\gtest{\ivarsabb=\conf};\gtest{\lfalse})}\]
    Here the game branches by \(\ivarsab\) depending on whether \(\propatgame\) is sabotaged and by \(\ivarsabb\) on the identity of the saboteur.

    These replacements capture exactly the semantics of sabotage \emph{over structures \(\fonstrtons\)} in states in which \(\ivarsab=\cont\) for all~\(\propatgame\), so that no atomic game is initially sabotaged.
    Combined with the equiexpressiveness of \pgls and \pglss, this provides a translation \(\plmutoglsym:\plmus\to\pgls\), which is sound over such structures and in such states.
    The game \(\gassignprop{\ivarsab[\bar{\propatgame}]}{\conf}\) can be used to initialize all occurring atomic games correctly, so that \(\gentranslation[G_2]{\fml}\synequiv\lpossible{\gassignprop{\ivarsab[\bar{\gamesymb}]}{\conf}}\pglfmltofoplmutoglflmfmltoprop{\fml}\) is the required sound translation by \Cref{lem:soundnessflat}.
\end{proofE}

Since singleton structures are definable in first-order logic, the two translations \(G_1\) and \(G_2\) can be combined to obtain the equiexpressiveness of \folmus and \fogls.

\begin{theoremE}[Fixpoints and Game Equiexpressiveness][normal]\label{thm:glmufoequiexp}
    Logics \folmus and \fogls are equiexpressive.
\end{theoremE}

\begin{proofE}
    There is a sound translation \(\pgltolmus:\pgls\to\plmus\) \cite{DBLP:conf/focs/Parikh83,DBLP:conf/lics/WafaP24}.
    With \Cref{lem:soundnessflat} this can be lifted to a sound translation \(F:\fogls\to\folmus\) defined by \(\fml^F\synequiv\plmfmltofo{{\pgltolmu{\fglfmltoprop{\fml}}}}\).

    For the converse translation assume, without loss of generality, that \(\cont,\conf\) are not part of the signature and instead view them as variables in~\(\ivars\).
    The translation \(G:\folmus\to\fogls\) with
    \[\fml^G\synequiv((\lforall{\ivar,\ivarb}\; \ivar=\ivarb)\limply\gentranslation[G_1]{\fml})\land \lforall{\cont,\conf}(\cont\neq\conf\limply\gentranslation[G_2]{\fml}).\]
    is sound by \Cref{lem:soundnesstwoel,lem:soundnessoneel}.
\end{proofE}

The equivalence of \folmus and \fogls also means that, despite their very different intuitions, least fixpoint logic \LFP \cite{DBLP:journals/bsl/DawarG02}, and \firstordergamelogic with only nondeterministic assignments are expressively equivalent (\apprefexp{thm:folmuandlfp}{sec:lfpapp}).
Another consequence of the equivalence is the collapse of the \pvarname hierarchy in \folmus.
This is in contrast to the propositional case, where this question was long open, until the  \pvarname{s} hierarchy was shown to be strict \cite{DBLP:journals/mst/BerwangerGL07}.

\begin{theoremE}[\folmus Variable Hierarchy][normal]
    Any \folmus formula is (provably) equivalent to a formula with two fixpoint variables.
\end{theoremE}

\begin{proofE}
    By \Cref{thm:glmufoequiexp} it suffices to show that any \fogls formula is equivalent to an \folmus formula with two fixpoint variables.
    This is possible as the propositional translation \(\pgltolmus\) \cite{DBLP:conf/lics/WafaP24} can be chosen to ensure that at most two fixpoint variables are used \cite[Section 6.4.2]{pauly01}.
\end{proofE}

\noindent With \apprefexp{thm:folmuandlfp}{sec:lfpapp} it follows that two relation symbols suffice for all fixpoint formulas in \LFP \cite{DBLP:journals/bsl/DawarG02}.

\subsection{Equivalence}\label{sec:equivalence}

The proof calculi for \fogls and \folmus are sufficient to syntactically prove that the translations between the two logics constitute an equivalence.
Again the difficulty is in going from \folmus to \fogls.
Just like the translation, the formal proof splits in two with the disjunction \(\provfogl{\lforall{\ivar,\ivarb}\; \ivar=\ivarb}\lor\lexists{\cont,\conf}\cont\neq\conf\).
The following two lemmas for local substitution are needed for the case of singleton structures:

\begin{lemmaE}\label{lem:gonesubsitution}
    \({\provfogl{\gentranslation[G_1]{\fml}\limply\gentranslationp[G_1]{\fmlfreesetto{\pvar}{\fmlb}}}}\)
    for all \folmus formulas \(\fml,\fmlb\), such that \(\pvar\) is only positive in~\(\fml\).
\end{lemmaE}

\begin{proofE}
    By a straightforward induction on \(\fml\).
\end{proofE}

\begin{lemmaE} \label{lem:gonesubsitutiontwo}
    Let \(\ivarseq\) contain all variables from \(\freevars{\fmlb}\cup\freevars{\fmlc}\):
    \[\provfogl{\lforall{\ivarseq}(\gentranslation[G_1]{\fmlb}\lbisubjunct\gentranslation[G_1]{\fmlc}) \limply (\gentranslationp[G_1]{\fmlfreesetto{\pvar}{\fmlb}}\lbisubjunct \gentranslationp[G_1]{\fmlfreesetto{\pvar}{\fmlc}})}.\]
\end{lemmaE}

\begin{proofE}
    By induction on \folmus formulas \(\fml\).
    The only interesting case is for formulas of the form \(\lpossible{\gamesymbat{\ivarbseq}{\termseq}}\tilde\fml\), which case follows with \irref{monstrong} from the induction hypothesis, since \(\freevars{\fmlb}\cup\freevars{\fmlc}\subseteq\ivarseq\).
\end{proofE}

Now it can be shown that the two logics for games and for fixpoints are deductively equivalent.
Despite being different in the mode of expression and contrary to the propositional case, \fogls and \folmus are interchangeable when viewed as logics.

\begin{theoremE}[Fixpoint and Game Equivalence][normal]\label{thm:foequiv}
    The logics \folmus and \fogls are equivalent.
\end{theoremE}

\begin{proofE}
    Let \({F:\fogls\to\folmus}\) and \({G:\folmus\to\fogls}\) be the equi\-expressiveness translations from the proof of \Cref{thm:glmufoequiexp}.
    To verify \iref{prop:logicequivalencethereonly} of \Cref{def:logicequivalent} for \(F\) show \({\provfolmu{\gentranslation[F]{\fml}}}\) if \({\provfogl{\fml}}\).
    By \Cref{prop:propprooflifting}, \({\provfogl\fml}\) implies \({\provpglplus{\fglfmltoprop{\fml}}}\) and from the equivalence property of the propositional translation \(\pgltolmus\) \cite[Theorem 6.7]{DBLP:conf/lics/WafaP24}
    it follows that \({\provplmuplus{\pgltolmu{\fglfmltoprop{\fml}}}}\).
    It follows with \Cref{prop:propprooflifting} that \({\provfolmu{\gentranslation[F]{\fml}}}\) as \(\gentranslation[F]{\fml}\synequiv\plmfmltofo{{\pgltolmu{\fglfmltoprop{\fml}}}}\).

    To show \iref{prop:logicequivalencethereonly} of \Cref{def:logicequivalent} for \(G\), assume \(\provfolmu{\fml}\).
    The two conjuncts of \(\gentranslation[G]{\fml}\) are shown separately.

    \begin{caselist}[- Show ]
        \case{\({\provfogl{(\lforall{\ivar,\ivarb}\; \ivar=\ivarb)\limply\gentranslation[G_1]{\fml}}}\)}
        By the deduction theorem (\Cref{thm:gldeduction}) it suffices to show \({\lforall{\ivar,\ivarb}\; \ivar=\ivarb\provfogl{\gentranslation[G_1]{\fml}}}\).
        Proceed by induction on the length of the proof witnessing \({\provfolmu{\fml}}\) distinguishing on the last step.
        Most cases are straightforward and of interest are only the cases where the last step in the \folmus proof is an instance of \irref{fp} or \irref{muind}.
        For \irref{fp} it suffices to derive \[{\lforall{\ivar,\ivarb}\; \ivar=\ivarb\provfogl{\gentranslation[G_1]{\fml}\lbisubjunct\gentranslationp[G_1]{\fmlfreesetto{\pvar}{\fml}}}}\]
        where \(\pvar\) appears only positively in \(\fml\).
        The \(\limply\) implication derives by \Cref{lem:gonesubsitution}.
        For \(\leftarrow\) first-order reasoning derives
        \(\lforall{\ivar,\ivarb}\; \ivar=\ivarb\provfogl{\lforall{\ivarcseq}\gentranslation[G_1]{\fml}\lor\lforall{\ivarcseq}\lnot\gentranslation[G_1]{\fml}}\) where \(\ivarcseq\) contains all variables free in \(\fml\) , since the domain is necessarily a singleton.
        Hence, it suffices to show the two disjuncts
        \({\provfogl{(\lforall{\ivarcseq}\gentranslation[G_1]{\fml})\limply(\gentranslationp[G_1]{\fmlfreesetto{\pvar}{\fml}}\limply\gentranslation[G_1]{\fml})}}\)
        and \({\provfogl{(\lforall{\ivarcseq}\lnot\gentranslation[G_1]{\fml})\limply(\gentranslationp[G_1]{\fmlfreesetto{\pvar}{\fml}}\limply\gentranslation[G_1]{\fml})}}\).
        The former derives from \irref{existsaxiom} with propositional reasoning as \(\fmlreplacevarby[\fml]{\ivarcseq}{\ivarcseq}=\fml\).
        The latter reduces by \Cref{lem:gonesubsitutiontwo} to \({\provfogl{\gentranslationp[G_1]{\fmlfreesetto{\pvar}{\lfalse}}\limply\gentranslation[G_1]{\fml}}}\).
        This is provable as \(\gentranslationp[G_1]{\fmlfreesetto{\pvar}{\lfalse}}\synequiv\gentranslation[G_1]{\fml}\) by definition of \(G_1\).

        For instances of \irref{muind} by induction hypothesis assuming that \({\lforall{\ivar,\ivarb}\; \ivar=\ivarb\provfogl{\gentranslationp[G_1]{\fmlfreesetto{\pvar}{\fmlb}}\limply\gentranslation[G_1]{\fmlb}}}\)
        and that \(\pvar\) appears only positively in \(\fml\), it needs to be shown that \[\lforall{\ivar,\ivarb}\; \ivar=\ivarb\provfogl{\gentranslationp[G_1]{\flfp{\pvar}{\fml}}\limply\gentranslation[G_1]{\fmlb}}.\]
        This follows as \({\provfogl{\gentranslation[G_1]{\fml}\limply\gentranslationp[G_1]{\fmlfreesetto{\pvar}{\fmlb}}}}\) holds by \Cref{lem:gonesubsitution}.

        \case{\(\provfogl{\lforall{\cont,\conf}(\cont\neq\conf\limply\gentranslation[G_2]{\fml})}\)}
        By \Cref{thm:gldeduction} it suffices to show \(\provfogl[\cont\neq\conf]{\gentranslation[G_2]{\fml}}\).
        By \Cref{prop:propprooflifting} observe \(\provplmuplus{\flmfmltoprop{\fml}}\).
        Then \(\provpglplusassumption{\plmutogl{\mathcal{S}_\Gamma}}{\plmutogl{\flmfmltoprop{\fml}}}\) by the equivalence of \pgls and \plmus on the propositional level \cite{DBLP:conf/lics/WafaP24} modulo sabotage, where \(\mathcal{S}_\Gamma\) is the set of sabotage \emph{axioms} \cite{DBLP:conf/lics/WafaP24}.
        It follows by \Cref{prop:propprooflifting} that \(\provfogl[\pglfmltofo{{\plmutogl{\mathcal{S}_\Gamma}}}]{\gentranslation[G_2]{\fml}}\).
        By deriving the sabotage axioms formally (as in \cite[Theorem 6.9]{DBLP:conf/lics/WafaP24}) it follows that \(\cont\neq\conf\provfogl{\pglfmltofo{{\plmutogl{\mathcal{S}_\Gamma}}}}\), so that the desired implication is provable in \fogls.
    \end{caselist}

    Property \Cref{prop:logicequivalencethereandback} of \Cref{def:logicequivalent} follows similarly by reduction to the propositional case.
\end{proofE}

\section{Differential Equations As Fixpoints}\label{sec:diffequationsasfixpoints}

\DifferentialGameLogic  (\dGL) is a logic for reasoning about adversarial hybrid games and is applied to the verification of adversarial cyber-physical systems \cite{DBLP:journals/tocl/Platzer15}.
On a high-level, \dGL is \firstordergamelogic interpreted over the real numbers \(\reals\) with nondeterministic assignment, extending the definition of games with \emph{atomic} differential equation actions described by an ordinary differential equation \(\gode{\ivarseq}{\odefof}\).
Player \Playerone can choose to move to any state, which is reachable along the solution curve of the ODE.
While these actions make it possible to express properties of hybrid games, they add new challenges to the deductive system.

The insight here is that these continuous processes can be viewed equivalently as discrete fixpoints, which is possible via a new recursive understanding of ODE reachability.
This understanding differs from numerical approaches to treat ODEs via discrete approximations in important ways.
Those methods compute approximations in small time-steps (forwards or backwards) to keep the (accumulated) error small.
This complicates the correct use of algorithms like Euler's method, which require careful topological arguments to handle approximation errors and consequently lead to complex axiomatizations \cite{DBLP:conf/lics/Platzer12b}.
The presented approach is \emph{global} and describes ODE reachability via recursive satisfaction of simple second-order Taylor bounds.
This simplifies the discrete characterization of ODEs as no consideration of error accumulation is needed.
Making use of the equivalence from \Cref{sec:equiexpressivenessandequivalence} and the fixpoints of the \firstordermucalculus, the fixpoint understanding of ODEs makes it possible to axiomatize them in \dGL.
This section shows that \dGL is an instance of \firstordergamelogic in \Cref{sec:dglasfogl}, describes the fixpoint understanding of continuous evolution in \Cref{sec:fpaxiomatization} and presents an axiomatization in \Cref{sec:axiomatizationdgl}.

\subsection{\DifferentialGameLogic as a \FirstorderGameLogic}\label{sec:dglasfogl}

The syntax and semantics of \differentialgamelogic are first explained intuitively before \Cref{sec:deinfi} shows how \differentialgamelogic and \differentialmucalculus are formally defined as instances of first-order game logic and \(\mu\)-calculus, respectively.

To model continuous evolutions in first-order logic interpreted over the real numbers, the syntax will allow actions of the form \(\stdode\), where \(\odevarsx\) is a sequence of \(\ell\) variables and \(\odef\) is an \(\ell\)-sequence of polynomials, which semantically describes a game in which one player can evolve the values of \(\odevarsx\) according to the vector field described by \(\odef\), as long as the region defined by the \emph{evolution domain constraint} \(\evdfml\) is not left.
Then \(\lnecessary{\stdode}\fml\) intuitively means that \(\fml\) holds for the entire evolution along the vector field \(F\) within \(\evdfml\).

\subsubsection{Continuous Reachability}

A continuous function \(\gamma: [a,b]\to\reals^\ell\) is an \emph{integral curve} of the vector field $F:\reals^{\ell}\to \reals^\ell$, if $\gamma$ is differentiable on $(a,b)$ and satisfies ${\gamma}'(t)= F(\gamma(t))$ for all $t\in (a,b)$.
For readability the notation $\gamma_s =\gamma(s)$ is used synonymously.
A point $y$ is \emph{$t$-reachable} from $x$ along $F$ in \(C\subseteq\reals^\ell\), written \(\contreachin[t]{F}{x}{y}{C}\), iff there is an integral curve $\gamma:[0,t] \to C$ of $x'=F(x,t)$ such that $\gamma_0 = x$ and $\gamma_t =y$.
The \emph{reachability relation} is the set of all triples \((x,y,t)\) such that \(y\) is reachable from \(x\) along \(F\) in \(C\) in time \(t\):
\[\reachrelofin{F}{C} = \{(x,y,t)\in\reals^\ell\times\reals^\ell\times[0,\infty) : \contreachin[t]{F}{x}{y}{C}\}\]

\subsubsection{Polynomial Differential Equations}

In order to include differential equations as \gamesymbol{s} in \fogls, a syntactic notion of vector fields is presented here.
The first-order signature of rings consists of constant symbols \(0,1\), binary function symbols \(+,\cdot\) and unary function symbol \(-\).
Let \(\odef\) range over sequences of terms in this signature and write \(\odefof\) to specifically associate the \(\ell\)-sequence of variables \(\odevarsx\in\ivars\) to the \(\ell\)-sequence \(\odef\).
The terms of \(\odef\) in \(\odefof\) are restricted to only mentioning variables from \(\odevarsx\).
In the following such \(\odefof\) are called \(\ell\)-dimensional \emph{syntactic vector fields}.
Because \(\odefof\) mentions only variables of \(\odevarsx\) the function \(F(x)=\ltden[\streplaceby{\fostate}{\odevarsx}{x}]{\odefof}\) does not depend on the choice of state \(\fostate\). In the following write \(\odefinterp\) for this function \(F\) interpreting the syntactic vector field \(\odefof\).

\subsubsection{Differential Equation Modalities in \FirstorderGameLogic and \FirstorderMuCalculus}\label{sec:deinfi}

Formally \emph{\differentialgamelogic} (\dGL) is viewed here as an interpreted version of \firstordergamelogic.
The \gamesignature \(\dglsig\) of \dGL is the first-order signature of rings with additional \klary{\ell}{\ell} \gamesymbol{s} \(\dsymb\) for every syntactic vector field \(\odefof\) and every first-order formula \(\evdfml\) in the signature of rings with only variables \(\odevarsx\) free.
\Differentialgamelogic is interpreted over the \fonstructure \(\rfnstr\) whose domain is the field of real numbers.
The first-order constant and function symbols \(0,1,+,\cdot,-\) are interpreted as usual.
For readability write \(\glfden[]{\evdfml}\) for the interpretation \(\{\fostate(\ivarseq) : \fostate\in\glfden[\rfnstr]{\evdfml}\}\), where \(\ivarseq\) is the sequence of free variables of \(\evdfml\) (for any/every \(\fostate\)).

The semantics of the action \(\dsymb\) for the syntactic vector field \(\odefof\) and the evolution domain constraint \(\evdfml\) is defined as
\[\fonstrint[\rfnstr]{\left(\dsymb\right)}(\fonstrel) = \{\{\fonstrelb\} : \contreachin{\odefinterp}{\fonstrel}{\fonstrelb}{\glfden[]{\evdfml}}\}.\]
Write \(\stdode\) for \({\gamesymbat[{\dsymb[{\odefof}]}]{\ivarseq}{\ivarseq}}\) so that
\[
    \glgden{\stdode}(\fostatessubset)
    =
    \{\fostate : \mexists{\fonstrel}\; \contreachin{\odefinterp}{\strestrvar{\fostate}{\odevarsx}}{\strestrvar{\fostateb}{\odevarsx}}{\glfden[]{\evdfml}}\mand \streplaceby{\fostate}{\odevarsx}{\fonstrel}\in\fostatessubset\}
\]
The formulation here is slightly different from the original formulation of \dGL \cite{DBLP:journals/tocl/Platzer15}, which takes \(\stdode\) games as primitives.
The present formulation fits neatly into the \firstordergamelogic framework and does not in fact add any generality, since the general modalities \(\lpossible{\gamesymbat[\dsymb]{\ivarseq}{\termseq}}\) can be expressed with deterministic assignment and modalities of the form \(\stdode\).
Similarly the restriction that the syntactic vector field \(\odefof\) and the evolution domain constraint \(\evdfml\) mention only variables from \(\odevarsx\) is irrelevant, as any additional variable \(\ivarb\) can be made available by adding \(\ivarb'=0\) to the differential equation.
Instead of \(\stdodewith{\ltrue}\) abbreviate \(\stdodewo\) for the trivial evolution domain constraint \(\ltrue\).
The \emph{\differentialmucalculus} (\dLmu) is the interpreted version of the \firstordermucalculus in the signature \(\dglsig\) interpreted over \(\rfnstr\) and shares the interpretation of the continuous evolution symbol \(\dsymb\) with \dGL.

\subsubsection{Non-autonomous Differential Equations}

The definition of continuous game actions (as the original definition of \differentialgamelogic \cite{DBLP:journals/tocl/Platzer15}) is restricted to \emph{autonomous} differential equations.
However by adding a \emph{time} variable \(\timevar\) to measure the duration of the evolution, non-autonomous differential equations can be handled easily by adding \(\timevar'{=}1\) as in \(\gode{\odevarsx}{\odefof},\gode{\timevar}{1}\).
Since this will often be useful, the notation \(\stdodet\) is viewed as an abbreviation for the composition \(\gassign{\timevar}{0}\gcom\gode{\odevarsx}{\odefof},\gode{\timevar}{1}\evdsep\evdfml\), which resets the special time variable \(\timevar\) to measure the duration of evolution.

\subsection{Fixpoint Description of ODEs} \label{sec:fpaxiomatization}

\begin{figure}[!htbp]
\providecommand{\ivr}{Q}%
\providecommand{\genDE}[1]{F(#1)}%
\providecommand{\I}{\dLint[state=x]}%
\providecommand{\It}{\dLint[state=y]}%
\providecommand{\Iz}{\dLint[state=u]}%
\newcommand{\arccol}{semblue}
\newcommand{\shadecol}{vblue!10}
\newcommand{\shadecolb}{vblue!30}
\newcommand{\reccol}{black!50}
\newcommand{\smallcoll}{vgray}
\providecommand{\rname}{t}%
\begin{tikzpicture}[xscale=2.2,scale=1.5]
  \tikzstyle{mode switch}=[gray,thin,dotted]
  \tikzstyle{smallbound}=[\smallcoll,thick,dashed]
  \useasboundingbox (-0.2,-0.5) rectangle (2.6,1.5);
  \def\conelen{-0.2}
  \def\r{1.8}%
  
  \begin{scope}
    \draw[->] (-0.1,0) -- (2.1,0) node[right] {time} coordinate(T axis);
    \draw[->] (0,-0.1) -- (0,1.5) node[above] {$x$} coordinate(x axis);
  \end{scope}

  \begin{scope}
    \clip (-0.1,0.6) arc[start angle=160, end angle=400,x radius=0.95cm, y radius=0.35cm] -- +(0,2) -- +(-2,0) -- cycle;
    \fill[fill=\shadecol,domain=-1:\conelen,smooth,xshift=1.2cm,yshift=-0.1cm] plot (\x,{1.5-0.3*exp(-1.8*\x) - 0.3*exp(-1.4*-1) + 0.3*exp(-1.8*-1)}) -| (-0.2,0);
  \end{scope}
  
  \begin{scope}
    \clip (-0.1,0.6) arc[start angle=160, end angle=400,x radius=0.95cm, y radius=0.4cm]
    .. controls +(-0.4,1.2) and +(0.5,-0.3) .. (-0.1,0.6);
    \fill[fill=white!10,domain=-1:-0.2,smooth,xshift=1.2cm,yshift=-0.1cm] plot (\x,{1.5-0.3*exp(-1*\x) - 0.3*exp(-1.4*-1) + 0.3*exp(-1*-1)}) -| (0,0);
  \end{scope}
  
  \begin{scope}
    \clip (1,0.5) rectangle (1.8,1.6);
    \fill[fill=\shadecol,domain=\conelen:0.6,smooth,xshift=1.2cm,yshift=-0.1cm] plot (\x,{1.5-0.3*exp(-2.2*\x) - 0.3*exp(-1.4*-0.2) + 0.3*exp(-2.2*-0.2)}) -| (0.6,0);
  \end{scope}

  \draw[smallbound,thick,domain=\conelen:0.25,smooth,xshift=1.2cm,yshift=-0.24cm] plot (\x,{1.5-0.2*exp(-2.8*\x) - 0.3*exp(-1.4*-0.2) + 0.3*exp(-2.2*-0.2)});
  \draw[smallbound,thick,domain=\conelen:0.25,smooth,xshift=1.2cm,yshift=0.28cm] plot (\x,{1.5-0.5*exp(-0.3*\x) - 0.5*exp(-1.4*-0.2) + 0.4*exp(-0.3*-0.2)});
  \begin{scope}
    \fill[\shadecolb] 
      plot[domain=\conelen:0.25,smooth,xshift=1.2cm,yshift=-0.24cm] 
      (\x,{1.5-0.2*exp(-2.8*\x) - 0.3*exp(-1.4*-0.2) + 0.3*exp(-2.2*-0.2)})
      -- plot[domain=0.25:\conelen,smooth,xshift=1.2cm,yshift=0.28cm] 
      (\x,{1.5-0.5*exp(-0.3*\x) - 0.5*exp(-1.4*-0.2) + 0.4*exp(-0.3*-0.2)}) 
      -- cycle;
  \end{scope}

  \draw[smallbound,domain=\conelen:0.15,smooth,xshift=1.65cm,yshift=-0.17cm] plot (\x,{1.5-0.12*exp(-2.8*\x) - 0.3*exp(-1.4*-0.2) + 0.3*exp(-2.2*-0.2)});
  \draw[smallbound,domain=\conelen:0.15,smooth,xshift=1.65cm,yshift=0.45cm] plot (\x,{1.5-0.5*exp(-0.3*\x) - 0.5*exp(-1.4*-0.2) + 0.4*exp(-0.3*-0.2)});
  \begin{scope}
    \fill[\shadecolb] 
      plot[domain=\conelen:0.15,smooth,xshift=1.65cm,yshift=-0.17cm] 
      (\x,{1.5-0.12*exp(-2.8*\x) - 0.3*exp(-1.4*-0.2) + 0.3*exp(-2.2*-0.2)})
      -- plot[domain=0.15:\conelen,smooth,xshift=1.65cm,yshift=0.45cm] 
      (\x,{1.5-0.5*exp(-0.3*\x) - 0.5*exp(-1.4*-0.2) + 0.4*exp(-0.3*-0.2)}) 
      -- cycle;
  \end{scope}

  \begin{scope}
    \clip (1,0.5) rectangle (1.8,1.6);
    \fill[fill=white,domain=\conelen:0.6,smooth,xshift=1.2cm,yshift=-0.1cm] plot (\x,{1.5-0.3*exp(-0.3*\x) - 0.3*exp(-1.4*-0.2) + 0.3*exp(-0.3*-0.2)}) -| (0.6,0);
  \end{scope}
  
  \draw[thick,domain=\conelen:0.6,smooth,xshift=1.2cm,yshift=-0.1cm] plot (\x,{1.5-0.3*exp(-2.2*\x) - 0.3*exp(-1.4*-0.2) + 0.3*exp(-2.2*-0.2)});
  \draw[,thick,domain=\conelen:0.6,smooth,xshift=1.2cm,yshift=-0.1cm] plot (\x,{1.5-0.3*exp(-0.3*\x) - 0.3*exp(-1.4*-0.2) + 0.3*exp(-0.3*-0.2)});

  \draw[thick,domain=-1:-0.2,smooth,xshift=1.2cm,yshift=-0.1cm] plot (\x,{1.5-0.3*exp(-1.8*\x) - 0.3*exp(-1.4*-1) + 0.3*exp(-1.8*-1)});
  \draw[thick,domain=-1:-0.2,smooth,xshift=1.2cm,yshift=-0.1cm] plot (\x,{1.5-0.3*exp(-1*\x) - 0.3*exp(-1.4*-1) + 0.3*exp(-1*-1)});
  
  \draw[smallbound,domain=-1:-0.6,smooth,xshift=1.2cm,yshift=0.08cm] plot (\x,{1.5-0.23*exp(-1.8*\x) - 0.3*exp(-1.4*-1) + 0.2*exp(-1.8*-1)});
  \draw[smallbound,domain=-1:-0.6,smooth,xshift=1.2cm,yshift=-0.1cm] plot (\x,{1.5-0.4*exp(-1*\x) - 0.3*exp(-1.4*-1) + 0.4*exp(-1*-1)}); 
  
  \begin{scope}
    \fill[\shadecolb]
    plot[domain=-1:-0.6,smooth,xshift=1.2cm,yshift=0.08cm]
    (\x,{1.5-0.23*exp(-1.8*\x) - 0.3*exp(-1.4*-1) + 0.2*exp(-1.8*-1)})
    -- plot[domain=-0.6:-1,smooth,xshift=1.2cm,yshift=-0.1cm]
    (\x,{1.5-0.4*exp(-1*\x) - 0.3*exp(-1.4*-1) + 0.4*exp(-1*-1)}) 
    -- cycle;
  \end{scope}
  
  \draw[smallbound,domain=0:0.4,smooth,xshift=0.6cm,yshift=0.52cm] plot (\x,{1.5-1.4*exp(-1*\x) - 0.3*exp(-1.4*0.45) + 0.4*exp(-1*0.45)}); 
  \draw[smallbound,domain=0:0.4,smooth,xshift=0.6cm,yshift=-0.53cm] plot (\x,{1.5-0.2*exp(-1.8*\x) - 0.3*exp(-1.4*0.45) + 0.2*exp(-1.8*0.45)});
  
  \begin{scope}
    \fill[\shadecolb]
    plot[domain=0:0.4,smooth,xshift=0.6cm,yshift=0.52cm]
    (\x,{1.5-1.4*exp(-1*\x) - 0.3*exp(-1.4*0.45) + 0.4*exp(-1*0.45)})
    -- plot[domain=0.4:0,smooth,xshift=0.6cm,yshift=-0.53cm]
    (\x,{1.5-0.2*exp(-1.8*\x) - 0.3*exp(-1.4*0.45) + 0.2*exp(-1.8*0.45)}) 
    -- cycle;
  \end{scope}

  \draw[mode switch] (0.2,0) node[below,black] {$0$} -- ++(0,1.5);
  \draw[mode switch] (\r,0) node[below,black] {$\rname$} -- ++(0,1.5);
  \draw[mode switch] (0.5*\r+0.1,0) node[below,black] {$\tfrac{\rname}{2}$} -- ++(0,1.5);
  \draw[mode switch] (0.25*\r+0.15,0) node[below,black] {\color{\smallcoll}\tiny$\tfrac{\rname}{4}$} -- ++(0,1.5);
  \draw[mode switch] (0.75*\r+0.1,0) node[below,black] {\color{\smallcoll}\tiny$\tfrac{3}{4}\rname$} -- ++(0,1.5);

  \draw[\arccol,thick,domain=-1:-0.2,smooth,xshift=1.2cm,yshift=-0.1cm] plot (\x,{1.5-0.3*exp(-1.4*\x)})  node[above=2pt] {~$\iget[state]{\Iz}$};
  \begin{scope}[xshift=0.2cm,yshift=-0.1cm]
    \draw[thick,domain=-1:-1.01,smooth,xshift=1cm] plot (\x,{1.5-0.3*exp(-1.4*\x)})  node[above left=-2pt] {$\iget[state]{\I}$};
  \end{scope}

  \begin{scope}[xshift=0.2cm,yshift=-0.1cm]
    \draw[\arccol,thick,domain=-0.2:0.055,smooth,xshift=1cm] plot (\x,{1.5-0.3*exp(-1.4*\x)});
    \draw[\arccol,thick,domain=0.055:0.6,smooth,xshift=1cm] plot (\x,{1.5-0.3*exp(-1.4*\x)}) node[right=-2pt,black] {$\iget[state]{\It}$};
  \end{scope}
  
\end{tikzpicture}%
    \Description{Illustration of recursive splitting.}
    \caption{Recursive splitting of differential equation evolution: The shaded areas illustrate the local growth bound from \Cref{lem:taylor}. If these bounds are satisfied recursively, \(y\) is reachable from \(x\) by \Cref{prop:ODEgfphelpernoq}.}
    \label{fig:dynamics-nabla}
\end{figure}

Reasoning about differential equations is difficult, as it involves syntactically reasoning about the non-computable reachability relation along the flow of a vector field.
There are incomplete approaches using invariance reasoning \cite{DBLP:conf/lics/PlatzerT18} or using classical numerical approximation methods \cite{DBLP:conf/lics/Platzer12b}.
In contrast to these, this section presents a \emph{complete} and \emph{global} formulation of differential equations as \emph{fixpoints}.\footnote{%
This fixpoint description is different from the contracting fixpoint description for \emph{local} existence of solutions to differential equations in the Picard-Lindel\"of theorem using the Banach fixpoint theorem.
It describes the reachability relation as a global fixpoint in the lattice-theoretical sense \cite{DBLP:journals/pjm/Tarski55}.}
This unifies the notion of discrete adversarial gameplay and continuous gameplay through fixpoints, by identifying a continuous dynamical system with a discrete dynamical system.

\subsubsection{Reachability Relation as Fixpoint}

The continuous reachability relation can be viewed as a fixpoint on the semantic level.
Behind this is the observation that a point \(y\) is reachable from point \(x\) along the flow of a differential equation iff there is a point \(u\) in the middle which is reachable from \(x\) and from which \(y\) is reachable.
The insight is that this recursive definition can be weakened \emph{locally} to require \(u\) to be within suitable bounds rather than being reachable from \(x\) and \(y\) being reachable from \(u\).
The purpose of these bounds is to ensure that if they are satisfiable recursively, then the points are reachable along the flow.
See \Cref{fig:dynamics-nabla} for an illustration.
The bounds are natural second-order growth restrictions, that follow from continuous reachability.

The next lemma recalls the second-order Taylor bounds on the flow of a continuous function:

\begin{lemmaE}\label{lem:taylor}
    If $F:\reals^\ell\to\reals^\ell$ is a continuously differentiable function
    and $\gamma:[0,t]\to \reals^\ell$ an integral curve of $F$, then
    \[\realnorm{\gamma_t-\gamma_0-t F(\gamma_0)}\leq \tfrac{t^2}{2} \supnorm[{[0,t]}]{\gamma''}\]
\end{lemmaE}
\noindent The notation \(\supnorm[\compactset]{f}=\sup_{x\in \compactset}\realnorm{f(x)}\) denotes the uniform norm for any function \(f:\reals^k\to\reals^\ell\) on the set \(\compactset\subseteq \reals^k\).
\begin{proofE}
    Since \(\gamma\) is an integral curve of $F$ and $F$ is continuously differentiable, the second derivative  \(\gamma_s''=DF(\gamma_s)F(\gamma_s)\) exists.
    Partial integration gives
    \(
    \int_0^t (t-s)\gamma_s''\ud s = - t\gamma'(0)+\int_0^t \gamma_s'\ud s = \gamma_t-\gamma_0- t\gamma_0'.
    \)
    Now estimate
    \begin{align*}
         & \realnorm{\gamma_t-\gamma_0- t\gamma_0'}
        \leq       \int_0^t \realnorm{(t-s) \gamma_s''}\ud s
        \leq  \supnorm[{[0,t]}]{\gamma''} \int_0^t (t-s)\ud s
        \leq  \tfrac{t^2}{2}\supnorm[{[0,t]}]{\gamma''}
        \qedhere
    \end{align*}
\end{proofE}
Consider an evolution domain constraint set \(\evdset\subseteq \reals^\ell\), and let \(\derbounds=(\derbound,\secondderbound)\in\reals^2\) be constant \emph{derivative bounds}.
Then \(\evdsetbound{\derbounds} = \{x\in\evdset : \realnorm{F(x)}\leq\derbound\mand\realnorm{DF(x)\cdot F(x)}\leq \secondderbound\}\) denotes the \emph{bounded evolution domain constraint}.
\Cref{lem:taylor} shows that every reachable triple \((x,y,t)\in\reachrelofin{F}{\evdsetbound{\derbounds}}\) satisfies the \emph{growth condition} \((x,y,t)\in\growthcond{F}{\derbounds}\) where
\[\growthcond{F}{\derbounds} = \{(x,y,t)\in\evdsetbound{\derbounds}\times \evdsetbound{\derbounds}\times [0,\infty) :  2\realnorm{y-x-tF(x)}\leq t^2 \secondderbound\}.\]
Hence if \((x,y,t)\notin \growthcond{F}{\derbounds}\) then \emph{no} integral curve with derivatives bounded by \(\derbounds\) witnessing \(\contreachin{F}{x}{y}{\evdset}\) can exist.
The exact continuous reachability relation can be obtained by taking the fixpoint of the recursive splitting process as follows.

\begin{theoremE}[Fixpoint Characterization][normal] \label{prop:ODEgfphelpernoq}
    Let $F:\reals^\ell\to\reals^\ell$ be continuously differentiable, \(\evdset\) a closed set.
    The \(\derbound\)-bounded continuous reachability relation is a greatest fixpoint
    \(\reachrelofin{F}{\evdsetbound{\derbounds}} = \mgfp{Z}{\splitop{F}{\derbounds}(Z)}\) of the \emph{splitting operation}
    \[\splitop{F}{\derbounds}(Z) = \left\{(x,y,t)\in \growthcond{F}{\derbounds} : \mexists{u} (x,u ,\tfrac{t}{2}), (u, y,\tfrac{t}{2}) \in Z\right\}\]
\end{theoremE}

\begin{proofE}
    \renewcommand{\reachrelofin}[2]{\mathcal{R}}
    For readability write \(\reachrelsymbol\) for \(\reachrelofin{F}{\evdsetbound{\derbounds}}\) and \(\splitsymbol{}{}(Z)=\splitop{F}{\derbounds}(Z)\).
    It will be shown that \(\reachrelsymbol\) is the greatest post-fixpoint of \(\splitsymbol{}{}\).

    \emph{\(\reachrelsymbol\) is a post-fixpoint} (\(\reachrelsymbol \subseteq \splitsymbol{}{}(\splitsymbol{}{})\)).
    Suppose that $\gamma$ is an integral curve of $F$ witnessing \((x,{y},t)\in\reachrelsymbol\).
    Then \(x,y\in\evdset\), \(\supnorm[{[0,t]}]{\gamma'}= \supnorm[{\gamma([0,t])}]{F}\leq \derbound\) and \(\supnorm[{[0,t]}]{\gamma''}= \supnorm[{\gamma([0,t])}]{(DF)F}\leq \secondderbound\) because \(\gamma\) takes only values in \(\evdsetbound{\derbounds}\).
    Hence $(x,{y},t)\in \growthcond{F}{\derbounds}$ by \Cref{lem:taylor}.
    Let $u=\gamma_{t/2}$ and note
    the restrictions of $\gamma$
    to the intervals $[0,\tfrac{t}{2}]$ and
    $[\tfrac{t}{2},t]$ witness that
    \((x,u,\tfrac{t}{2}),(u,{y},\tfrac{t}{2})\in \reachrelsymbol\).
    Hence, $(x,y,t)\in \splitsymbol{}{}(\reachrelsymbol)$.

    \emph{\(\reachrelsymbol\) contains every post-fixpoint}.
    Suppose \(A\) is a post-fixpoint of \(\splitsymbol{}{}\) such that \(A\subseteq \splitsymbol{}{}(A)\) and \((x,y,t)\in A\).
    It needs to be shown that \((x,y,t)\in\reachrelsymbol\).
    By recursion on \(m\) define \(x^m_i\in \reals^\ell\) for \(i\leq 2^m\) such that \((x^m_i,x^m_{i+1},t2^{-m})\in A\) as follows:
    For $m=0$ let $x^0_0 = x$, $x^0_1 = y$.
    For $m>0$ and $i< 2^{m}$
    pick $u$  such that
    $(x^{m-1}_{i}, u, t2^{-m}), (u,x^{m-1}_{i+1}, t2^{-m}) \in A$.
    This is possible by $(x^{m-1}_i,x^{m-1}_{i+1},t2^{-(m-1)})\in A=\splitsymbol{}{}(A)$.
    Set $x^{m}_{2i} = x^{m-1}_i$ and $x^{m}_{2i+1}=u$.
    Finally, let \(x^{m}_{2^{m}}=y\).

    \renewcommand{\dffnorm}{\secondderbound}
    Since \(A\subseteq \splitsymbol{}{}(A)\subseteq \growthcond{F}{\derbounds}\) this ensures that
    for all $m$ and all $1\leq i \leq 2^m$:
    \begin{align}
        x^m_i\in\evdsetbound{\derbounds}\qquad\text{and}\qquad|x^m_{i}-x^m_{i-1} -\splitilengthinv F(x^m_{i-1})| & \leq \splitilengthdoubinv\tfrac{\dffnorm}{2}\label{eq:splitbound}\tag{$\ast$}
    \end{align}
    where \(\splitilength=2^mt^{-1}\).
    For the sequence of piecewise constant functions \(\gamma^m : [0,t] \to \reals^\ell\) by
    \(\gamma^m_s = x^m_{\lfloor s\splitilength\rfloor}\)
    prove the following properties:
    \begin{enumerate}
        \item \(\gamma^m\xrightarrow{m\to\infty}\gamma$ uniformly to some $\gamma:[0,t]\to \reals^\ell\),\label{p1norat}
        \item \(\gamma(s)\in\evdsetbound{\derbounds}\) for \(s\in[0,t]\), \label{p15norat}
        \item \(\gamma\) is continuous\label{p2norat} and
        \item \(|\gamma^m_s -\gamma^m_0 -\int_0^s F(\gamma^m_r) \ud r|\to 0\)
              as \(m\to \infty\).\label{p3norat}
    \end{enumerate}
    These four properties suffice to show that $\gamma$ is an integral curve in $F$ witnessing $(x,y,t)\in \reachrelsymbol$.
    Indeed, observe $\gamma_t = \gamma^m_t=y$ and $\gamma_0= \gamma^m_0=x$ for all \(m\).
    By \Cref{lem:unifconvhelper} the sequence \(F\circ\gamma_m\) converges uniformly.
    Hence:
    \[\gamma_s -\gamma_0 \stackrel{\text{\pref{p3norat}}}{=}
        \lim_{m\to \infty}\int_0^s F(\gamma_r^m)\ud r
        \stackrel{}{=} \int_0^s F(\gamma_r)\ud r.\]
    Then \(\gamma'_s = F(\gamma_s)\) by the fundamental theorem of calculus (using \pref{p2norat}). Hence, \(\gamma\) witnesses \((x,y,t)\in\reachrelsymbol\).

    To verify \pref{p1norat} first observe
    \begin{align*}
        \realnorm{x^m_{i}-x^m_{i-1}}
         & \leq \realnorm{x^m_{i}-x^m_{i-1}-\splitilengthinv F(x^m_{i-1})}  + \realnorm{\splitilengthinv F(x^m_{i-1})}
        \stackrel{\eqref{eq:splitbound}}{\leq} \splitilengthdoubinv\tfrac{\dffnorm}{2}+ \splitilengthinv \derbound
    \end{align*}
    Now consider arbitrary ${m\geq i}$.
    By definition of $\gamma$ and the sequence $x$:
    \[\gamma^i_s = x^i_{\lfloor s \splitilength[i]\rfloor} = x^{i+1}_{2\lfloor s \splitilength[i]\rfloor}
        =\ldots = x^{m}_{2^{m-i}\lfloor s \splitilength[i]\rfloor}\]
    Using%
    \footnote{Observe $\lfloor 2^{m}\frac{s}{t}\rfloor = \max\{n\in\naturals: n2^{-m}\leq \tfrac{s}{t}\}$.
    Then $\lfloor 2^{m}\frac{s}{t}\rfloor\geq 2^{m-i}\lfloor 2^{i}\frac{s}{t}\rfloor$
    is clear from $2^{m-i}\lfloor 2^{i}\frac{s}{t}\rfloor \cdot 2^{-m}
            = 2^{-i}\lfloor 2^{i}\frac{s}{t}\rfloor\leq\tfrac{s}{t}$.\\
    For the second inequality let $a=\lfloor 2^{m}\frac{s}{t}\rfloor$
    and $b=\lfloor 2^{i}\frac{s}{t}\rfloor$.
    Then $2^{-m}a\leq \tfrac{s}{t}\leq (b+1)2^{-i}$.
    Hence $a-2^{m-i}b\leq (b+1)2^{m-i}-b2^{m-i}=2^{m-i}$}
    \(0\leq\lfloor s \splitilength[m]\rfloor
    - 2^{m-i}\lfloor s \splitilength[i]\rfloor\leq 2^{m-i}\)
    it follows that:
    \begin{align*}
        |\gamma^m_s-\gamma^i_s| & =
        |x^m_{\lfloor s \splitilength[m]\rfloor}-x^m_{2^{m-i}\lfloor s \splitilength[i]\rfloor}|
        \leq \sum^{\lfloor s \splitilength[m]\rfloor-1}_{i=2^{m-i}\lfloor s \splitilength[i]\rfloor} |x_{i+1}^m-x_i^m|
        \\
                                & \stackrel{\eqref{eq:splitbound}}{\leq} (\lfloor s \splitilength[m]\rfloor
        - 2^{m-i}\lfloor s \splitilength[i]\rfloor) (\splitilengthdoubinv\tfrac{\dffnorm}{2}+ \splitilengthinv \derbound)
        \\
                                & \leq 2^{m-i}\splitilengthinv(\splitilengthinv\tfrac{\dffnorm}{2}+ \derbound)
        =\tfrac{t}{2^i}(\tfrac{t}{2^m}\tfrac{\dffnorm}{2} +\derbound)
    \end{align*}
    This is arbitrarily small for large enough $i$.
    Hence, the sequence $\gamma^m$ is uniformly Cauchy, hence converges uniformly
    to some $\gamma:[0,t]\to \reals^\ell$.

    For \pref{p15norat} observe that because \(\gamma^m_s\in\evdsetbound{\derbounds}\) for all \(s\in[0,t]\) and all \(m\) also \(\gamma_s\) lies in \(\evdsetbound{\derbounds}\), since it is closed.

    For \pref{p2norat}, consider $0\leq r \leq s \leq t$.
    Note \(0\leq\lfloor s \splitilength\rfloor-\lfloor r \splitilength\rfloor \leq |s-r|\splitilength+1\).\footnote{Let $a = \lfloor 2^m\tfrac{s}{t}\rfloor$ and $b=\lfloor 2^m\tfrac{r}{t}\rfloor$. Then combine
        $a2^{-m}\leq \tfrac{s}{t}$ and $(b+1)2^{-m}\geq \tfrac{r}{t}$ to
        get $2^{-m}(a-b-1)\leq \tfrac{|r-s|}{t}$.}
    Then
    \begin{align*}
        |\gamma^m_s-\gamma^m_r| & =
        |x^m_{\lfloor s \splitilength\rfloor}-x^m_{\lfloor r \splitilength\rfloor}|
        \\&\leq \sum^{\lfloor s \splitilength\rfloor-1}_{i=\lfloor r \splitilength\rfloor} |x_{i+1}^m-x_i^m|
        \stackrel{\eqref{eq:splitbound}}{\leq} (\lfloor s \splitilength\rfloor
        - \lfloor r \splitilength\rfloor) (\splitilengthdoubinv\tfrac{\dffnorm}{2}+ \splitilengthinv \derbound)
        \\& \leq (|r-s|+\splitilengthinv)(\splitilengthinv\tfrac{\dffnorm}{2}+ \derbound)
    \end{align*}
    By letting \(m\to\infty\) it follows that \(\realnorm{\gamma_s-\gamma_r}\leq\realnorm{r-s}\derbound\).
    Hence, \(\gamma\) is Lipschitz continuous.

    It remains to prove \pref{p3norat}.
    Because $\gamma^m_r$ is piecewise constant, applying the triangle inequality yields:
    \begin{align*}
        |\gamma^m_s - \gamma^m_0 - \int_0^s F(\gamma^m_r) \ud r|
         & \leq
        \sum_{i=1}^{\lfloor s \splitilength\rfloor}
        \left |x^m_{i} - x^m_{i-1} - \splitilengthinv F(x_{i-1}^m)
        \right |
        \\
         & \quad
        +|\gamma^m_s-\gamma^m_{\lfloor s \splitilength\rfloor}-(s-\lfloor s \splitilength\rfloor\splitilengthinv) F(x_{\lfloor s \splitilength\rfloor}^m)| \\
         & \leq  s \splitilength\splitilength^2\tfrac{\dffnorm}{2}+\realnorm{s-\lfloor s \splitilength\rfloor\splitilengthinv}\derbound
        \\
         & \leq  s\splitilengthinv\tfrac{\dffnorm}{2}+\splitilength\derbound
        \xrightarrow{m\to \infty} 0
        \qedhere
    \end{align*}
\end{proofE}

The last theorem describes as a fixpoint the reachability relation on a domain \(\evdsetbound{\derbounds}\) with bounds on the derivative.
The derivative bounds are not critical and on a compact evolution domain constraint, suitable bounds can be fixed:
\begin{lemma}
    Let $F:\reals^\ell\to\reals^\ell$ be continuously differentiable.
    Then \(\reachrelofin{F}{\evdset} = \bigcup_{\derbounds}\reachrelofin{F}{\evdsetbound{\derbounds}} \).
    If \(\evdset\) is a compact set, there is \(\derbounds\) so that
    \(\reachrelofin{F}{\evdset} = \bigcup_{k>0}\reachrelofin{F}{\evdsetbound{\derbounds}}\).\label{cor:existbounds}\label{lem:compactness}
\end{lemma}
\begin{proof}
    Suppose \((x,y,t)\in \reachrelofin{F}{\evdset}\) is witnessed by \(\gamma:[0,t]\to C\).
    Then \((x,y,t)\in \reachrelofin{F}{\evdsetbound{\derbounds}}\) for \(\derbounds=(\supnorm[{[0,t]}]{\gamma},\supnorm[{[0,t]}]{\gamma''})\).
    Conversely \(\reachrelofin{F}{\evdsetbound{\derbounds}} \subseteq \reachrelofin{F}{\evdset}\) for all \(\derbounds\).

    In case \(\evdset\) is compact \(\derbounds=(\supnorm[\evdset]{\odef},\supnorm[\evdset]{D\odef\cdot \odef})\) are suitable derivative bounds.
\end{proof}

\subsubsection{Quantitative Analysis of Approximation}\label{sec:quantapprox}
The fixpoint iterations monotonically approximate the ODE reachability relation.
The following lemma gives quantitative bounds.

\begin{lemma}
    Let \(L>0\), \(\evdset\subseteq\reals^\ell\) and \(F:\reals^\ell\to\reals^\ell\) be \(L\)-Lipschitz and continuously differentiable on \(\evdset\). For any \(t\geq 0\) and \(x\in \evdset\):\label{lem:quantitativeapprox}
    \[
        \mathrm{diam}\{y: (x,y,t)\in\mgfp[n]{Z}{\splitop{F}{\derbounds}(Z)}\}\leq \tfrac{\secondderbound t}{L 2^{n-1}}(e^{2  Lt}-1)
    \]
\end{lemma}
\begin{proof}
    Consider some \((x,y,t)\in\mgfp[n]{Z}{\splitop{F}{\derbounds}(Z)}\).
    As in \Cref{prop:ODEgfphelpernoq} define \(x^m_i\) for \(m< n\) and \(i\leq 2^m\) such that \(x^m_0=x\) and \(x^m_{2^m}=y\) and for \(0<i\leq 2^m\):
    \begin{align*}
        x^m_{i}\in \evdset\qquad\text{and}\qquad|x^m_{i}-x^m_{i-1} -t2^{-m} F(x^m_{i-1})| & \leq (t2^{-m})^2\tfrac{\secondderbound}{2}\eqcolon \beta
    \end{align*}
    For \(m=n-1\) the superscript \(m\) is dropped.
    Define the sequence \(z_0=x\) and \(z_{i+1}=z_{i}+t2^{-(n-1)}F(z_i)\) for \(0\leq i \leq 2^{n-1}\) and by induction on \(i\leq 2^{n-1}\) show that \(\realnorm{z_i-x_i}\leq \beta\alpha^{-1}((1+\alpha)^i-1)\)
    where \(\alpha=Lt2^{-(n-1)}\).
    For \(i=0\) this is immediate and for the induction step:
    \begin{align*}
        \realnorm{z_{i+1}-x_{i+1}}
         & \leq \realnorm{z_i+t2^{-(n-1)}F(z_i)-(x_i+t2^{-(n-1)}F(x_i))}+\realnorm{x_{i+1}-x_i-t2^{-(n-1)}F(x_i)}
        \\
         &
        \leq \realnorm{z_i-x_i} + t2^{-(n-1)}\realnorm{F(z_i)-F(x_i)}+\beta
        \\
         & \leq \realnorm{z_i-x_i}(1+t2^{-(n-1)}L)+\beta
        = \realnorm{z_i-x_i}(1+\alpha)+\beta
        \\
         & \leq \beta\alpha^{-1}((1+\alpha)^i-1)(1+\alpha)+\beta
        = \beta\alpha^{-1}((1+\alpha)^{i+1}-1)
    \end{align*}
    Hence
    \[\realnorm{z_{2^{n-1}}-y_{2^{n-1}}}=\realnorm{z_{2^{n-1}}-x_{2^{n-1}}}\leq \beta\alpha^{-1}((1+\alpha)^{2^{n-1}}-1)\leq\beta\alpha^{-1}(e^{\alpha 2^{n-1}}-1)\leq  \tfrac{\secondderbound t}{L 2^n}(e^{2  Lt}-1).\qedhere\]
\end{proof}

The idea of axiom \irref{nabla} is a form of dynamic programming:
A longer-duration reachability question is turned into two independent simpler (shorter-duration) problems of the same form.
Crucially an intermediate point is chosen nondeterministically.
This form of \emph{nondeterministic dynamic definition}, allows for a very quick rate of convergence.
Interestingly, the required number of fixpoint iterations is \emph{linear} in time and \emph{logarithmic} in the desired accuracy.

\begin{corollary}
    Let \(L>0\), \(\evdset\subseteq\reals^\ell\) and \(F:\reals^\ell\to\reals^\ell\) be \(L\)-Lipschitz and continuously differentiable on \(\evdset\) and
    \[n\geq 1+K_1t+K_2\log(\varepsilon^{-1})\qquad\text{where } K_1=2L{\log(2)}^{-1}\text{ and } K_2=\log{\secondderbound L^{-1}}.\]
    Then \(\realnorm{y-z}<\varepsilon\) whenever \(\contreachin[t]{F}{x}{y}{C}\) and \((x,z,t)\in\mgfp[n]{Z}{\splitop{F}{\derbounds}(Z)}\).
\end{corollary}

\subsection{Axiomatization of \DifferentialGameLogic and \mutex-Calculus}\label{sec:axiomatizationdgl}

\subsubsection{The ODE Fixpoint Axiom}\label{sec:thenablaxiomsection}

The syntactic characterization of ODE reachability by fixpoints uses \emph{syntactic Lie derivatives}.
These are powerful general tools for differential equations and have been used to completely axiomatize \emph{differential equation invariance} properties \cite{DBLP:journals/jacm/PlatzerT20}.
Recall the syntactic Lie derivative \cite{DBLP:journals/jacm/PlatzerT20} of a term \(\termb\) with respect to a syntactic vector field \(\odefof\) is defined to be the \emph{term}
\(\textstyle\synlie{\odefof}{\termb}=\sum_{i=1}^\ell \tfrac{\partial \termb}{\partial \ivar_i}\cdot \odefi{i}\)
where the partial derivative \(\tfrac{\partial \termb}{\partial \ivar_i}\) of term \(\termb\) with respect to variable \(\ivar_i\) can be defined syntactically, when considering \(\termb\) as a polynomial in \(\ivar_i\).
The \emph{syntactic} directional derivative of the syntactic vector field \(\odefof\) along itself
\[\doubledbound{\odefof}{ZZZ}=(\synlie{\odefof}{\odefi{1}},\ldots,\synlie{\odefof}{\odefi{\ell}}).\]
is defined using the Lie derivative of the term vector field, so that \(\ltden[\fostate]{\doubledbound{\odefof}{ZZZ}}=(\dffnotation)(\strestrvar{\fostate}{\ivarseq})\).

\begin{definition}\label{def:formulas}
    For a continuous action \(\stdode\) and fresh variables \(\derbounds=(\derbound,\secondderbound)\) and \(\odevarsb,\timecalcvar\), the following are \emph{first-order definable}
    \begin{enumerate}
        \item the restricted evolution domain of \(\evdfml\): \(\evdfmlbound{\derbounds}{\odevarsx}\synequiv \evdfml\land \realnorm{\odefof}\leq \derbound\land\realnorm{\doubledbound{\odefof}{ZZZ}}\leq \secondderbound\)
        \item the growth condition: \(\syngrowth{\odefof}{\derbounds} \synequiv\evdfmlbound{\derbounds}{\odevarsx}\land \evdfmlbound{\derbounds}{\odevarsb}\land 2\realnorm{\odevarsb-\odevarsx-\timecalcvar\odefof[\odevarsx]}\leq \timecalcvar^2 \secondderbound\)
        \item the splitting \dGL game:
              \(\synsplitgame{\odefof}{\derbounds}\synequiv \gassign{t}{\tfrac{t}{2}}\gcom\gdualp{\gndassign{\midpointseq}};(\gassign{\odevarsx}{\midpointseq}\gachoice\gassign{\odevarsb}{\midpointseq})\) and the splitting \dLmu formula:
              \(\synsplitmu{\odefof}{\derbounds}\synequiv\syngrowth{\odefof}{\derbounds}\land\lexists{\midpointseq}\lpossible{\gassign{\timecalcvar}{\tfrac{\timecalcvar}{2}}}(\lpossible{\gassign{\odevarsx}{\midpointseq}}\pvar\land\lpossible{\gassign{\odevarsb}{\midpointseq}}\pvar)\)
    \end{enumerate}
\end{definition}

\begin{lemma}%
    \begin{enumerate}%
        \item \(\fostate\in\glfden[]{\evdfmlbound{\derbounds}{\odevarsx}}\iff\fostate\in\evdsetbound[{\glfden[]{\evdfml}}]{\fostate(\derbounds)}\)
        \item
              \(\fostate\in\glfden[]{\syngrowth{\odef}{\derbounds}}\) iff
              \(\strestrvar{\fostate}{(\odevarsx,\odevarsb,\timecalcvar)}\in \growthcond[{{\glfden[]{\evdfml}}}]{\odefinterp}{\fostate(\derbounds)}\)
        \item \(\fostate\in\lmden[]{\intp}{\synsplitmu{\odefof}{\derbounds}}\) iff \((\fostate(\odevarsx),\fostate(\odevarsb),\fostate(\timecalcvar))\in\splitop[{\glfden[]{\evdfml}}]{\odefinterp}{\fostate(\derbounds)}(\intp(\pvar))\)
        \item \(\fostate\in\glgden{\synsplitgame{\odefof}{\derbounds}}(Z)\) iff \(\mforall{u}(\fostate(\odevarsx),u,\tfrac{\fostate(\timecalcvar)}{2})\in Z\) or \((u,\fostate(\odevarsb),\tfrac{\fostate(\timecalcvar)}{2})\in Z\)
    \end{enumerate}
\end{lemma}

Using the formulas from \Cref{def:formulas} it is possible to describe the reachability relation along a differential equation fully syntactically, either as a \emph{fixpoint formula} in \dLmu or in the corresponding \emph{game} form in \dGL, as follows
\begin{align*}
    \synreachrelmu{\odefof}{\timecalcvar}{\derbounds}
     & \synequiv\fgfp{\pvar}{\synsplitmu{\odefof}{\derbounds}}
    \\
    \synreachrelgame{\odefof}{\timecalcvar}{\derbounds}
     & \synequiv
    \lnecessary{\garepeatp{\synsplitgame{\odefof}{\derbounds}}}\syngrowth{\odefof}{\derbounds}
\end{align*}
Whenever it is clear from context or irrelevant whether the fixpoint formula or game version is intended, the superscripts in \(\synreachrelsymbol^\mu\) or \(\synreachrelsymbol^\smallGame\) are dropped.
In line with \Cref{cor:existbounds} the derivative bounds \(\derbounds\) can be quantified away existentially.
So write \(\synreachrelnoderbounds{\odefof}{\timecalcvar}{}\synequiv\lexists{\derbounds}\synreachrel{\odefof}{\timecalcvar}{\derbounds}\).
Thus the continuous reachability relation is captured completely with either fixpoints or games:
\begin{proposition}[Definability of Reachability]\label{def:formular}
    If \(\glfden[]{\evdfml}\) is closed
    \(\reachrelofin{\odefinterp}{\glfden[]{\evdfml}} = \glfden[]{\synreachrelnoderbounds{\odefof}{\timecalcvar}{}}\)
\end{proposition}
\begin{proof}
    From \Cref{prop:ODEgfphelpernoq} and \Cref{def:formulas},
    \(\reachrelofin{\odefinterp}{\evdsetbound[{\glfden[]{\evdfml}}]{\derbounds}} = \{(x,y,t) : (x,y,t,\derbounds)\in\glfden[]{\synreachrel{\odefof}{\timecalcvar}{\derbounds}}\}\) is immediate.
    The proposition follows with \Cref{lem:compactness}.
\end{proof}

A formula \(\fml\) of first-order real arithmetic is \emph{closed, open or compact} if the set \(\glfden[]{\fml}\) is closed, open or compact.
Now continuous reachability actions can be axiomatized completely by the following fixpoint axiom in \dLmu and \dGL:
\begin{corollary}[ODE Fixpoint Axiom]
    The differential equation fixpoint axiom \irref{nabla} is sound for \(\dLmu\) and \(\dGL\):
    \[
        \cinferenceRule[nabla|{$[\nabla]$}]{ODE fixpoint box axiom}
        {
        \linferenceRule[viuqe]
        {\lnecessary{\stdodet}\fml}
        {\lforall{\odevarsb,\timevar}(\synreachrelnoderbounds{\odefof}{\timevar}\limply \lpossible{\gassign{\odevarsx}{\odevarsb}}\fml)}
        }{\evdfml \text{ closed}}
    \]
\end{corollary}
\begin{proof}
    Immediate from \Cref{def:formular} and the semantics of the continuous action modality.
\end{proof}

This axiomatization of ODE reachability is a promising approximation method for deductive verification of properties for ODEs and for bounded model-checking.
Unlike numerical methods like Euler's method, the solution is approximated \emph{uniformly} by recursive splitting.
Hence, behavior that appears after longer evolution times of the ODE is detected much earlier than it would be through forward approximation methods.

\subsubsection{Evolution domain constraints}

A slight weakness of the fixpoint axiomatization of differential equation is that it only applies to \emph{closed} evolution domain constraint formulas.
However, the axiom \irref{evolutiondomain} reduces every action with an evolution domain constraint to one without.
\[
    \cinferenceRule[evolutiondomain|${[\&]}$]{evolution domain axiom}
    {
        \linferenceRule[viuqe]
        {\lnecessary{\stdodet}\fml}
        {\lforall{T}(\lnecessary{\stdodetwo}(\timevar\leq T\limply\evdfml)\limply\lnecessary{\stdodetwo}(\timevar\leq T \limply\fml))}
    }{}
\]

\noindent
Unlike previous axioms for evolution domain constraints \cite{DBLP:conf/lics/Platzer12b} the \irref{evolutiondomain} axiom captures the special role of the time variable \(\timevar\) in a differential equation.
For example, it derives the general axiom
\[
    \cinferenceRule[timeevd|${[\&t]}$]{box evolution domain time axiom}
    {
        \linferenceRule[viuqe]
        {\lnecessary{\stdodetbdd}\fml}
        {\lnecessary{\stdodet}(\timevar\leq T\limply\fml)}
    }{}
\]
The usual axioms for the evolution domain constraint such as the \irref{unpack} axiom, the differential weakening axiom \irref{dwaxiom} and the domain monotonicity axiom \irref{evdmon} are derivable from \irref{evolutiondomain} and \irref{nabla} (\Cref{app:evdaxioms}).

The evolution domain constraints do not in fact add any expressive power.
This is easily proved for the \differentialmucalculus and can easily be transferred to \differentialgamelogic thanks to the general equiexpressiveness result \Cref{thm:glmufoequiexp}.
This demonstrates the power of the point of view of games as fixpoints.
\begin{proposition}[Evolution domain elimination]
    The \differentialmucalculus and \differentialgamelogic are equiexpressive with their evolution domain constraint free fragments.\label{prop:remevd}
\end{proposition}
\begin{proof}
    In the \differentialmucalculus any modality with evolution domain constraint can be replaced locally using the equivalent rewriting of \irref{evolutiondomain} and \irref{evolutiondomain}.
    The case for \dGL follows immediately from this with \Cref{thm:glmufoequiexp}.
\end{proof}

\subsubsection{Relative Complete Proof Calculus }

The proof theory of \differentialgamelogic builds on the proof theory for real-closed fields, the completeness of which is the foundation of all completeness results for \differentialgamelogic.
In fact the basic proof calculus is designed to be an extension of some complete axiomatization of the theory of real closed fields \cite{tarski_decisionalgebra51}.

The calculus for the \differentialmucalculus consists of the axioms and proof rules for the \firstordermucalculus from \Cref{sec:mucalculuscalculus} together with the axioms \irref{approx} and \irref{evolutiondomain} and a \emph{complete} axiomatization of the theory of real closed fields.
The notation \(\provdlmu \fml\) indicates that \(\fml\) is provable in this calculus.
Analogously, the calculus for \differentialgamelogic consists of the axioms and proof rules for \firstordergamelogic from \Cref{sec:glcalculus} together with the axioms \irref{approx} and \irref{evolutiondomain} and a \emph{complete} axiomatization of the theory of real closed fields.

In the differential \(\mu\)-calculus it is a straightforward consequence of the axiom \irref{evolutiondomain} that any evolution domain can be provably removed locally  (see \Cref{prop:remevd}).
By \irref{nabla} any evolution domain-free continuous action can also be provably removed equivalently locally.
Hence, \dLmu is equiexpressive and equivalent with the \firstordermucalculus \(\folmus[\dglsig]\) over \(\rfnstr\).
By the equivalence results from \Cref{sec:equiexpressivenessandequivalence} this equiexpressiveness and equivalence carries over to~\dGL:

\begin{theoremE}[ODE Elimination]
    The following logics are equiexpressive and equivalent:
    \begin{enumerate}
        \item \differentialgamelogic \dGL
        \item the \differentialmucalculus \dLmu
        \item \firstordergamelogic \(\fogls[\dglsig]\) over \(\rfnstr\)
        \item the \firstordermucalculus \(\folmus[\dglsig]\) over \(\rfnstr\)
    \end{enumerate}
\end{theoremE}

Hence axioms \irref{nabla} and \irref{evolutiondomain} completely capture the additional expressiveness of differential equation.
When viewed as fixpoints, these are already available in the purely discrete \firstordergamelogic or \(\mu\)-calculus interpreted over \(\rfnstr\).
In fact, by the results in \Cref{sec:lfpapp} differential equations can already be expressed in least fixpoint logic interpreted over the real numbers.

\section{Completion for Robust Properties}\label{sec:completionforrobust}
This section demonstrates the power of the fixpoint characterization of differential equations, by reducing questions for differential equations to \emph{provable} questions about \emph{finite} fixpoint iterations.
Completeness for fragments of \dGL such as bounded safety and reachability are consequences.

\subsection{Approximation Axiom}

To fully utilize the fixpoint characterization of differential equations, a formal axiom capturing the quantitative approximation bounds from \Cref{sec:quantapprox} will be introduced.
For convenience define the formula \(\synlipbd\) ensuring that \(\odefof\) is \(L\)-Lipschitz on \(\cepsnbhd[M]{0}\)
\[\synlipbd \synequiv\lforall{\odevarsx,\odevarsb} ((\realnorm{\odevarsx}\leq M\land\realnorm{\odevarsb}\leq M)\limply\realnorm{\odefof-\odefof[\odevarsb]}\leq L\realnorm{\odevarsx-\odevarsb}).\]
and define the abbreviation for the iterates
\(\synreachrelinsuper{\odefof}{s}{\derbounds}{\evdfml}{n}
\synequiv
\fgfpnum{n}{\pvar}{\synsplitmutime[\evdfml]{\odefof}{\derbounds}{s}}
\).
\begin{theorem}
    The differential equation approximation axiom \irref{approx} is sound for \(\dLmu\) and \(\dGL\):
    \[
        \cinferenceRule[approx|{$\lpossible{\approx}$}]{ODE approximation axiom}
        {
        \linferenceRule[impll]
        {(\synreachrelinsuper{\odefof}{s}{\derbounds}{\abs{\odevarsx}\leq M}{n}\land s\leq T\land \synlipbd\land \lpossible{\gassign{\odevarsx}{\odevarsb}}\fmlepsminus{\fml}{\varepsilon})}
        {\lpossible{\stdodetwith{t\leq s}}(\evdfmlbound[{\realnorm{\odevarsx}< M}]{\derbounds}{\odevarsx}\limply\fml)}
        }{}
    \]
    for \(M,T,L,\secondderbound \in\rationals\) and \(\tfrac{\secondderbound T}{L2^{n-1}}(e^{2LT}-1)\leq \varepsilon\)
\end{theorem}

\begin{proof}
    Note that \(F=\odefinterp\) is an \(L\)-Lipschitz continuous vector field on \(C=\cepsnbhd[\fostate(M)]{0}\).
    Consider a state \(\fostate\) and \(\gamma:[0,t)\to\reals^\ell\) a maximal integral curve of \(F\).
    If \(\gamma_r\notin \evdsetbound[C]{\fostate(\derbounds)}\) for some \(r\in [0,t)\cap [0,\fostate(s))\),
    the formula is satisfied in~\(\fostate\), because the second disjunct of the postcondition is reachable (\Cref{def:formulas}).
    So suppose that \(\gamma_r\in \evdsetbound[C]{\fostate(\derbounds)}\) for all \(r\in [0,t)\cap [0,\fostate(s))\), then \(t>\fostate(s)\), since a maximal integral curve leaves every bounded set.
    Consequently, \((\gamma_0,\gamma_{\fostate(s)},\fostate(s))\in\mgfp[]{Z}{\splitop{F}{\derbounds}(Z)}\) by \Cref{prop:ODEgfphelpernoq}.
    By the antecedent \((\fostate(\odevarsx),\fostate(\odevarsb),\fostate(s))\in \mgfp[n]{Z}{\splitop{F}{\derbounds}(Z)}\).
    Then \(\realnorm{\gamma_{\fostate(s)}-\fostate(\odevarsb)}<\varepsilon\) follows with \Cref{lem:quantitativeapprox}.
    Hence the formula is satisfied, as \(\fostate\valdgl\lpossible{\gassign{\odevarsx}{\odevarsb}}\fmlepsminus{\fml}{\varepsilon}\).
\end{proof}

\noindent In the following the \dGL/\dLmu calculus with the additional axiom \irref{approx} is considered.

\subsection{Completeness for Robust Properties}

As suggested by \Cref{lem:quantitativeapprox}, the fixpoint characterization is of a numerical, discrete nature.
It makes it possible to uniformly prove many kinds of properties of differential equations, which are robust in an appropriate sense, so that they can be proved either through \emph{finite} unrolling using \irref{fp} or the approximation axiom \irref{approx}.
The following theorems show that with the addition of \irref{approx} \differentialgamelogic and the \differentialmucalculus are complete for robust properties.

\subsubsection{Completeness for Safety}

Of particular interest in verification are bounded safety properties.
It is first shown, that a safety assertion \(\fmlc\limply\lnecessary{\stdode}\fmlb\) is provable, whenever a  slight strengthening of it is valid. version of.
Let \(\fmlepsminus{\fml}{\varepsilon}\) denote \(\lforall{\odevarsx} (\realnorm{\odevarsx-\odevarsb}<\varepsilon \limply \lpossible{\gassign{\odevarsx}{\odevarsb}}\fml)\) for any formula \(\fml\),  defining the set of points with distance \(\geq \varepsilon\) to the boundary of \(\glfden[]{\fml}\).

\begin{theorem}[Provable Safety]\label{thm:boxcomplete}
    If \(\fml,\evdfml,\fmlc\) are formulas, \(\evdfml\) compact and \(T,\varepsilon\in \rationals\) with \(\varepsilon>0\)
    \[
        \text{If }\valdgl\fmlc\limply\lnecessary{\stdodetbdd}\fmlepsminus{\fml}{\varepsilon}\; \text{ then } \;
        \provdgl\fmlc\limply\lnecessary{\stdodetbdd}\fml.
    \]
\end{theorem}
\begin{proof}
    Note that \(F=\odefinterp\)  is \(L\)-Lipschitz on \(\glfden[]{\evdfml}\) by compactness.
    Now suppose that the formula \(\fmlc\limply\lnecessary{\stdodetbdd}\fmlepsminus{\fml}{\varepsilon}\) is valid, then by \Cref{lem:compactness} there are bounds \(\derbounds\in \rationals\times\rationals\) such that \(\fmlc\limply\lnecessary{\stdodetwith{\evdfmlbound{\derbounds}{\odevarsx}\land \timevar\leq T}}\fmlepsminus{\fml}{\varepsilon}\) is valid.
    Now pick some \(n\) sufficiently large, so that \(\tfrac{\secondderbound T}{L 2^{n-1}}(e^{2  LT}-1)<\varepsilon\).
    Then by \Cref{lem:quantitativeapprox} it follows that
    \(
    \valdgl\fmlc\limply\lforall{\odevarsb,\timevar}(\lexists{\derbounds}\synreachrelsuper{\odefof}{\timevar}{\derbounds}{n}\limply\lpossible{\gassign{\odevarsx}{\odevarsb}}\fml)
    \)
    is valid and thus by completeness for real-arithmetic is also provable.
    By the \(n\)-time application of axiom \irref{loop} it follows that \(\provdgl\fmlc\limply\lforall{\odevarsb,\timevar}(\lexists{\derbounds}\synreachrel[\timevar]{\odefof}{\odevarsx}{\odevarsb}{\derbounds} \limply\lpossible{\gassign{\odevarsx}{\odevarsb}}\fml)\).
    Hence the desired formula is provable by \irref{nabla}.
\end{proof}

\noindent Under the right topological assumption, \Cref{thm:boxcomplete} yields completeness for safety.
\begin{corollary}[Completeness for Safety]\label{cor:goodboxcomplete}
    If \(\fmlc\) is closed, \(\evdfml\) is compact, \(\fml\) is open and \(T\in \rationals\).
    \[
        \valdgl\fmlc\limply\lnecessary{\stdodetbdd}\fml \ciff
        \provdgl\fmlc\limply\lnecessary{\stdodetbdd}\fml
    \]
\end{corollary}
\begin{proof}
    The backward implications follow from the soundness of the \dGL calculus.
    Suppose first that \(\evdfml\) is bounded.
    By \irref{unpack} it suffices to show that
    \((\fmlc\land\evdfml)\limply\lnecessary{\stdodetbdd}\fml\)
    is provable, provided it is valid.
    Note that \(\glfden[]{\fmlc\land\evdfml}\) is compact, so that by \Cref{lem:continuity} there is some \(\varepsilon>0\) such that
    \(\valdgl(\fmlc\land\evdfml)\limply\lnecessary{\stdodetbdd}\fmlepsminus{\fml}{\varepsilon}.\)
    Provability follows by \Cref{thm:boxcomplete}.
\end{proof}

\begin{remark}
    The assumption that \(\evdfml\) is compact in \Cref{cor:goodboxcomplete} is not needed.
    Weakening this to \(\evdfml\) being closed and \(\fml\) bounded is still complete, provided the proof calculus includes the additional \emph{localization} axiom (where \(\epsnbhd[M]{\fml}\synequiv\mexists{\odevarsb} \abs{\odevarsx-\odevarsb}<M \land \lpossible{\gassign{\odevarsx}{\odevarsb}\fml}\) is the \(M\)-neighbourhood of \(\fml\))
    \[
        \cinferenceRule[local|{$\ell$}]{ODE localization axiom}
        {
        \linferenceRule[impl]
        {M>0}
        {(\lnecessary{\stdode}\fml\lbisubjunct\lnecessary{\stdodewith{\evdfml\land\epsnbhd[M]{\fml}}}\fml)}
        }{}
    \]
    which captures that behaviour far away from the postcondition (not in \(\epsnbhd[M]{\fml}\)) is irrelevant to the safety property.
    This reduces a question with a bounded postcondition to one with bounded evolution domain constraint.
\end{remark}

\subsubsection{Completeness for Reachability}

Similarly to \Cref{thm:boxcomplete}, reachability assertions are also provable, if a minimal slight strengthening is valid.
However the proof is very different and relies on \irref{approx} to ensure existence.

\begin{theorem}[Provable Reachability]
    If \(\fml,\evdfml,\fmlc\) are formulas with \(\lnot\fml\) or \(\evdfml\) bounded, \(T,\varepsilon\in \rationals\) and \(\varepsilon>0\)\label{thm:diaepscomp}
    \[
        \text{if }\valdgl\fmlc\limply\lpossible{\stdodetwith{\fmlepsminus{\evdfml}{\varepsilon}\land \timevar\leq T}}(\fmlepsminus{\fml}{\varepsilon})\; \text{ then } \;
        \provdgl\fmlc\limply\lpossible{\stdodetbdd}\fml
    \]
\end{theorem}
\begin{proof}
    First assume that \(\lnot\fml\) is bounded and fix \(M\in\rationals\), so that \(\glfden[]{\lnot\fmlepsminus{\fml}{\varepsilon}}\subseteq \cepsnbhd[M-1]{0}\) and let \(C= \cepsnbhd[M]{0}\).
    Since \(F=\odefinterp\) is continuously differentiable on the compact set \(C\), there is a Lipschitz constant \(L\), so that \(F\) is \(L\)-Lipschitz on \(C\).
    By completeness for first-order real arithmetic, \(\provdgl\synlipbd\).
    By \Cref{lem:compactness} there are some bounds \(\derbounds\in\rationals\times\rationals\), so that \(\reachrelofin{\odefinterp}{\evdsetbound{\derbounds}}=\reachrelofin{\odefinterp}{\evdset}\).
    Fix sufficiently large \(n\) such that \(\tfrac{\secondderbound T}{L2^{n-1}}(e^{2LT}-1)\leq \varepsilon\).

    Define formulas 
    \begin{align*}
        &
        \fmld_1\synequiv\lforall{\odevarsb,\timevar<s}(\lexists{\derbounds}\synreachrelsuper{\odefof}{\timevar}{\derbounds}{n}\limply\lpossible{\gassign{\odevarsx}{\odevarsb}}(\evdfml\land\evdfmlbound[{\realnorm{\odevarsx}< M}]{\derbounds}{\odevarsx}))
        \\
        & \fmld_2\synequiv\lexists{\odevarsb}(\synreachrelsuper{\odefof}{s}{\derbounds}{n}\land s \leq T\land \lpossible{\gassign{\odevarsx}{\odevarsb}}\fmlepsminus{\fml}{\varepsilon}).
    \end{align*}
    Then \(\fmlc\limply\lexists{s\leq T}(\fmld_1 \land\fmld_2)\) is easily seen to be valid by \Cref{prop:ODEgfphelpernoq} and \Cref{lem:quantitativeapprox}.
    By completeness of real arithmetic also provable:
    \begin{equation}
        \provdgl\fmlc\limply\lexists{s\leq T}(\fmld_1 \land\fmld_2) \label{eq:provcor}
    \end{equation}
    Observe that by unrolling the fixpoint \(\provdgl\fmld_1\limply\lforall{\odevarsb,\timevar<s}(\lexists{\derbounds}\synreachrelnoevd{\odefof}{\timevar}{\derbounds}\limply\lpossible{\gassign{\odevarsx}{\odevarsb}}(\evdfml\land\evdfmlbound[{\realnorm{\odevarsx}< M}]{\derbounds}{\odevarsx}))\)
    and then by \irref{nabla} also
    \begin{equation}
        \provdgl\fmld_1\limply\lnecessary{\stdodetwo}(\timevar\leq s\limply(\evdfml\land\evdfmlbound[{\realnorm{\odevarsx}< M}]{\derbounds}{\odevarsx}))\label{eq:etaone}
    \end{equation}
    Next note that
    \( \fmld_2\limply\lexists{\odevarsb}(\lpossible{\stdodetwith{\timevar\leq s}}(\fml\lor \lnot\evdfmlbound[{\realnorm{\odevarsx}< M}]{\derbounds}{\odevarsx}))\)
    is provable by \irref{approx} and by \irref{dwaxiom} and \irref{mon} also
    \begin{equation}
        \provdgl \fmld_2\limply\lpossible{\stdodetwo}((\timevar\leq s \land\fml)\lor \lnot\evdfmlbound[{\realnorm{\odevarsx}< M}]{\derbounds}{\odevarsx}).
        \label{eq:etatwo}
    \end{equation}
    Combining \pref{eq:provcor} with \pref{eq:etaone} and \pref{eq:etatwo} yields \dGL provability of
    \[
        \fmlc\limply\lexists{s\leq T}(\lnecessary{\stdodetwo}(\timevar\leq s\limply(\evdfml\land\evdfmlbound[{\realnorm{\odevarsx}< M}]{\derbounds}{\odevarsx})) \land\lpossible{\stdodetwo}((\timevar\leq s\land\fml)\lor \lnot\evdfmlbound[{\realnorm{\odevarsx}< M}]{\derbounds}{\odevarsx}))
    \]
    Then by \irref{evolutiondomain} the formula
    \(
    \fmlc\limply\lpossible{\stdodetwith{\timevar\leq s\land\evdfml\land\evdfmlbound[{\realnorm{\odevarsx}< M}]{\derbounds}{\odevarsx}}}(\fml\lor\lnot\evdfmlbound[{\realnorm{\odevarsx}< M}]{\derbounds}{\odevarsx})
    \) is provable.
    Then by \irref{dwaxiom}, \irref{mon} and \irref{evdmon} it follows that \(\fmlc\limply\lpossible{\stdodetbdd}\fml\) is provable.

    Next consider the case that \(\evdfml\) is bounded.
    Then there is \(M\in\rationals\) such that \(\provdgl \evdfml\limply\realnorm{\ivarseq}\leq M\).
    By the case for \(\lnot\fml\) bounded (using \(\fml\lor\realnorm{\odevarsx}>M\) instead of \(\fml\)) it follows then that \(\fmlc\limply\lpossible{\stdodetbdd}(\fml\lor\realnorm{\ivarseq}>M)\) is provable.
    By \irref{dwaxiom} and \irref{mon} note \(\provdgl\lpossible{\stdodetbdd}(\fml\lor\realnorm{\odevarsx}>M)\limply \lpossible{\stdodetbdd}\fml\) proves the desired formula.
\end{proof}

\noindent As for safety, reachability is complete assuming suitable topological conditions.
\begin{corollary}[Completeness for Reachability]\label{cor:diacompnice}
    If \(\fml,\evdfml\) are open, \(\fmlc\) is closed and \(\fmlc\) or \(\evdfml\) is bounded
    \[
        \valdgl\fmlc\limply\lpossible{\stdodet}\fml \ciff
        \provdgl\fmlc\limply\lpossible{\stdodet}\fml
    \]
\end{corollary}
\begin{proof}
    The backward direction is immediate by soundness of \dGL.
    So suppose \(\valdgl\fmlc\limply\lpossible{\stdodet}\fml\).
    Suppose first that \(\fmlc\) is bounded.
    By continuity properties (\Cref{lem:continuity}) of flows for differential equations there are \(T,M\in\rationals\) and \(\varepsilon>0\) such that \(\valdgl\fmlc\limply\lpossible{\stdodetwith{\fmlepsminus{\evdfml}{\varepsilon}\land\realnorm{\odevarsx}\leq M\land \timevar\leq T}}(\fmlepsminus{\fml}{\varepsilon})\).
    The desired provability follows by \Cref{thm:diaepscomp}.

    Next consider the case that \(\evdfml\) is bounded.
    Then because \(\fmlc\limply\lpossible{\stdodet}\fml\) is valid, \(\glfden{\fmlc}\subseteq \glfden{\evdfml}\), so that also \(\fmlc\) is bounded and the previous case applies.
\end{proof}

\subsubsection{Completeness for Unsafety and Unreachability}

Dually to the completeness results for safety and reachability, their negations can also be proved whenever valid.
The proofs are very similar.

\begin{theorem}[Unsafety]\label{thm:boxnegcomp}
    If \(\fml\) is closed and \(\evdfml\) is open
    \[
        \not\valdgl\fmlc\limply\lnecessary{\stdodet}\fml \ciff
        \provdgl\lnot\lforall{\odevarsx}(\fmlc\limply\lnecessary{\stdodet}\fml)
    \]
\end{theorem}
\begin{proof}
    The backward direction holds by soundness for \dGL.
    For the forward direction observe that as in the proof of \Cref{cor:diacompnice} by \Cref{lem:compactness,lem:continuity}, there are \(T,M,\derbounds\in\rationals\) and \(\varepsilon>0\) such that
    \[\valdgl\lexists{\odevarsx}(\fmlc\land\lpossible{\stdodetwith{\fmlepsminus{\evdfml}{\varepsilon}\land\evdfmlbound[{\realnorm{\odevarsx}< M}]{\derbounds}{\odevarsx}\land \timevar\leq T}}\fmlepsminus{(\lnot\fml)}{\varepsilon}).\]
    As in the proof of \Cref{thm:diaepscomp} there is a Lipschitz constant \(L\in\rationals\) for \(\odefinterp\) such that \(\provdgl\synlipbd\).
    Fix sufficiently large \(n\) so that \(\tfrac{\secondderbound T}{L2^{n-1}}(e^{2LT}-1)\leq \varepsilon\).
    The proof then proceeds very similar to the proof of \Cref{thm:diaepscomp}.
\end{proof}

\begin{theorem}[Unreachability]\label{thm:dianocomp}
    If \(\fml\) is closed, \(\evdfml\) is compact and \(T\in\rationals\)
    \[
        \not\valdgl\fmlc\limply\lpossible{\stdodetbdd}\fml \ciff
        \provdgl\lnot\lforall{\odevarsx}(\fmlc\lpossible{\stdodetbdd}\fml)
    \]
\end{theorem}
\begin{proof}
    The backward direction holds by soundness for \dGL.
    For the forward implication assume \(\lexists{\odevarsx}(\fmlc\land\lnecessary{\stdodetbdd}\lnot\fml)\) is valid.
    By \Cref{lem:compactness,lem:continuity} the formula \(\lexists{\odevarsx}(\fmlc\land\lnecessary{\stdodetwith{\evdfmlbound{\derbounds}{\odevarsx}\land\timevar\leq T }}\fmlepsminus{(\lnot\fml)}{\varepsilon})\) is also valid for some \(\derbounds\in\rationals\times\rationals\).
    And by \Cref{lem:quantitativeapprox}
    \(
    \valdgl\fmlc\limply\lforall{\odevarsb,\timevar}(\lexists{\derbounds}\synreachrelsuper{\odefof}{\timevar}{\derbounds}{n}\limply\lpossible{\gassign{\odevarsx}{\odevarsb}}\lnot\fml)
    \)
    for sufficiently large \(n\).
    As a formula of real arithmetic this is provable and so is
    \(\fmlc\limply\lforall{\odevarsb,\timevar}(\lexists{\derbounds}\synreachrel[\timevar]{\odefof}{\odevarsx}{\odevarsb}{\derbounds}\limply\lpossible{\gassign{\odevarsx}{\odevarsb}}\lnot\fml)\)
    by \irref{fp} or \irref{loop}.
    Finally \(\lexists{\ivarseq}(\fmlc\not\limply\lpossible{\stdodetbdd}\fml)\) is derivable with \irref{nabla}.
\end{proof}

\subsection{Decidability of Robust Formulas}

As was observed \cite{DBLP:conf/csl/Franzle99} the complexity (and incompleteness) of many hybrid systems properties can be traced back to idealized and precise models, which require correctness without any tolerance for error.
However this is unrealistic, as many physical hybrid systems exhibit at least some noise.
The subtle differences between the topological conditions required to ensure provability of a safety/reachability property (in \Cref{cor:goodboxcomplete,cor:diacompnice}) and those required to prove the negations (in \Cref{thm:boxnegcomp,thm:dianocomp}), show precisely which boundary phenomena can be problematic.
This hints at an infinitesimal notion of robustness, which excludes such behaviour and thereby ensures that the class of robust properties can be handled completely.

The topological closure \(\fmlcl{\fml}\) of a formula \(\fml\) of real arithmetic is definable syntactically by the first-order formula
\(\lforall{\varepsilon{>}0}\lexists{\odevarsb} (\realnorm{\odevarsb-\odevarsx}<\varepsilon\land\lpossible{\gdassign{\odevarsb}{\odevarsx}}\fml)\) and \(\fmlint{\fml}\synequiv\lnot\fmlcl{\lnot\fml}\) defines the interior.

\begin{definition}[Robustness]
    If \(\fmlc\) is closed, \(\fml\) is open, \(\evdfml\) is bounded and \(T\in\rationals\), then
    \begin{enumerate}
        \item the safety property \(\fmlc\limply\lnecessary{\stdodetbdd}\fml\) is \emph{robust} if \(\evdfml\) is closed and
              \[\valdgl\fmlc\limply\lnecessary{\stdodetintbdd}{\fmlcl{\fml}} \quad\Longrightarrow \quad
                  \valdgl\fmlc\limply\lnecessary{\stdodetbdd}{{\fml}}\]
        \item the reachability property \(\fmlc\limply\lpossible{\stdodet}\fml\) is \emph{robust} if \(\evdfml\) is open and
              \[\valdgl\fmlc\limply\lpossible{\stdodetclbdd}{\fmlcl{\fml}} \quad\Longrightarrow \quad \valdgl\fmlc\limply\lpossible{\stdodetbdd}{{\fml}}\]
    \end{enumerate}
\end{definition}

Robustness means that boundary points can safely be discarded.
If the property is true while disregarding those, it is robustly true even at the boundary.
Under this assumption, \differentialgamelogic can prove everything.

\begin{corollary}[Completeness and Decidability of Robust Properties]
    \Differentialgamelogic is complete for robust safety and reachability properties and their negations.\label{cor:robustcomplete}

    Moreover, validity of robust properties is decidable by proof search.
\end{corollary}
\begin{proof}
    Suppose \(\fmlc\limply\lnecessary{\stdodetbdd}\fml\) is a robust box property.
    If it is valid, then \(\fmlc\limply\lnecessary{\stdodetbdd}{{\fml}}\) is valid and hence provable by \Cref{cor:goodboxcomplete}.
    If the robust box property is not valid, then \(\fmlc\limply\lnecessary{\stdodetintbdd}{\fmlcl{\fml}}\) is also not valid.
    By \Cref{thm:boxnegcomp} the negation \(\lexists{\odevarsx}(\fmlc\not\limply\lnecessary{\stdodetintbdd}{\fmlcl{\fml}})\) is provable.
    By \irref{mon} and \irref{evdmon} it follows that \(\lexists{\odevarsx}(\fmlc\not\limply\lnecessary{\stdodetbdd}{\fml})\).

    Next let  \(\fmlc\limply\lpossible{\stdodetbdd}\fml\) be a robust diamond property.
    If it is valid, it is provable by \Cref{cor:diacompnice}.
    Suppose then that the diamond property is not valid.
    Then \(\fmlc\limply\lpossible{\stdodetclbdd}{\fmlcl{\fml}}\) is not valid by robustness.
    The negation \(\lexists{\odevarsx}(\fmlc\not\limply\lpossible{\stdodetclbdd}{\fmlcl{\fml}})\) is provable by \Cref{thm:dianocomp}.
    By \irref{mon} and \irref{evdmon}
    \(\lexists{\odevarsx}(\fmlc\not\limply\lpossible{\stdodetbdd}{\fml})\) is also provable.
\end{proof}

\Cref{cor:robustcomplete} shows that for models of \emph{physical} (robust) models of hybrid systems, the continuous verification part of the continuous dynamics can be handled completely.

\section{Relative Completeness for Rational \dGLheadline}\label{sec:relcompleteness}

\Differentialgamelogic is complete relative to any differentially expressive fragment \cite{DBLP:journals/tocl/Platzer15}, which is any fragment that can provably represent differential equations syntactically.
However, the difficulty of finding differentially expressive fragments means that no relative completeness result with respect to a natural logic for \dGL was known.
In this section it is shown that the restriction of \differentialgamelogic to \emph{rational} gameplay is complete relative to single-player \differentialgamelogic.
This extends the relative completeness of \differentialdynamiclogic \cite{DBLP:conf/lics/Platzer12b} to the \emph{adversarial} context of \differentialgamelogic.

\emph{\Differentialdynamiclogic} \dL \cite{DBLP:conf/lics/Platzer12b} is the original fragment of \differentialgamelogic and is used for interactive theorem proving in cyber-physical systems verification.
Formally it is the fragment of \differentialgamelogic without the duality operator \(\gdual{}\) and thus without adversarial dynamics.
With \Cref{thm:glmufoequiexp} \dL corresponds to a fragment of the \differentialmucalculus.
Define the set of \emph{\angelicformulas} formulas of \dLmu by the following grammar
\begin{align*}
    \fml \grassign \pvar \| \atfml \| \lnot \atfml \| \fml_1\lor\fml_2 \| \lpossible{\stdode}\fml \| \flfp{\pvar}{\fml}
\end{align*}%
A formula \(\fml\) of \dLmu is \emph{\systemicformula} if it is a Boolean combination of \angelicformulas formulas.
In terms of games, \systemicformula{} formulas are games in which only the player controlling the loop may make choices in the loop body.
Let \systemicdlmus be the fragment of \dLmu consisting of all \systemicformula \dLmu formulas.
The fragments \systemicdlmus and \dL are equiexpressive and equivalent by the results of \Cref{sec:equivalence}

\Differentialgamelogic is significantly more expressive than \differentialdynamiclogic \cite{DBLP:journals/tocl/Platzer15}.
This increase may be expected to come from the addition of nested alternating loop games.
However, the expressive power originates with the interaction of the alternating loops with the \emph{unrestricted} choice of continuous evolutions along the ODE and the \emph{unrestricted} quantification, which may have \emph{uncountably} many outcomes.
If the semantics of ODEs is modified minimally, this expressivity gap vanishes.
By restricting the players to stop the evolution only at \emph{rational} times, the expressive power of \dGL reduces to that of \differentialdynamiclogic.
This restriction is harmless for applications, since most properties of interest are robust enough, so that they are not only observed at irrational times.

\subsection{Rational \DifferentialGameLogic}

Recall that the natural numbers can be defined in \dL and hence also in \systemicdlmus \cite{DBLP:conf/lics/Platzer12b}.
Hence, the rational numbers $\rationals$ are also definable in \dL.
To define the fragment of \dGL restricted to rational evolution, define the \emph{rationally restricted} continuous evolution \(\stdoderat\) and non-deterministic assignment \(\gndassignrat{\ivar}\) as syntactic abbreviations as follows:
\begin{align*}
    \stdoderat           & ~\synequiv~ \stdodet;\gtestp{\timevar\in\rationals}
                         &
    \gndassignrat{\ivar} & ~\synequiv~ \gndassign{\ivar};\gtestp{\ivar\in\rationals}
\end{align*}
The tests force the players to choose rational values of \(\timevar\) or \(x\) respectively, thereby ensuring that every atomic game is only countably branching.

\emph{\Rationaldifferentialgamelogic} (\dGLrat) is the fragment of \differentialgamelogic such that \emph{within the scope of a repetition game \(\garepeat{}\)} all evolutions and nondeterministic assignments are rational.
And \dLmurat is the fragment of \dLmu such that \emph{within the scope of a fixpoint operator} all evolutions and nondeterministic assignments are rational.

Nondeterministic assignment (and, thus, quantification and deterministic assignment) can be equivalently written in terms of a differential equation:
\[\provdl\lpossible{\gndassign{\ivar}}\fml\lbisubjunct\lpossible{\gode{\ivar}{1}\gachoice\gode{\ivar}{-1}}\fml.\]
The same holds for rationally restricted non-deterministic assignment, which can be written in terms of rational evolution games.
For this reason, in the following, without loss of generality, it is assume that continuous evolution is the \emph{only} \gamesymbol in \dLmu and \dLmurat.

\subsection{Rational Fixpoint and Game Representation}

Every \dLmurat formula is provably reducible to an equivalent \dL formula and consequently \rationaldifferentialgamelogic is complete relative to \dL.
This is shown through an equivalent reduction from \dLmurat to \systemicdlmus, which are equivalent to \dGLrat and \dL respectively.
The idea of the reduction is to turn the \emph{fixpoint constructs} of \dLmurat into \emph{first-order quantifiers} in a compositional and fully local translation.
To turn a fixpoint into a variable, the necessary information about the fixpoint is encoded in a single \ivarname.
The crucial insight is that the entire (potentially uncountable) fixpoint does not need to be represented.
Because \dLmurat is only countably branching, it suffices to represent the \emph{reachable} states of the fixpoint.
A coding to represent countable infinite sequences of numbers in \dL \cite{DBLP:journals/tocl/Platzer17} is used and with an additional \selection axiom these representations of infinite sequences can be reasoned about deductively.
(See \Cref{sec:apptreecoding} for the technical details.)

\begin{propositionE}[Rational \dL Representability]\label{thm:expressivenessreduction}

    Any \dLmurat formula \(\fml\) without free \pvarname{s} is provably equivalent to a \systemicdlmus formula \(\fmlb\), that is
    \(\provdlmu \fml \lbisubjunct \fmlb\).
\end{propositionE}

\begin{proofE}

    Recall that the natural numbers can be defined in \dL and hence also in \systemicdlmus \cite{DBLP:conf/lics/Platzer12b}.
    This extends to an encoding of (countable) sequences of real numbers into a single real number and there are formulas describing such sequences, which are formally defined in \Cref{sec:apptreecoding}.
    For a finite sequence \(\ivarbseq\) of variables the formula  \(\varisseq{\ivar}{\ivarbseq}\) says that the value of \(\ivar\) is the encoding of the sequence of values of the variables \(\ivarbseq\) and the action \(\seqasvar{\ivarsall}{\ivar}\) assigns the values encoded in \(\ivar\) to the variables in~\(\ivar\).
    For a variable \(\ivar\) representing a sequence, \(\elof{\ivar}{\ivarb}\) describes its \(\ivarb\)-th element and \(\finseqlen[\ivar]\) its length.
    Finally \(\varincode{\ivar}{\ivarb}\)  says that the value of \(\ivar\) appears in the sequence encoded by \(\ivarb\).

    It suffices to prove the proposition for any formula \(\fml\) with variables from a fixed finite lost of \ivarname{s} \(\ivarsall=\ivarsallel{1},\ldots,\ivarsallel{m}\) and only for syntactic vector fields of polynomial degree at most~\(N\) without evolution domain constraint (\irref{evolutiondomain}).
    Moreover assume without loss of generality that the set \(\pvars\) of \pvarname{s} of interest is finite.
    For every \pvarname \(\pvar\in\pvars\) let \(\ivarforpvar{\pvar}\) be a \emph{fresh \ivarname}.
    These \ivarname{s} \(\ivarforpvar{\pvar}\) can be viewed as an interpretation of a \pvarname by a countable set of states, which it represents.
    For any \ivarname write \(\predofreal{\ivar}\) for the formula \(\varincode{\ivarsall}{\ivar}\), which holds in those states that are in the set of states represented by \(\ivar\).

    A \emph{program sequence} \(\progseq\) is a finite sequence of \emph{program pairs} \((\stdodewo,\timecalcvar)\), consisting of a continuous program and a rational number \(\timecalcvar\in\rationals\).
    Because \(\stdodewo\) is essentially a sequence of rational numbers (the coefficients of \(\odef\)), any program pair can be assigned a single rational number the \emph{program code} \(\progcodequote{\stdodewo,\timecalcvar}\) that codes it, such that there is a formula \(\Phi\) satisfying
    \[\provdlmu c=\progcodequote{\stdodewo,\timecalcvar}\limply(\fmlreplacepvarby[\Phi]{Z}{\fml}\lbisubjunct\lnecessary{\stdodetwo}(\timevar=\timecalcvar\land\fml))\]
    where \(Z\) is a fresh \pvarname and \(c\) a fresh \ivarname (appearing in \(\Phi\)).\footnote{%
        To be precise list all possible \(\ivarsall\)-monomials of degree up to \(N\) as \(p_2,\ldots, p_k\).
        Then \(\progcodequote{\stdodewo,\timecalcvar}\) is the encoding of the finite sequence of rationals where \(c_1=\timecalcvar\) and \(c_i\) is the rational coefficient of \(p_i\) in \(\odef\).
        Then \(\Phi \synequiv \lnecessary{\gode{\ivarsall(\timevar)}{\sum_{i=2}^k\elof{c}{i}p_i}}(t=\elof{c}{1}\land Z)\).}
    Then the formula \(\fmlreplacepvarby[\Phi]{Z}{\fml}\) asserts that \(\fml\) holds in the state reached from the current state along the program \(\stdodewo\) evolved for time \(\timecalcvar\), which is coded in \(c=\progcodequote{\stdodewo,\timecalcvar}\).
    The code \(\progcodequote{\progseq}\) of a program sequence \(\progseq\) is the (rational) number encoding the sequence of codes of the elements of \(\progseq\).\footnote{The sequence encoding is such that the code of a finite sequence of rational numbers is rational.}
    For any \(\progcode\) encoding a program sequence write  \(\applyprogcodetoprog[\progcode]{\fml}\) for the interpretation formula
    \begin{align*}
        \lexists{\finseq}{} & \finseqlen[\finseq]=\finseqlen[\progcode]+1\land \elof{\finseq}{0}=\ivarsall\land \lpossible{\seqasvar{\ivarsall}{\elof{\finseq}{\finseqlen[\progcode]}}}\fml\land
        \\ &
        \lforallnat{i}(1\leq i\leq\finseqlen[\finseq]\limply\lpossible{\gassign{c}{\elof{\progcode}{i-1}}} \lpossible{\seqasvar{\ivarsall}{\elof{\finseq}{i-1}}}\fmlreplacepvarby[\Phi]{Z}{\varisseq{\ivarsall}{\elof{\finseq}{i}}}).
    \end{align*}
    The formula \(\applyprogcodetoprog[\progcode]{\fml}\) asserts that \(\fml\) holds in the state reached by following the ODEs in \(\progseq\) for their respective times sequentially.
    This is captured by the key inductive property of this formula that
    \begin{equation*}
        \provdlmu \applyprogcodetoprog[\progcodequote{\emptyprogseq}]{\fml}\lbisubjunct\fml \qquad \provdlmu\applyprogcodetoprogp[\progcodecom{\progcodequote{\stdodewo,\timecalcvar}}{\progcode}]{\fml}\lbisubjunct \lnecessary{\stdodetwo}(t=\timecalcvar\land\applyprogcodetoprog[\progcode]{\fml})\label{enc:progseqpullin}\tag{$\ast$}
    \end{equation*}
    For readability write \(\forallprogcode{\fmlb}\) for \(\lforallnat{\progcode}\applyprogcodetoprog[\progcode]{\fmlb}\) and observe by induction on \dLmurat formulas \(\fml\) that
    \[\provdlmu(\forallprogcode{\fmlb}\land \fmlreplacepvarby[\fml]{\pvar}{\fmlc})\limply\fmlreplacepvarby[\fml]{\pvar}{(\fmlc\land\forallprogcode{\fmlb})}.\]
    For the case of modalities this uses the \irref{nabla} axiom.
    Most cases are straightforward.

    \begin{caselist}
        \case{\(\lpossible{\stdoderatwo}\fml\)}
        Straightforward using \irref{nabla} and \eqref{enc:progseqpullin}.

        \case{\(\lnecessary{\stdoderatwo}\fml\)} Straightforward using \irref{nabla} and \eqref{enc:progseqpullin}.

        \case{\(\flfp{\pvarb}{\fml}\)} Use the induction hypothesis to show that \(\forallprogcode{\fmlb}\limply \fmlreplacepvarpby[\flfp{\pvarb}{\fml}]{\pvar}{\fmlc\land\forallprogcode{\fmlb}}\) is a pre-fixpoint of \(\fmlreplacepvarby[\fml]{\pvar}{\fmlc}\).
        The claim is then derived with the \irref{muind} rule.

        \case{\(\fgfp{\pvarb}{\fml}\)} By induction hypothesis
        \[\provdlmu(\forallprogcode{\fmlb}\land \fmlreplacepvarby[{\fmlreplacepvarby[\fml]{\pvar}{\fmlc}}]{\pvarb}{\fgfp{\pvarb}{\fmlreplacepvarby[\fml]{\pvar}{\fmlc}}})\limply \fmlreplacepvarby[{\fmlreplacepvarby[\fml]{\pvar}{\fmlc\land \forallprogcode{\fmlb}}}]{\pvarb}{\fgfp{\pvarb}{\fmlreplacepvarby[\fml]{\pvar}{\fmlc}}\land\forallprogcode{\fmlb}}\]
        By applying axiom \irref{fp} on the left hand side followed by \irref{muind} the desired formula is derived.
    \end{caselist}

    By induction on the negation normal form of a \dLmurat formula \(\fml\) define the equivalent \systemicdlmus formula \(\reccode{\fml}\) as follows
    \begin{aligntable}[2]
        \renewcommand{\eqsym}{\synequiv}
        \begin{align*}
            \nextit{\reccode{\pvar}}{\pvar}
            \nextit{\reccode{\lpossible{\stdoderatwo}\fml}}{\lpossible{\stdoderat}\reccode{\fml}}
            \nextit{\reccode{\atfml}}{\atfml}
            \nextit{\reccode{\lnecessary{\stdoderatwo}\fml}}{\lnecessary{\stdoderat}\reccode{\fml}}
            \nextit{\reccode{\fml\lor\fmlb}}{\reccode{\fml}\lor\reccode{\fmlb}}
            \nextit{\reccode{\flfp{\pvar}{\fml}}}{\lforall{\ivarforpvar{\pvar}}{(\forallprogcodep{\fmlreplacepvarby[\reccode{\fml}]{\pvar}{\predofivar{{\pvar}}}\limply\predofivar{{\pvar}}}\limply\predofivar{{\pvar}})}}
            \nextit{\reccode{\fml\land\fmlb}}{\reccode{\fml}\land\reccode{\fmlb}}
            \lastit{\reccode{\fgfp{\pvar}{\fml}}}{\lexists{\ivarforpvar{\pvar}}{(\forallprogcodep{\predofivar{{\pvar}}\limply\fmlreplacepvarby[\reccode{\fml}]{\pvar}{\predofivar{{\pvar}}}}\land\predofivar{{\pvar}})}}
        \end{align*}%
    \end{aligntable}%
    where \(\atfml\) is an atomic formula or its negation.

    Now by induction on the formula \(\fml\), it is shown that \(\provdlmu\reccode{\fml}\limply\fml\).
    The equivalence then derives immediately since \(\provdlmu\reccode{\lnot\fml}\lbisubjunct\lnot\reccode{\fml}\).
    The only two interesting cases of the induction are the fixpoints.

    \begin{caselist}
        \case{\(\flfp{\pvar}{\fml}\)}
        Abbreviate \(\eta\synequiv\forallprogcodep{\fmlreplacepvarby[\reccode{\fml}]{\pvar}{\predofivar{{\pvar}}}\limply\predofivar{{\pvar}}}\) and \(\gamma\synequiv \lforall{\ivarb}(\lpossible{\seqasvar{\ivarsall}{\ivarb}}\predofivar{{\pvar}}\lbisubjunct \lexistsrat{\progcode}\applyprogcodetoprogp[\progcode]{\varisseq{\ivarb}{\ivarsall}\land\flfp{\pvar}{\fml}})\)
        where \(\ivarb\) is a fresh variable.
        First observe \(\provdlmu\gamma\limply(\predofivar{{\pvar}}\lbisubjunct \flfp{\pvar}{\fml})\).
        To see this note that
        \((\varisseq{\ivarb}{\ivarsall}\land\gamma)\limply(\predofivar{{\pvar}}\lbisubjunct\lexistsrat{\progcode}\applyprogcodetoprogp[\progcode]{\varisseq{\ivarb}{\ivarsall}\land\flfp{\pvar}{\fml}})\) is provable.
        Then
        \[\provdlmu (\varisseq{\ivarb}{\ivarsall}\land\gamma)\limply(\predofivar{{\pvar}}\lbisubjunct\lexistsrat{\progcode}\applyprogcodetoprogp[\progcode]{\varisseq{\ivarb}{\ivarsall}\land\lpossible{\seqasvar{\ivarsall}{\ivarb}}\flfp{\pvar}{\fml}})\]
        and because all free variables in \(\fml\) are in \(\ivarsall\) also
        \(\provdlmu (\varisseq{\ivarb}{\ivarsall}\land\gamma)\limply(\predofivar{{\pvar}}\lbisubjunct\lpossible{\seqasvar{\ivarsall}{\ivarb}}\flfp{\pvar}{\fml})\).
        Then \(\provdlmu\gamma\limply(\predofivar{{\pvar}}\lbisubjunct \flfp{\pvar}{\fml})\) is also provable.

        Secondly observe \(\provdlmu\gamma\limply\eta\).
        From the previous observation together with \(\provdlmu\gamma\limply\forallprogcode{\gamma}\) and monotonicity of \(\forallprogcode{}\) it follows that \(\provdlmu\gamma\limply\forallprogcodep{\predofivar{{\pvar}}\lbisubjunct \flfp{\pvar}{\fml}}\).
        Again by \(\provdlmu\gamma\limply\forallprogcode{\gamma}\) and monotonicity of \(\forallprogcode{}\) the desired implication \(\gamma\limply\eta\) reduces to showing
        \[\provdlmu (\forallprogcodep{\predofivar{{\pvar}}\lbisubjunct \flfp{\pvar}{\fml}}\land\fmlreplacepvarby[\reccode{\fml}]{\pvar}{\predofivar{{\pvar}}})\limply\predofivar{{\pvar}}.\]
        By \eqref{enc:progseqpullin} and monotonicity of \(\fml\) in \(\pvar\) this reduces to \(\fmlreplacepvarby[\reccode{\fml}]{\pvar}{\flfp{\pvar}{\fml}}\limply\flfp{\pvar}{\fml}\), which is provable by \irref{fp}.

        By the observations of the previous two paragraphs \(\provdlmu(\gamma\land(\eta\limply\predofivar{{\pvar}}))\limply\flfp{\pvar}{\fml}.\)
        Using the separation axiom (\Cref{seplem} in \Cref{sec:apptreecoding}) it can be proved that \(\provdlmu\lexists{\ivarforpvar{\pvar}}\gamma\). Hence
        \(\provdlmu\lforall{\ivarforpvar{\pvar}}(\eta\limply\predofivar{{\pvar}})\limply\flfp{\pvar}{\fml}\)
        is derivable as required.

        \case{\(\fgfp{\pvar}{\fml}\)}
        Abbreviate \(\eta\synequiv\forallprogcodep{\fmlb}\land\predofivar{{\pvar}}\)
        and \(\fmlb\synequiv\predofivar{{\pvar}}\limply\fmlreplacepvarby[\reccode{\fml}]{\pvar}{\predofivar{{\pvar}}}\).
        By instantiating with the empty program sequence and using the induction hypothesis
        \(\provdlmu\eta\limply\fmlreplacepvarby[\fml]{\pvar}{\predofivar{{\pvar}}}\)
        derives. With \eqref{enc:progseqpullin}
        then
        \(\provdlmu\eta\limply\fmlreplacepvarby[\fml]{\pvar}{\eta}.\)
        By monotonicity of \(\fml\) in \(\pvar\), \irref{existsaxiom} and \irref{existsrule}, it follows that
        \(\provdlmu\lexists{\ivarforpvar{\pvar}}{\eta}\limply\fmlreplacepvarby[\fml]{\pvar}{\lexists{\ivarforpvar{\pvar}}{\eta}}.\)
        By \irref{muind} conclude
        \({\provdlmu\lexists{\ivarforpvar{\pvar}}{\eta}\limply\fgfp{\pvar}{\fml}}\).
    \end{caselist}
\end{proofE}
The proof of the equivalence of the reduction relies on the fixpoint axiomatization of differential equations \irref{nabla} to handle ODEs completely.
This is subtle, as axiom \irref{nabla} introduces \emph{unrestricted} nondeterministic choice even for rational-time ODEs and thus leaves the rational fragment \dGLrat.
However as \irref{nabla} is only used locally, the inductive argument is possible.

The same proof reduces other fragments of \dGL to \systemicdlmus.
Instead of restricting the game to force the players to make rational-valued choices, other restrictions on the strategies such as computability, definability or continuity also ensure equiexpressiveness.
Important is merely, that they can be enforced syntactically and ensure that games are countably branching.
For any such restricted strategies, the corresponding fixpoints can be represented by \ivarname{s}, representing the strategically-reachable states of the fixpoint.
Hence any game in such a fragment can be reduced to a single-player game by the argument from \Cref{thm:expressivenessreduction}.

\subsection{\DifferentialGameLogic and \DifferentialDynamicLogic}

Restricting to rationally played games, the adversarial dynamics of \dGL add no expressive or deductive power.

\begin{theoremE}[][normal]
    \Differentialdynamiclogic \dL and \rationaldifferentialgamelogic \dGLrat are equiexpressive and equivalent.
\end{theoremE}

\begin{proofE}
    Equiexpressiveness for formulas follows from \Cref{thm:expressivenessreduction}, since \systemicdlmus and \dL, as well as \dLmurat and \dGLrat are equiexpressive.
    Equivalence follows as the formula equivalence of \Cref{thm:expressivenessreduction} is proved syntactically.
\end{proofE}

An important consequence is the alternation hierarchy collapse for \dGLrat.
This property of \dGLrat is of practical interest, as the main algorithmic challenges of the propositional \(\mu\)-calculus stem from the strict alternation of fixpoints \cite{DBLP:conf/lics/EmersonL86}.
\begin{corollary}[\dGLrat Alternation Hierarchy]
    Any \dGLrat formula is equivalent to a formula without nested repetitions.
\end{corollary}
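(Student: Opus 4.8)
The plan is to deduce the collapse from the reduction of \Cref{thm:expressivenessreduction} rather than to manipulate nested repetitions in the game syntax directly. First I would use the equivalence of \dGLrat and \dLmurat (the rational instance of the game/fixpoint equivalence of \Cref{sec:equivalence}, under which an Angelic repetition $\garepeat{\game}$ is a least fixpoint and the dual Demonic repetition $\gdual{(\garepeat{\gdual{\game}})}$ a greatest fixpoint) to rewrite the given \dGLrat formula as a provably equivalent \dLmurat formula $\fml$ without free \pvarname{s}. Applying \Cref{thm:expressivenessreduction} then yields a \systemicformula formula $\fmlb$ with $\provdlmu \fml \lbisubjunct \fmlb$. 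The decisive point is the shape of $\reccode{\fml}$: the reduction replaces each fixpoint $\flfp{\pvar}{\cdot}$ and $\fgfp{\pvar}{\cdot}$ by a \emph{first-order} quantifier over a single real coding the reachable states, so that the genuinely alternating fixpoints of $\fml$ disappear entirely and the only loops reintroduced are the universal natural-number quantifiers hidden in $\forallprogcode{\cdot}$ and in the sequence coding.

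Since all these residual loops arise from universal quantification over $\naturals$, they are of one and the same (Demonic) polarity; the existential content of $\fml$ has been pushed into ordinary first-order quantifiers $\lexists{\cdot}$, which are nondeterministic assignments and not repetitions. Reading $\fmlb$ back as a \dGLrat formula through the equivalence of \dL and \dGLrat established just above, the only repetitions that can occur nested inside one another are therefore same-polarity, and I would flatten these by pairing together the quantified indices and applying the diagonal fixpoint law $\provdlmu \flfp{\pvar}{\flfp{\pvarb}{\fml}} \lbisubjunct \flfp{\pvar}{\fmlreplacepvarby[\fml]{\pvarb}{\pvar}}$ and its $\nu$-dual, both derivable from \irref{fp} and \irref{muind}. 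Iterating this amalgamation over the loop-nesting structure leaves a \dGLrat formula in which no repetition lies within the scope of another.

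The step I expect to be the main obstacle is justifying that no \emph{alternating} nested repetition survives the reduction. Propositionally this is simply false, as the $\mu$-calculus alternation hierarchy is strict \cite{DBLP:journals/mst/BerwangerGL07}, so the argument cannot be purely syntactic and must genuinely rely on real arithmetic under rational play: it is exactly the coding of reachable fixpoint states by single reals in \Cref{thm:expressivenessreduction} that dissolves the alternation. Making this precise requires tracking how a test $\gtest{\lnecessary{\garepeat{\gamec}}\psi}$ embeds a box-loop inside a diamond-loop in the source formula and verifying that every such configuration is coded away, while confirming that the primitives of the sequence coding and the \irref{nabla}-expansion of ODE reachability contribute only simple, same-polarity loops amenable to the diagonal collapse. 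Carrying out the flattening on the \dLmurat representation, where the diagonal law is a clean provable identity, before transporting the normal form back to \dGLrat is what keeps the bookkeeping tractable.
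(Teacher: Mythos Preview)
Your overall plan is in the right direction, but it overcomplicates the argument and leaves a gap in the final flattening/transport step. The paper treats the corollary as immediate from the preceding theorem, and the observation that makes it immediate is simpler than the diagonal-law manoeuvre you propose: every repetition introduced by the coding primitives can be chosen to be a \emph{closed} predicate. Concretely, expand $\lforallnat{n}\psi$ as $\lforall{n}(n\in\naturals \limply \psi)$ with the loop confined to the closed formula $n\in\naturals$, rather than as $\lnecessary{\gassign{n}{0};\garepeat{(\gassign{n}{n+1})}}\psi$, and likewise for $\lexistsnat{\cdot}$ and the sequence-access predicate $\realat{\cdot}{\cdot}{\cdot}$. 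Inspecting the definition of $\reccode{\cdot}$, the recursive occurrence $\reccode{\text{subformula}}$ is only ever placed under first-order quantifiers, propositional connectives, assignments, and atomic ODE modalities---never inside a $\garepeat{\cdot}$. With the closed-predicate convention the reduced formula therefore has no nested repetitions \emph{by construction}, and since any \dL formula is already literally a \dGLrat formula, no back-translation is needed.

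Two concrete issues with your flattening step. First, the premise ``all residual loops are Demonic'' is false: the membership predicate $\ivar\in\ivarb$ unfolds via $\lexistsnat{\cdot}$, which is an Angelic loop, so both polarities appear in the coding. Second, even if you succeeded in collapsing the $\mu$-calculus formula to alternation depth one via the diagonal law, your proposed route back to \dGLrat through the general $\folmus\to\fogls$ equivalence of \Cref{sec:equivalence} goes through the sabotage-game encoding, which itself contains nested repetitions; so the transport step would undo the work. Your identification of the ``main obstacle'' is also misplaced: $\reccode{\cdot}$ uniformly replaces \emph{every} fixpoint (least or greatest) by a first-order quantifier over a coding real, so no alternation can survive; the only thing to check is that the coding primitives do not reintroduce nesting, and as above they do not.
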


A consequence similar to \Cref{prop:completenesstransfer} is that \rationaldifferentialgamelogic is complete relative to \differentialdynamiclogic.
That is any formula that is valid in \differentialdynamiclogic is provable in \rationaldifferentialgamelogic from \dL tautologies.
Write \(\provdglplus\fml\) if there is a proof for \(\fml\) in the \dGL proof calculus from tautologies of \dL.

\begin{theoremE}
    \Rationaldifferentialgamelogic is complete relative to \differentialdynamiclogic: for any \dGLrat formula~\(\fml\)
    \[\valdgl\fml\quad\mimply\quad\provdglplus\fml\]
\end{theoremE}

By \cite{DBLP:conf/lics/Platzer12b} it follows that \dGLrat is complete relative to the purely discrete and the purely continuous fragment of \differentialdynamiclogic, further reducing the required axioms to prove \emph{all} valid \dGLrat formulas.
In addition, validity of \dGLrat is decidable relative to validity in either fragment.

\section{Related Work}\label{sec:relwork}

First-order fixpoint logics \cite{DBLP:journals/bsl/DawarG02} have been studied in finite model theory and descriptive complexity.
First-order least fixpoint logic was shown to have the same expressiveness as first-order logic with inflationary fixpoints \cite{DBLP:journals/apal/Kreutzer04}.
Versions of the first-order \(\mu\)-calculus have been introduced with less general modalities \cite{DBLP:journals/igpl/AfshariEL24}, \cite{DBLP:journals/iandc/CalvaneseGMP18}.
Inductive definitions play an important role in recursion theory and descriptive set theory \cite{Moschovakis74}.
First-order modal logics have been investigated in philosophy \cite{Fitting1998-MELFML}.

The relationship of the propositional \(\mu\)-calculus and game logic has been investigated \cite{DBLP:conf/focs/Parikh83} and they were shown to have different expressiveness \cite{DBLP:journals/mst/BerwangerGL07}.
This gap was closed via sabotage games \cite{DBLP:conf/lics/WafaP24}, which make the two equiexpressive and proof-theoretically equivalent, although with a non-elementary increase in description complexity.
While completeness of the propositional modal \(\mu\)-calculus has been proven \cite{DBLP:conf/lics/Walukiewicz95}, completeness for propositional game logic \cite{DBLP:conf/lics/EnqvistHKMV19} is open \cite{kloibhofer2023note}.

Applications of games and the modal \(\mu\)-calculus for hybrid systems have been considered in the literature \cite{DBLP:journals/corr/abs-0911-4833,DBLP:journals/tocl/Platzer15,DBLP:conf/hybrid/Davoren97}.
Relative completeness for \dL was shown, with an axiomatization of ODEs using Euler-approximation steps \cite{DBLP:conf/lics/Platzer12b} and relative completeness of \dGL to differentially expressive logics has been shown \cite{DBLP:journals/tocl/Platzer15}.
Completeness of \dL for differential equation invariance was established using syntactic Lie derivatives \cite{DBLP:conf/lics/PlatzerT18}.

Notions of robust properties for hybrid systems have been considered for hybrid automata with noise \cite{DBLP:conf/csl/Franzle99} and via \(\delta\)-decision procedures.
In contrast to these, the notion of robustness does not require a constant positive margin of safety (\(\varepsilon>0\) or \(\delta>0\)) at the boundary, but only a variable infinitesimal margin of safety.

\section{Conclusion}
This article unifies the descriptions of program properties via fixpoints and games by presenting and proving equiexpressive and equivalent general first-order game logic and first-order model $\mu$-calculus.
By instantiating this result \differentialgamelogic and \differentialmucalculus are proved equiexpressive and equivalent.
Through a fixpoint axiomatization of differential equations, continuous behaviour is reduced to a discrete fixpoint enabling strong completeness results for robust properties and a relative completeness result for rational differential game logic to simple program properties.
The underlying ideas of semantics- and provability-preserving roundtrip translations that enable syntactic proof transfer significantly simplify subtle arguments and are of more general interest.
The use of fixpoints in a form nondeterministic dynamic programming for ODE reachability is another promising technique to reducing description length.

While establishing equivalence itself is paramount, the difference in modes of expression may still give rise to significant computational reductions via the observed non-elementary concision of games compared to fixpoints \cite{DBLP:conf/lics/WafaP24}.

\balance
\bibliography{Diffix.bib}

@phdthesis{pauly01,
  author        = {Marc Pauly},
  date-added    = {2022-01-19 14:35:01 +0100},
  date-modified = {2022-01-19 14:36:43 +0100},
  school        = {ILLC Dissertation Series 2001-10 University of Amsterdam},
  title         = {Logic for Social Software},
  year          = {2001}
}

@article{DBLP:journals/bsl/DawarG02,
  author    = {Anuj Dawar and Yuri Gurevich},
  bibsource = {dblp computer science bibliography, https://dblp.org},
  biburl    = {https://dblp.org/rec/journals/bsl/DawarG02.bib},
  doi       = {10.2178/bsl/1182353853},
  journal   = {Bull. Symb. Log.},
  number    = {1},
  pages     = {65--88},
  timestamp = {Fri, 03 Jul 2020 17:55:57 +0200},
  title     = {Fixed point logics},
  url       = {https://doi.org/10.2178/bsl/1182353853},
  volume    = {8},
  year      = {2002}
}

@inproceedings{DBLP:conf/hybrid/Davoren97,
  author    = {Jennifer M. Davoren},
  bibsource = {dblp computer science bibliography, https://dblp.org},
  biburl    = {https://dblp.org/rec/conf/hybrid/Davoren97.bib},
  booktitle = {Hybrid Systems V, Proceedings of the Fifth International Workshop on Hybrid Systems, Notre Dame, IN, USA, September 11-13, 1997},
  doi       = {10.1007/3-540-49163-5\_3},
  editor    = {Panos J. Antsaklis and Wolf Kohn and Michael D. Lemmon and Anil Nerode and Shankar Sastry},
  pages     = {38--69},
  publisher = {Springer},
  series    = {LNCS},
  timestamp = {Tue, 14 May 2019 10:00:42 +0200},
  title     = {On Hybrid Systems and the Modal {\(\mathrm{\mu}\)}-calculus},
  url       = {https://doi.org/10.1007/3-540-49163-5\_3},
  volume    = {1567},
  year      = {1997}
}

@book{Moschovakis74,
  author        = {Moschovakis, Yiannis N.},
  date-modified = {2022-01-09 14:31:57 +0100},
  mrclass       = {02F29 (02F27 02F35 02H05 02B20 04A10)},
  mrnumber      = {0398810},
  mrreviewer    = {P. Stepanek},
  pages         = {x+218},
  publisher     = {North-Holland Publishing Co., Amsterdam-London; American Elsevier Publishing Co., Inc., New York},
  series        = {Studies in Logic and the Foundations of Mathematics, Vol. 77},
  title         = {Elementary induction on abstract structures},
  year          = {1974}
}

@inproceedings{DBLP:conf/focs/Parikh83,
  author    = {Parikh, Rohit},
  booktitle = {24th Annual Symposium on Foundations of Computer Science (FOCS)},
  title     = {Propositional game logic},
  year      = {1983},
  volume    = {},
  number    = {},
  pages     = {195-200},
  keywords  = {Logic;Game theory;Power generation economics;Law;Legal factors;Information science;Educational institutions;Calculus;Scheduling algorithm;Set theory},
  doi       = {10.1109/SFCS.1983.47},
  publisher = {IEEE},
  address   = {Tucson}
}

@inproceedings{DBLP:conf/lics/EnqvistHKMV19,
  author    = {Sebastian Enqvist and Helle Hvid Hansen and Clemens Kupke and Johannes Marti and Yde Venema},
  booktitle = {34th Annual {ACM/IEEE} Symposium on Logic in Computer Science, {LICS} 2019, Vancouver, BC, Canada, June 24-27, 2019},
  doi       = {10.1109/LICS.2019.8785676},
  pages     = {1--13},
  publisher = {{IEEE}},
  title     = {Completeness for Game Logic},
  year      = {2019},
  address   = {Vancouver}
}

@article{DBLP:journals/mst/BerwangerGL07,
  author       = {Dietmar Berwanger and
                  Erich Gr{\"{a}}del and
                  Giacomo Lenzi},
  title        = {The Variable Hierarchy of the {\(\mathrm{\mu}\)}-Calculus Is Strict},
  journal      = {Theory Comput. Syst.},
  volume       = {40},
  number       = {4},
  pages        = {437--466},
  year         = {2007},
  url          = {https://doi.org/10.1007/s00224-006-1317-8},
  doi          = {10.1007/S00224-006-1317-8},
  timestamp    = {Thu, 08 Jun 2017 09:00:32 +0200},
  biburl       = {https://dblp.org/rec/journals/mst/BerwangerGL07.bib},
  bibsource    = {dblp computer science bibliography, https://dblp.org}
}

@article{DBLP:journals/pjm/Tarski55,
  author  = {Tarski, Alfred},
  journal = {Pacific J. Math.},
  number  = {2},
  pages   = {285-309},
  title   = {A lattice-theoretical fixpoint theorem and its applications},
  volume  = {5},
  year    = {1955}
}

@book{tarski_decisionalgebra51,
  address   = {Berkeley},
  author    = {Tarski, Alfred},
  doi       = {10.1007/978-3-7091-9459-1_3},
  edition   = {2nd},
  publisher = {University of California Press},
  title     = {A Decision Method for Elementary Algebra and Geometry},
  year      = {1951}
}

@inproceedings{DBLP:conf/lics/Walukiewicz95,
  author    = {Igor Walukiewicz},
  title     = {Completeness of {K}ozen's Axiomatisation of the Propositional mu-Calculus},
  booktitle = {Proceedings, 10th Annual {IEEE} Symposium on Logic in Computer Science,
               San Diego, California, USA, June 26-29, 1995},
  pages     = {14--24},
  publisher = {{IEEE} Computer Society},
  year      = {1995},
  url       = {https://doi.org/10.1109/LICS.1995.523240},
  doi       = {10.1109/LICS.1995.523240},
  timestamp = {Fri, 24 Mar 2023 00:01:49 +0100},
  biburl    = {https://dblp.org/rec/conf/lics/Walukiewicz95.bib},
  bibsource = {dblp computer science bibliography, https://dblp.org},
  address   = {San Diego}
}

@misc{kloibhofer2023note,
  title         = {A note on the incompleteness of {Afshari} \& {Leigh's} system {Clo},
},
  author        = {Johannes Kloibhofer},
  year          = {2023},
  eprint        = {2307.06846},
  archiveprefix = {arXiv},
  primaryclass  = {math.LO},
  doi           = {10.48550/arXiv.2307.06846}
}

@article{DBLP:journals/tocl/Platzer15,
  author    = {Andr{\'{e}} Platzer},
  title     = {Differential Game Logic},
  journal   = {{ACM} Trans. Comput. Log.},
  volume    = {17},
  number    = {1},
  pages     = {1},
  year      = {2015},
  url       = {https://doi.org/10.1145/2817824},
  doi       = {10.1145/2817824},
  timestamp = {Fri, 10 Jun 2022 10:35:17 +0200},
  biburl    = {https://dblp.org/rec/journals/tocl/Platzer15.bib},
  bibsource = {dblp computer science bibliography, https://dblp.org}
}

@ARTICLE{DBLP:journals/tocl/Platzer17,
  author    = {Andr{\'e} Platzer},
  title     = {Differential Hybrid Games},
  journal   = {{ACM} Trans. Comput. Log.},
  volume    = {18},
  number    = {3},
  year      = {2017},
  pages     = {19:1-19:44},
  doi       = {10.1145/3091123},
  issn      = {1529-3785},
}

@misc{arxivversion,
  title         = {Complete Game Logic with Sabotage},
  author        = {{Abou El Wafa}, Noah and Andr{\'{e}} Platzer},
  year          = {2024},
  eprint        = {2404.09873},
  archiveprefix = {arXiv}
}

@inproceedings{DBLP:conf/lics/WafaP24,
  author    = {Noah {Abou El Wafa} and
               Andr{\'{e}} Platzer},
  editor    = {Pawel Sobocinski and
               Ugo Dal Lago and
               Javier Esparza},
  title     = {Complete Game Logic with Sabotage},
  booktitle = {Proceedings of the 39th Annual {ACM/IEEE} Symposium on Logic in Computer
               Science, {LICS} 2024, Tallinn, Estonia, July 8-11, 2024},
  pages     = {1:1--1:15},
  publisher = {{ACM}},
  year      = {2024},
  url       = {https://doi.org/10.1145/3661814.3662121},
  doi       = {10.1145/3661814.3662121},
  timestamp = {Mon, 12 Aug 2024 16:03:21 +0200},
  biburl    = {https://dblp.org/rec/conf/lics/WafaP24.bib},
  bibsource = {dblp computer science bibliography, https://dblp.org}
}

@article{DBLP:journals/computing/Clarke79,
  author       = {Edmund M. Clarke},
  title        = {Program invariants as fixedpoints},
  journal      = {Computing},
  volume       = {21},
  number       = {4},
  pages        = {273--294},
  year         = {1979},
  url          = {https://doi.org/10.1007/BF02248730},
  doi          = {10.1007/BF02248730},
  timestamp    = {Thu, 06 Aug 2020 13:51:01 +0200},
  biburl       = {https://dblp.org/rec/journals/computing/Clarke79.bib},
  bibsource    = {dblp computer science bibliography, https://dblp.org}
}

@article{DBLP:journals/apal/Kreutzer04,
  author       = {Stephan Kreutzer},
  title        = {Expressive equivalence of least and inflationary fixed-point logic},
  journal      = {Ann. Pure Appl. Log.},
  volume       = {130},
  number       = {1-3},
  pages        = {61--78},
  year         = {2004},
  url          = {https://doi.org/10.1016/j.apal.2004.02.001},
  doi          = {10.1016/J.APAL.2004.02.001},
  timestamp    = {Fri, 21 Feb 2020 21:18:56 +0100},
  biburl       = {https://dblp.org/rec/journals/apal/Kreutzer04.bib},
  bibsource    = {dblp computer science bibliography, https://dblp.org}
}

@inproceedings{DBLP:conf/lics/Platzer12b,
	pdf = {pub/completealign.pdf},
	slides = {pub/completealign-slides.pdf},
	TR = {DBLP:conf/lics/Platzer12b:TR},
	author = {André Platzer},
	title = {The Complete Proof Theory of
               Hybrid Systems},
	booktitle = {LICS},
	year = {2012},
	pages = {541-550},
	doi = {10.1109/LICS.2012.64},
	longbooktitle = {Proceedings of the 27th Annual ACM/IEEE
               Symposium on Logic in Computer Science, LICS
               2012, Dubrovnik, Croatia, June 25???28, 2012},
	publisher = {IEEE},
	isbn = {978-1-4673-2263-8},
	keywords = {proof theory, hybrid dynamical systems,
               differential dynamic logic, axiomatization,
               completeness},
	abstract = {
    Hybrid systems are a fusion of continuous dynamical
    systems and discrete dynamical systems. They freely
    combine dynamical features from both worlds. For that
    reason, it has often been claimed that hybrid systems
    are more challenging than continuous dynamical systems
    and than discrete systems. We now show that,
    proof-theoretically, this is not the case. We present a
    complete proof-theoretical alignment that interreduces
    the discrete dynamics and the continuous dynamics of
    hybrid systems. We give a sound and complete
    axiomatization of hybrid systems relative to continuous
    dynamical systems and a sound and complete
    axiomatization of hybrid systems relative to discrete
    dynamical systems. Thanks to our axiomatization,
    proving properties of hybrid systems is exactly the
    same as proving properties of continuous dynamical
    systems and again, exactly the same as proving
    properties of discrete dynamical systems. This
    fundamental cornerstone sheds light on the nature of
    hybridness and enables flexible and provably perfect
    combinations of discrete reasoning with continuous
    reasoning that lift to all aspects of hybrid systems
    and their fragments.}
}

@inproceedings{DBLP:conf/stoc/HarelMP77,
  author       = {David Harel and
                  Albert R. Meyer and
                  Vaughan R. Pratt},
  editor       = {John E. Hopcroft and
                  Emily P. Friedman and
                  Michael A. Harrison},
  title        = {Computability and Completeness in Logics of Programs (Preliminary
                  Report)},
  booktitle    = {Proceedings of the 9th Annual {ACM} Symposium on Theory of Computing, STOC,
                  May 4-6, 1977, Boulder, Colorado, {USA}},
  pages        = {261--268},
  publisher    = {{ACM}},
  year         = {1977},
  url          = {https://doi.org/10.1145/800105.803416},
  doi          = {10.1145/800105.803416},
  timestamp    = {Tue, 06 Nov 2018 11:07:04 +0100},
  biburl       = {https://dblp.org/rec/conf/stoc/HarelMP77.bib},
  bibsource    = {dblp computer science bibliography, https://dblp.org}
}

@inproceedings{DBLP:conf/cade/Platzer18,
  author       = {Andr{\'{e}} Platzer},
  editor       = {Didier Galmiche and
                  Stephan Schulz and
                  Roberto Sebastiani},
  title        = {Uniform Substitution for Differential Game Logic},
  booktitle    = {Automated Reasoning - 9th International Joint Conference, {IJCAR}
                  2018, Oxford,
                  UK, July 14-17, 2018, Proceedings},
  series       = {LNCS},
  volume       = {10900},
  pages        = {211--227},
  publisher    = {Springer},
  year         = {2018},
  url          = {https://doi.org/10.1007/978-3-319-94205-6\_15},
  doi          = {10.1007/978-3-319-94205-6\_15},
  timestamp    = {Mon, 28 Aug 2023 21:17:45 +0200},
  biburl       = {https://dblp.org/rec/conf/cade/Platzer18.bib},
  bibsource    = {dblp computer science bibliography, https://dblp.org}
}

@book{Fitting1998-MELFML,
	address = {Dordrecht, Netherland},
	author = {Melvin Fitting and Richard L. Mendelsohn},
	editor = {},
	publisher = {Kluwer Academic Publishers},
	title = {First-Order Modal Logic},
	year = {1998}
}

@inproceedings{DBLP:conf/lics/PlatzerT18,
  author       = {Andr{\'{e}} Platzer and
                  Yong Kiam Tan},
  editor       = {Anuj Dawar and
                  Erich Gr{\"{a}}del},
  title        = {Differential Equation Axiomatization: The Impressive Power of Differential
                  Ghosts},
  booktitle    = {Proceedings of the 33rd Annual {ACM/IEEE} Symposium on Logic in Computer
                  Science, {LICS} 2018, Oxford, UK, July 09-12, 2018},
  pages        = {819--828},
  publisher    = {{ACM}},
  year         = {2018},
  url          = {https://doi.org/10.1145/3209108.3209147},
  doi          = {10.1145/3209108.3209147},
  timestamp    = {Thu, 08 Oct 2020 17:58:17 +0200},
  biburl       = {https://dblp.org/rec/conf/lics/PlatzerT18.bib},
  bibsource    = {dblp computer science bibliography, https://dblp.org}
}

@article{DBLP:journals/jacm/PlatzerT20,
  author       = {Andr{\'{e}} Platzer and
                  Yong Kiam Tan},
  title        = {Differential Equation Invariance Axiomatization},
  journal      = {J. {ACM}},
  volume       = {67},
  number       = {1},
  pages        = {6:1--6:66},
  year         = {2020},
  url          = {https://doi.org/10.1145/3380825},
  doi          = {10.1145/3380825},
  timestamp    = {Mon, 28 Aug 2023 21:30:19 +0200},
  biburl       = {https://dblp.org/rec/journals/jacm/PlatzerT20.bib},
  bibsource    = {dblp computer science bibliography, https://dblp.org}
}

@article{DBLP:journals/cacm/Hoare69,
  author       = {C. A. R. Hoare},
  title        = {An Axiomatic Basis for Computer Programming},
  journal      = {Commun. {ACM}},
  volume       = {12},
  number       = {10},
  pages        = {576--580},
  year         = {1969},
  url          = {https://doi.org/10.1145/363235.363259},
  doi          = {10.1145/363235.363259},
  timestamp    = {Wed, 14 Nov 2018 10:22:34 +0100},
  biburl       = {https://dblp.org/rec/journals/cacm/Hoare69.bib},
  bibsource    = {dblp computer science bibliography, https://dblp.org}
}

@article{DBLP:journals/corr/abs-0911-4833,
  author       = {Patricia Bouyer and
                  Thomas Brihaye and
                  Fabrice Chevalier},
  title        = {O-Minimal Hybrid Reachability Games},
  journal      = {Log. Methods Comput. Sci.},
  volume       = {6},
  number       = {1},
  year         = {2010},
  url          = {http://arxiv.org/abs/0911.4833},
  timestamp    = {Thu, 25 Jun 2020 21:29:12 +0200},
  biburl       = {https://dblp.org/rec/journals/corr/abs-0911-4833.bib},
  bibsource    = {dblp computer science bibliography, https://dblp.org}
}

@article{DBLP:journals/siamcomp/Cook78,
  author       = {Stephen A. Cook},
  title        = {Soundness and Completeness of an Axiom System for Program Verification},
  journal      = {{SIAM} J. Comput.},
  volume       = {7},
  number       = {1},
  pages        = {70--90},
  year         = {1978},
  url          = {https://doi.org/10.1137/0207005},
  doi          = {10.1137/0207005},
  timestamp    = {Mon, 28 Aug 2023 21:30:37 +0200},
  biburl       = {https://dblp.org/rec/journals/siamcomp/Cook78.bib},
  bibsource    = {dblp computer science bibliography, https://dblp.org}
}

@inproceedings{DBLP:conf/icalp/Harel78,
  author       = {David Harel},
  editor       = {Giorgio Ausiello and
                  Corrado B{\"{o}}hm},
  title        = {Arithmetical Completeness in Logics of Programs},
  booktitle    = {Automata, Languages and Programming, Fifth Colloquium, Udine, Italy,
                  July 17-21, 1978, Proceedings},
  series       = {LNCS},
  volume       = {62},
  pages        = {268--288},
  publisher    = {Springer},
  year         = {1978},
  url          = {https://doi.org/10.1007/3-540-08860-1\_20},
  doi          = {10.1007/3-540-08860-1\_20},
  timestamp    = {Tue, 14 May 2019 10:00:44 +0200},
  biburl       = {https://dblp.org/rec/conf/icalp/Harel78.bib},
  bibsource    = {dblp computer science bibliography, https://dblp.org}
}

@inproceedings{DBLP:conf/lics/EmersonL86,
  author       = {E. Allen Emerson and
                  Chin{-}Laung Lei},
  title        = {Efficient Model Checking in Fragments of the Propositional Mu-Calculus
                  (Extended Abstract)},
  booktitle    = {Proceedings of the Symposium on Logic in Computer Science {(LICS}
                  '86), Cambridge, Massachusetts, USA, June 16-18, 1986},
  pages        = {267--278},
  publisher    = {{IEEE} Computer Society},
  year         = {1986},
  timestamp    = {Thu, 22 Jan 2015 10:44:19 +0100},
  biburl       = {https://dblp.org/rec/conf/lics/EmersonL86.bib},
  bibsource    = {dblp computer science bibliography, https://dblp.org}
}

@article{DBLP:journals/igpl/AfshariEL24,
  author       = {Bahareh Afshari and
                  Sebastian Enqvist and
                  Graham E. Leigh},
  title        = {Cyclic proofs for the first-order {\(\mathrm{\mu}\)}-calculus},
  journal      = {Log. J. {IGPL}},
  volume       = {32},
  number       = {1},
  pages        = {1--34},
  year         = {2024},
  url          = {https://doi.org/10.1093/jigpal/jzac053},
  doi          = {10.1093/JIGPAL/JZAC053},
  timestamp    = {Sun, 19 Jan 2025 14:53:53 +0100},
  biburl       = {https://dblp.org/rec/journals/igpl/AfshariEL24.bib},
  bibsource    = {dblp computer science bibliography, https://dblp.org}
}

@article{DBLP:journals/iandc/CalvaneseGMP18,
  author       = {Diego Calvanese and
                  Giuseppe De Giacomo and
                  Marco Montali and
                  Fabio Patrizi},
  title        = {First-order \emph{{\(\mu\)}}-calculus over generic transition systems
                  and applications to the situation calculus},
  journal      = {Inf. Comput.},
  volume       = {259},
  number       = {3},
  pages        = {328--347},
  year         = {2018},
  url          = {https://doi.org/10.1016/j.ic.2017.08.007},
  doi          = {10.1016/J.IC.2017.08.007},
  timestamp    = {Mon, 26 Jun 2023 20:56:33 +0200},
  biburl       = {https://dblp.org/rec/journals/iandc/CalvaneseGMP18.bib},
  bibsource    = {dblp computer science bibliography, https://dblp.org}
}

@article{Mostowski1957OnAG,
  title={On a generalization of quantifiers},
  author={Andrzej Wlodzimierz Mostowski},
  journal={Studies in logic and the foundations of mathematics},
  year={1957},
  volume={93},
  pages={311-335},
  url={https://api.semanticscholar.org/CorpusID:120339861}
}

@inproceedings{DBLP:conf/csl/Franzle99,
  author       = {Martin Fr{\"{a}}nzle},
  editor       = {J{\"{o}}rg Flum and
                  Mario Rodr{\'{\i}}guez{-}Artalejo},
  title        = {Analysis of Hybrid Systems: An Ounce of Realism Can Save an Infinity
                  of States},
  booktitle    = {Computer Science Logic, 13th International Workshop, {CSL} '99, 8th
                  Annual Conference of the EACSL, Madrid, Spain, September 20-25, 1999,
                  Proceedings},
  series       = {Lecture Notes in Computer Science},
  volume       = {1683},
  pages        = {126--140},
  publisher    = {Springer},
  year         = {1999},
  url          = {https://doi.org/10.1007/3-540-48168-0\_10},
  doi          = {10.1007/3-540-48168-0\_10},
  timestamp    = {Thu, 23 Jun 2022 19:56:51 +0200},
  biburl       = {https://dblp.org/rec/conf/csl/Franzle99.bib},
  bibsource    = {dblp computer science bibliography, https://dblp.org}
}

\iflongversion
    \newpage
    \nobalance

    \appendix

    \section{Free Variables, Bound Variables and Substitution}
    \label{appendixfreeandbound}

    \subsection{Free Variables in \FirstorderGameLogic}

    For a game of \firstordergamelogic define the set of \emph{necessarily bound variables}
    \begin{aligntable}[2]
        \begin{align*}
            \nextit{\mustboundvars{\gamesymbat{\ivarseq}{\termseq}}}{\ivarseq}
            \nextit{\mustboundvars{\gdual{\game}}}{\mustboundvars{\game}}
            \nextit{\mustboundvars{\gtest{\fml}}}{\emptyset}
            \nextit{\mustboundvars{\game\gachoice\gameb}}{\mustboundvars{\game}\cap\mustboundvars{\game}}
            \nextit{\mustboundvars{\garepeat{\game}}}{\emptyset}
            \lastit{\mustboundvars{\game\gcom\gameb}}{\mustboundvars{\game}\cup\mustboundvars{\gameb}}%
        \end{align*}%
    \end{aligntable}%
    The set of free variables of a \fogls formula and of a \fogls game are defined by simultaneous induction.
    For formulas:
    \begin{aligntable}[2]
        \begin{align*}
            \nextit{\freevars{\fml\land\fmlb}}{\freevars{\fml}\cup\freevars{\fmlb}}
            \nextit{\freevars{\lnot\fml}}{\freevars{\fml}\qquad}
            \lastit{\freevars{\lpossible{\game}\fml}}{\freevars{\game}\cup(\freevars{\fml}\setminus\mustboundvars{\game})\span\span}
        \end{align*}%
    \end{aligntable}%
    and for games:
    \begin{aligntable}[2]
        \renewcommand{\displayitem}[2]{&#1=#2}
        \begin{align*}
            \nextit{\freevars{\gamesymbat{\ivarseq}{\termseq}}}{\freevars{\termseq}}
            \nextit{\freevars{\gdual{\game}}}{\freevars{\game}}
            \nextit{\freevars{\game\gachoice\gameb}}{\freevars{\game}\cup\freevars{\gameb}}
            \nextit{\freevars{\gtest{\fml}}}{\freevars{\fml}}
            \nextit{\freevars{\game\gcom\gameb}}{\freevars{\game}\cup(\freevars{\gameb}\setminus\mustboundvars{\game})}
            \lastit{\freevars{\garepeat{\game}}}{\freevars{\game}}
        \end{align*}%
    \end{aligntable}%
    Again the crucial property of the free variables is the coincidence lemma.
    \begin{lemmaE}[Coincidence for \fogls][normal]\label{lem:coincidencegl}
        For every \emph{\fogls formula} \(\fml\) and \(\fostate,\fostateb\in\fostates\) such that \(\strestrvar{\fostate}{\freevars{\fml}}=\strestrvar{\fostateb}{\freevars{\fml}}\) then
        \[\foglsem{\fonstr}{\fostate}{\fml} \quad\iff\quad \foglsem{\fonstr}{\fostateb}{\fml}\]
    \end{lemmaE}
    \begin{proofE}
        Recall the definition of \(Z\)-closed sets of states from the proof of \Cref{lem:coincidencelmu}.
        Note that if \(\fostatessubset\) is \(Z\)-closed then it is \(Z'\)-closed for all \(Z'\supseteq Z\).
        By simultaneous induction on the definition of \fogls formulas and games prove the following
        \begin{enumerate}
            \item \(\glfden{\fml}\) is \(\freevars{\fml}\)-closed
            \item \(\glgden{\game}(\fostatessubset)\) is \(Z\)-closed if \(\fostatessubset\) is \(Z\cup\mustboundvars{\game}\)-closed and \(\freevars{\game}\subseteq Z\)
        \end{enumerate}

        \begin{caselist}
            \case{\(\atfml\)}
            Standard for atomic formulas.

            \case{\(\lnot\fml\)}
            As in the proof of \Cref{lem:coincidencelmu} the complement of a \(Z\)-closed set is \(Z\)-closed.

            \case{\(\fml\land\fmlb\)}
            As in the proof of \Cref{lem:coincidencelmu} the intersection of \(Z\)-closed sets is \(Z\)-closed.

            \case{\(\lpossible{\game}\fml\)}
            By the induction hypothesis on \(\fml\) the set \(\glfden{\fml}\) is \(\freevars{\fml}\)-closed and since \(\freevars{\lpossible{\game}\fml}\cup\mustboundvars{\game}\supseteq \freevars{\fml}\) the set \(\glfden{\fml}\) is also \((\freevars{\lpossible{\game}\fml}\cup\mustboundvars{\game})\)-closed.
            So by the induction hypothesis on \(\game\) for \(Z=\freevars{\lpossible{\game}\fml}\) the set \(\glgden{\game}(\glfden{\fml})\)
            ii \(\freevars{\lpossible{\game}\fml}\)-closed.

            \case{\(\gamesymbat{\ivarseq}{\termseq}\)}
            Suppose \( \strestrvar{\fostate}{Z}=\strestrvar{\fostateb}{Z}\) and \(\fostateb\in\glgden{\gamesymbat{\ivarseq}{\termseq}}(\fostatessubset)\).
            By definition of the semantics there is \(\pstatepre\in\fonstrint{\gamesymb}(\ltden[\fostateb]{\termseq})\) such that \(\replpstate[\fostateb]{\pstatepreatvar{\ivarseq}}\subseteq\fostatessubset\).
            Because \(\freevars{\termseq}=\freevars{\game}\subseteq Z\) also \(\ltden[\fostateb]{\termseq}=\ltden[\fostate]{\termseq}\).
            Let \(Z'=Z \cup \ivarseq\) and note that \( \strestrvar{(\replpstate[\fostateb]{\pstatepreatvar{\ivarseq}})}{Z'}=\strestrvar{(\replpstate[\fostate]{\pstatepreatvar{\ivarseq}})}{Z'}\).
            Since \(\fostatessubset\) is \((Z\cup\mustboundvars{\game})\)-closed, it follows that \(\replpstate[\fostate]{\pstatepreatvar{\ivarseq}}\subseteq\fostatessubset\).
            Hence, also \(\fostate\in\glfden{\gamesymbat{\ivarseq}{\termseq}}(\fostatessubset)\).

            \case{\(\gdual{\game}\)}
            Immediate by the induction hypothesis as the complement of a \(Z\)-closed set is \(Z\)-closed. (See proof of \Cref{lem:coincidencelmu}.)

            \case{\(\gtest{\fml}\)}
            The intersection \(\glgden{\gtest{\fml}}(\fostatessubset)=\glfden{\fml}\cap\fostatessubset\) is \(Z\)-closed as the intersection of two \(Z\)-closed sets, since \(\glfden{\fml}\) is \(Z\)-closed by induction hypothesis.

            \case{\(\game\gachoice\gameb\)}
            By induction hypothesis (noting that \(\fostatessubset\) is \(Z\cup(\mustboundvars{\game})\)-closed as \(Z\cup(\mustboundvars{\game\cup\gameb})\subseteq Z\cup \mustboundvars{\game}\)), it follows that \(\glgden{\game}(\fostatessubset)\) is \(Z\)-closed.
            As the union of two \(Z\)-closed sets, \(\glgden{\game\gachoice\gameb}(\fostatessubset)\) is \(Z\)-closed.

            \case{\(\game;\gameb\)}
            Observe that because \(\freevars{\game;\gameb}\subseteq Z\) also \(\freevars{\gameb}\subseteq Z\cup\mustboundvars{\game}\).
            And since \(\fostatessubset\) is \((Z\cup\mustboundvars{\game}\cup\mustboundvars{\gameb})\)-closed, the induction hypothesis on \(\gameb\) yields that \(\glgden{\gameb}(\fostatessubset)\) is \(Z\cup\mustboundvars{\game}\)-closed.
            So by the induction hypothesis on \(\game\) it follows that \(\glgden{\game}(\fostatessubset)\) is \(Z\)-closed.

            \case{\(\garepeat{\game}\)}
            Note that \(\fostatessubset\) is \(Z\)-closed by assumption.
            By the induction hypothesis and as the intersection of \(Z\)-closed sets is \(Z\)-closed, it follows that \(\glgden{\garepeat{\game}}(\fostatessubsetb)\cap\fostatessubset\) is \(Z\)-closed \emph{for all \(\fostatessubsetb\subseteq\fostates\)}.
            Since the set of \(Z\)-closed sets is also closed under arbitrary unions, it follows by induction on the fixpoint iterates of \(\fostatessubsetb\mapsto \glgden{\game}(\fostatessubsetb)\cap\fostatessubset\) that \(\glgden{\garepeat{\game}}(\fostatessubset)\) is \(Z\)-closed.
        \end{caselist}
    \end{proofE}

    Finally also define the bound variables of a game:
    \begin{aligntable}[2]
        \renewcommand{\displayitem}[2]{&#1=#2}
        \begin{align*}
            \nextit{\boundvars{\gamesymbat{\ivarseq}{\termseq}}}{\ivarseq}
            \nextit{\boundvars{\gdual{\game}}}{\boundvars{\game}}
            \nextit{\boundvars{\gtest{\fml}}}{\emptyset}
            \nextit{\boundvars{\game\gachoice\gameb}}{\boundvars{\game}\cup\boundvars{\game}}
            \nextit{\boundvars{\garepeat{\game}}}{{\boundvars{\game}}}
            \lastit{\boundvars{\game\gcom\gameb}}{\boundvars{\game}\cup\boundvars{\gameb}}%
        \end{align*}%
    \end{aligntable}%
    The crucial property of bound variables is that \emph{only} bound variables can capture free variables.
    This is made precise by:
    \begin{lemmaE}[Bound Effect][normal]
        For every \fogls game \(\game\)
        \[\glgden{\game}(\fostatessubset)=
            \{\fostate:\fostate\in \glgden{\game}(\fostatessubsetboundeff{\game}{\fostate})\}\]
        where \(\fostatesame{\game}{\fostate}=\{\fostateb\in\fostates : \mforall{\ivarb\notin\boundvars{\game}} \strestrvar{\fostate}{\ivarb}= \strestrvar{\fostateb}{\ivarb}\}\).
    \end{lemmaE}
    \begin{proofE}
        Note \(\glgden{\game}(\fostatessubsetboundeff{\game}{\fostate})\subseteq \glgden{\game}(\fostatessubset)\) by monotonicity of games.
        This shows the \(\subseteq\) inclusion of the lemma.

        Note that \(\fostatesame{\fostate}{\game}\subseteq\fostatesame{\fostate}{\gameb}\) if \(\boundvars{\game}\subseteq\boundvars{\gameb}\).
        For the reverse inclusion show by induction on games~\(\game\) that
        \begin{enumerate}
            \item \(\glgden{\game}(\fostatessubset)=
                  \{\fostate:\fostate\in \glgden{\game}(\fostatessubsetboundeff{\game}{\fostate})\}\)
            \item \(\glgden{\gdual{\game}}(\fostatessubset)=
                  \{\fostate:\fostate\in \glgden{\gdual{\game}}(\fostatessubsetboundeff{\game}{\fostate})\}\)
        \end{enumerate}

        \begin{caselist}
            \case{\(\gamesymbat{\ivarbseq}{\termseq}\)}
            Suppose \(\fostate\in\glgden{\gamesymbat{\ivarbseq}{\termseq}}(\fostatessubset)\).
            By definition of the semantics there is \(\pstatepre\in\fonstrint{\gamesymb}(\ltden[\fostate]{\termseq})\) such that \(\replpstate[\fostate]{\pstatepreatvar{\ivarbseq}}\subseteq\fostatessubset\).
            Clearly \(\replpstate[\fostate]{\pstatepreatvar{\ivarbseq}}\subseteq\fostatessubsetboundeff{\game}{\fostate}\), since \(\ivarbseq\subseteq\boundvars{\game}\).
            Hence, also \(\fostate\in\glgden{\gamesymbat{\ivarbseq}{\termseq}}(\fostatessubsetboundeff{\game}{\fostate})\).

            For the dual case suppose \(\fostate\notin\glgden{\gamesymbat{\ivarbseq}{\termseq}}(\fostatessubsetcomp)\).
            By definition of the semantics \(\replpstate[\fostate]{\pstatepreatvar{\ivarbseq}}\cap\fostatessubset\neq\emptyset\) for all \(\pstatepre\in\fonstrint{\gamesymb}(\ltden[\fostate]{\termseq})\).
            Clearly \(\replpstate[\fostate]{\pstatepreatvar{\ivarbseq}}\cap(\fostatessubsetboundeff{\game}{\fostate})\neq\emptyset\), since \(\ivarbseq\subseteq\boundvars{\game}\).
            Hence, also \(\fostate\in\glgden{\gdualp{\gamesymbat{\ivarbseq}{\termseq}}}(\fostatessubsetboundeff{\game}{\fostate})\).

            \case{\(\gdual{\game}\)}
            Immediate by the inductive hypothesis on \(\game\).

            \case{\(\gtest{\fml}\)}
            Both the case for tests and their dual are straightforward consequences of the definition.

            \case{\(\game\gachoice\gameb\)}
            Suppose \(\fostate\in\glgden{\game\gachoice\gameb}(\fostatessubset)\).
            By induction hypothesis \(\fostate\in\glgden{\game}(\fostatessubsetboundeff{\game}{\fostate})\cup \glgden{\game}(\fostatessubsetboundeff{\gameb}{\fostate})\).
            By monotonicity of games, it follows that \(\fostate\in\glgden{\game\gachoice\gameb}(\fostatessubsetboundeff{\game\gachoice\gameb}{\fostate})\).
            The dual case is similar.

            \case{\(\game;\gameb\)}
            Suppose \(\fostate\in\glgden{\game;\gameb}(\fostatessubset)\), then by induction hypothesis \(\fostate\in\glgden{\game}\fostatessubsetboundeff[(\glgden{\gameb}(\fostatessubset)]{\game}{\fostate})\).
            Now say \(\fostateb\in\fostatessubsetboundeff[\glgden{\gameb}(\fostatessubset)]{\game}{\fostate}\), then \(\fostateb\in\fostatesame{\game;\gameb}{\fostate}\) and \(\fostateb\in \glgden{\gameb}(\fostatessubsetboundeff{\gameb}{\fostateb})\).
            By monotonicity it follows that \(\fostateb\in \glgden{\gameb}(\fostatessubsetboundeff{\game;\gameb}{\fostateb})\) and hence  \(\fostateb\in \glgden{\gameb}(\fostatessubsetboundeff{\game;\gameb}{\fostate})\).
            So \(\fostate\in\glgden{\game}(\glgden{\gameb}(\fostatessubsetboundeff[\fostatessubset]{\game}{\fostate}))\).
            The dual case is similar.

            \case{\(\garepeat{\game}\)}
            By induction on ordinals \(\gamma\) prove that the fixpoint iterates satisfy \(\mlfppi{\gamma}{\fostatessubsetb}{\fostatessubset\cup\glgden{\game}(\fostatessubsetb)}\subseteq\glgden{\garepeat{\game}}(\fostatessubset)\).
            The case that \(\gamma=0\) and the case that \(\gamma\) is a limit ordinal are immediate.
            For successor ordinals by the induction hypothesis on \(\gamma\)
            \begin{align*}
                \mlfppi{\gamma+1}{\fostatessubsetb}{\fostatessubset\cup\glgden{\game}(\fostatessubsetb)}
                 & =
                \fostatessubset\cup\glgden{\game}(\mlfppi{\gamma}{\fostatessubsetb}{\fostatessubset\cup\glgden{\game}(\fostatessubsetb)})
                \subseteq
                \fostatessubset\cup\glgden{\game}(\glgden{\garepeat{\game}}(\fostatessubset))
                =\glgden{\garepeat{\game}}(\fostatessubset)
            \end{align*}
            The dual case is similar.
        \end{caselist}
    \end{proofE}
    Also define the bound variables of a formula of \fogls, which are all the variables that are bound \emph{anywhere} in \(\fml\):
    \begin{aligntable}[2]
        \renewcommand{\displayitem}[2]{&#1=#2}
        \begin{align*}
            \nextit{\boundvars{\atfml}}{\emptyset}
            \nextit{\boundvars{\fml\land\fmlb}}{\boundvars{\fml}\cup\boundvars{\fmlb}}
            \nextit{\boundvars{\lnot{\fml}}}{\boundvars{\fml}}
            \lastit{\boundvars{\lpossible{\game}\fml}}{\boundvars{\game}\cup\boundvars{\fml}}%
        \end{align*}%
    \end{aligntable}

    \subsection{Substitution in \FirstorderGameLogic}

    Define substitution for \fogls by induction on the definition.
    Again the substitution of a variable by itself is defined as \(\fmlreplacevarby{\ivar}{\ivar}\synequiv\fml\).
    For \fogls formulas and games define when \(\ivar\not\equiv\term\) the substitution:
    \begin{aligntable}[2]
        \begin{align*}
            \nextit{\fmlreplacevarpby[\lnot\fml]{\ivar}{\term}}{\lnot\fmlreplacevarby[\fml]{\ivar}{\term}\quad}
            \nextit{\fmlreplacevarpby[\fml\land\fmlb]{\ivar}{\term}}{\fmlreplacevarby[\fml]{\ivar}{\term}\land\fmlreplacevarby[\fmlb]{\ivar}{\term}}
            \nextit{\fmlreplacevarpby[\lpossible{\game}{\fml}]{\ivar}{\term}}{
                \begin{cases}
                    \lpossible{\gamereplacevarby[{\game}]{\ivar}{\term}}\fml                                   & \text{if } \ivar\in\mustboundvars{\game}                                                    \\
                    \lpossible{\gamereplacevarby[{\game}]{\ivar}{\term}}\fmlreplacevarby[{\fml}]{\ivar}{\term} & \text{if } \freevars{\term}\cap\boundvars{\game}=\emptyset\mand\ivar\notin\boundvars{\game} \\
                    \lpossible{\gassign{\ivar}{\term};\game}\fml                                               & \text{otherwise}
                \end{cases}\span\span\span
            }
            \nexti
            \nextit{\gamereplacevarpby[{\gamesymbat{\ivarbseq}{\termbseq}}]{\ivar}{\term}}{\gamesymbat{\ivarbseq}{\fmlreplacevarby[\termb]{\ivar}{\term}}\quad}
            \nextit{\gamereplacevarpby[{\gtest{\fml}}]{\ivar}{\term}}{\fmlreplacevarby[\gtest{\fml}]{\ivar}{\term}}
            \nextit{\gamereplacevarpby[{\game\cup\gameb}]{\ivar}{\term}}{\gamereplacevarby[{\game}]{\ivar}{\term}\cup\gamereplacevarby[{\gameb}]{\ivar}{\term}}
            \nextit{\gamereplacevarpby[{\gdual{\game}}]{\ivar}{\term}}{\gdualp{\gamereplacevarby[{{\game}}]{\ivar}{\term}}}
            \nextit{\gamereplacevarpby[{\game;\gameb}]{\ivar}{\term}}{
                \begin{cases}
                    \gamereplacevarby[{\game}]{\ivar}{\term};\gameb                                    & \text{if } \ivar\in\mustboundvars{\game}                                                    \\
                    \gamereplacevarby[{\game}]{\ivar}{\term};\gamereplacevarby[{\gameb}]{\ivar}{\term} & \text{if } \freevars{\term}\cap\boundvars{\game}=\emptyset\mand\ivar\notin\boundvars{\game} \\
                    \gassign{\ivar}{\term};\game;\gameb                                                & \text{otherwise}
                \end{cases}\span\span
            }
            \nexti
            \lastit{\gamereplacevarpby[{\garepeat{\game}}]{\ivar}{\term}}{
                \begin{cases}
                    \garepeatp{\gamereplacevarby[{\game}]{\ivar}{\term}} & \text{if } \freevars{\term}\cap\boundvars{\game}=\emptyset\mand\ivar\notin\boundvars{\game} \\
                    \gassign{\ivar}{\term};\garepeat{\game}              & \text{otherwise}
                \end{cases}\span\span
            }
        \end{align*}%
    \end{aligntable}%
    \noindent
    For any set \(\fostatessubset\subseteq\fostates\) let \(\fostatessubsetafterdas{\ivar}{\term} = \{\fostate : \streplaceby{\fostate}{\ivar}{\ltden[\fostate]{\term}}\in\fostatessubset\}\).

    \begin{lemmaE}[][normal] \label{lem:substgl}
        For every \emph{\fogls formula} \(\fml\), game \(\game\) and term \(\term\):
        \begin{align*}
            \fostate\in \glfden{\fmlreplacevarby[\fml]{\ivar}{\term}} \quad & \iff\quad \streplaceby{\fostate}{\ivar}{\ltden{\term}}\in \glfden{\fml}
        \end{align*}
    \end{lemmaE}
    \begin{proofE}
        By simultaneous induction on formulas and games prove the following:
        \begin{enumerate}
            \item\label{subglfml} \(\glfden{\fmlreplacevarby[\fml]{\ivar}{\term}} = \fostatessubsetafterdasp[\glfden{\fml}]{\ivar}{\term}\)
            \item\label{subglmustbound} if \(\ivar\in\mustboundvars{\game}\):
                  \(\glgden{\gamereplacevarby[\game]{\ivar}{\term}}(\fostatessubset)= \fostatessubsetafterdasp[\glgden{\game}(\fostatessubset)]{\ivar}{\term}\)
            \item\label{subglnoclash} if \(\freevars{\term}\cap\boundvars{\game}=\emptyset\) and \(\ivar\notin\boundvars{\game}\):
                  \[\glgden{\gamereplacevarby[\game]{\ivar}{\term}}(\fostatessubsetafterdas{\ivar}{\term})=
                      \fostatessubsetafterdasp[\glgden{\game}(\fostatessubset)]{\ivar}{\term}\]
        \end{enumerate}
        \begin{caselist}
            \case{\(\atfml\)}
            This is standard substitution for atomic formulas.

            \case{\(\lnot\fml\)}
            Immediate from the induction hypothesis.

            \case{\(\fml\land\fmlb\)}
            Immediate from the induction hypothesis.

            \case{\(\lpossible{\game}\fml\)}
            There are three cases.
            If \(\ivar\in\mustboundvars{\game}\) then by~\pref{subglmustbound} of the induction hypothesis on \(\game\)
            \begin{align*}
                \glfden{\fmlreplacevarby[(\lpossible{\game}\fml)]{\ivar}{\term}}
                =
                \glfden{\gamereplacevarby[\game]{\ivar}{\term}}(\glfden{\fml})
                =
                \fostatessubsetafterdasp[\glfden{\game}(\glfden{\fml})]{\ivar}{\term}
            \end{align*}
            Suppose then \(\freevars{\term}\cap\boundvars{\game}=\emptyset\) and \(\ivar\notin\boundvars{\game}\).
            Then by~\pref{subglnoclash} of the induction hypothesis on \(\game\) and the induction hypothesis on \(\fml\):
            \begin{align*}
                \glfden{\fmlreplacevarby[(\lpossible{\game}\fml)]{\ivar}{\term}}
                 & =
                \glfden{\gamereplacevarby[\game]{\ivar}{\term}}(\glfden{\fmlreplacevarby[\fml ]{\ivar}{\term}})
                =
                \glfden{\gamereplacevarby[\game]{\ivar}{\term}}(\fostatessubsetafterdasp[\glfden{\fml}]{\ivar}{\term})
                =
                \fostatessubsetafterdasp[\glfden{\game}(\glfden{\fml})]{\ivar}{\term}
            \end{align*}

            Finally, the case that neither \(\ivar\in\mustboundvars{\game}\) nor (\(\freevars{\term}\cap\boundvars{\game}=\emptyset\) and \(\ivar\notin\boundvars{\game}\))
            is immediate by the definition of deterministic assignment in \Cref{sec:deterministicassignment}.

            \case{\(\gamesymbat{\ivarbseq}{\termseq}\)}
            Suppose first \(\ivar\in\mustboundvars{\gamesymbat{\ivarbseq}{\termseq}}=\ivarbseq\).
            Then
            \begin{align*}%
                \glgden{\gamesymbat{\ivarbseq}{\fmlreplacevarby[\termbseq]{\ivar}{\term}}}(\fostatessubset)
                 & =
                \{\fostate:\mexists{\pstatepre\in\fonstrint{\gamesymb}(\ltden[\fostate]{\fmlreplacevarby[\termbseq]{\ivar}{\term}})}
                \replpstate[\fostate]{\pstatepreatvar{\ivarbseq}}\subseteq\fostatessubset\}
                \\
                 & =
                \{\fostate:\mexists{\pstatepre\in\fonstrint{\gamesymb}(\ltden[{\streplaceby{\fostate}{\ivar}{\ltden{\term}}}]{\termbseq})}
                \replpstate[\streplaceby{\fostate}{\ivar}{\ltden{\term}}]{\pstatepreatvar{\ivarbseq}}\subseteq\fostatessubset\}
                \\
                 & =
                \fostatessubsetafterdasp[\glgden{\gamesymbat{\ivarbseq}{\termbseq}}(\fostatessubset)]{\ivar}{\term}
            \end{align*}
            Next suppose \(\ivar\notin\ivarbseq\) and \(\ivarbseq\cap\freevars{\term}=\emptyset\).
            Then
            \begin{align*}%
                \glgden{\gamesymbat{\ivarbseq}{\fmlreplacevarby[\termbseq]{\ivar}{\term}}}(\fostatessubsetafterdas{\ivar}{\term})
                 & =
                \{\fostate:\mexists{\pstatepre\in\fonstrint{\gamesymb}(\ltden[\fostate]{\fmlreplacevarby[\termbseq]{\ivar}{\term}})}
                \replpstate[\fostate]{\pstatepreatvar{\ivarbseq}}\subseteq\fostatessubsetafterdas{\ivar}{\term}\}
                \\
                 & =
                \{\fostate:\mexists{\pstatepre\in\fonstrint{\gamesymb}(\ltden[\fostate]{\fmlreplacevarby[\termbseq]{\ivar}{\term}})}
                \streplaceby{(\replpstate[\fostate]{\pstatepreatvar{\ivarbseq}})}{\ivar}{\ltden[{\replpstate[\fostate]{\pstatepreatvar{\ivarbseq}}}]{\term}}\subseteq\fostatessubset\}
                \\
                 & =
                \{\fostate:\mexists{\pstatepre\in\fonstrint{\gamesymb}(\ltden[{\streplaceby{\fostate}{\ivar}{\ltden{\term}}}]{\termbseq})}
                \replpstate[\streplaceby{\fostate}{\ivar}{\ltden{\term}}]{\pstatepreatvar{\ivarbseq}}\subseteq\fostatessubset\}
                \\
                 & =
                \fostatessubsetafterdasp[\glgden{\gamesymbat{\ivarbseq}{\termbseq}}(\fostatessubset)]{\ivar}{\term}
            \end{align*}

            \case{\(\gdual{\game}\)}
            This is immediate since \(\fostates\setminus\fostatessubsetafterdasp{\ivar}{\term}=\fostatessubsetafterdas[(\fostates\setminus\fostatessubset)]{\ivar}{\term}\).

            \case{\(\gtest{\fml}\)}
            Only \pref{subglnoclash} is to be shown.
            By induction hypothesis on \(\fml\)
            \begin{align*}
                \glgden{\gamereplacevarby[(\gtest{\fml})]{\ivar}{\term}}(\fostatessubsetafterdas{\ivar}{\term})
                 & =
                \glfden{\fmlreplacevarby[\fml]{\ivar}{\term}}\cap\fostatessubsetafterdas{\ivar}{\term}
                =
                \fostatessubsetafterdasp[\glfden{\fml}\cap\fostatessubset]{\ivar}{\term}
                +
                \fostatessubsetafterdasp[\glgden{\gtest{\fml}}(\fostatessubset)]{\ivar}{\term}
            \end{align*}

            \case{\(\game\gachoice\gameb\)}
            Again suppose first \(\ivar\in\mustboundvars{\game\gachoice\gameb}\).
            Then \(\ivar\in\mustboundvars{\game}\) and \(\glgden{\gamereplacevarby[\game]{\ivar}{\term}}(\fostatessubset)= \fostatessubsetafterdasp[\glgden{\game}(\fostatessubset)]{\ivar}{\term}\) by the induction hypothesis \pref{subglmustbound} on \(\game\).
            Similarly for \(\gameb\), so that \(\glgden{\gamereplacevarpby[\game\gachoice\gameb]{\ivar}{\term}}(\fostatessubset)= \fostatessubsetafterdasp[\glgden{\game\gachoice\gameb}(\fostatessubset)]{\ivar}{\term}\).

            Next suppose \(\boundvars{\game\gachoice\gameb}\cap\freevars{\term}=\emptyset\) and \(\ivar\notin\boundvars{\game\gachoice\gameb}\).
            Then \(\boundvars{\game}\cap\freevars{\term}=\emptyset\) and \(\ivar\notin\boundvars{\game}\), so the induction hypothesis applies to \(\game\) and it applies to \(\gameb\) for the same reason, yielding the desired equality.

            \case{\(\game;\gameb\)}
            There are three cases.
            Suppose first \(\ivar\in\mustboundvars{\game}\) then \(\ivar\in\mustboundvars{\game;\gameb}\) so only~\pref{subglmustbound} needs to be shown.
            By the induction hypothesis on \(\game\):
            \begin{align*}
                \glgden{\gamereplacevarby[\game]{\ivar}{\term}\gcom\gameb}(\fostatessubset)
                =
                \glgden{\gamereplacevarby[\game]{\ivar}{\term}}(\glgden{\gameb}(\fostatessubset))
                =
                \fostatessubsetafterdasp[\glfden{\game}(\glfden{\fml})]{\ivar}{\term}
            \end{align*}
            Suppose next \(\freevars{\term}\cap\boundvars{\game}=\emptyset\) and \(\ivar\notin\boundvars{\game}\).
            Then only~\pref{subglnoclash} needs to be shown.
            By the induction hypothesis on \(\game\) and \(\gameb\):
            \begin{align*}
                \glgden{\gamereplacevarby[\game]{\ivar}{\term};\gamereplacevarby[\gameb]{\ivar}{\term}}(\fostatessubsetafterdas{\ivar}{\term})
                =
                \glgden{\gamereplacevarby[\game]{\ivar}{\term}}(\glgden{\gamereplacevarby[\gameb]{\ivar}{\term}}(\fostatessubsetafterdas{\ivar}{\term}))
                =
                \glgden{\gamereplacevarby[\game]{\ivar}{\term}}(\fostatessubsetafterdasp[\glgden{\gameb}(\fostatessubset)]{\ivar}{\term})
                =
                \fostatessubsetafterdasp[\glgden{\game}(\glgden{\gameb}(\fostatessubset))]{\ivar}{\term}
            \end{align*}

            Finally, the case that neither \(\ivar\in\mustboundvars{\game}\) nor (\(\freevars{\term}\cap\boundvars{\game}=\emptyset\) and \(\ivar\notin\boundvars{\game}\))
            is immediate by the definition of deterministic assignment in \Cref{sec:deterministicassignment}.

            \case{\(\garepeat{\game}\)}
            Suppose first \(\freevars{\term}\cap\boundvars{\game}=\emptyset\) and \(\ivar\notin\boundvars{\game}\).
            Then only~\pref{subglnoclash} needs to be shown.
            By definition of the semantics of \(\garepeat{\game}\) to show is that
            \[\mlfpp{\fostatessubsetb}{\glgden{\gamereplacevarby[\game]{\ivar}{\term}}(\fostatessubsetb)\cup\fostatessubsetafterdas{\ivar}{\term}}
                =
                \fostatessubsetafterdasp[\mlfpp{\fostatessubsetb}{\glgden{\game}(\fostatessubsetb)
                        \cup\fostatessubset}]{\ivar}{\term}\]
            By induction on ordinals \(\gamma\) prove this for the fixpoint approximations.
            For successor ordinals this holds by the induction hypothesis on \(\game\).
            \begin{align*}
                \mlfppi{\gamma+1}{\fostatessubsetb}{\glgden{\gamereplacevarby[\game]{\ivar}{\term}}(\fostatessubsetb)\cup\fostatessubsetafterdas{\ivar}{\term}}
                 & =                \glgden{\gamereplacevarby[\game]{\ivar}{\term}}(\mlfppi{\gamma}{\fostatessubsetb}{\glgden{\gamereplacevarby[\game]{\ivar}{\term}}(\fostatessubsetb)\cup\fostatessubsetafterdas{\ivar}{\term}})  \cup\fostatessubsetafterdas{\ivar}{\term}
                \\
                 & =                \glgden{\gamereplacevarby[\game]{\ivar}{\term}}(
                \fostatessubsetafterdasp[\mlfppi{\gamma}{\fostatessubsetb}{\glgden{\game}(\fostatessubsetb)
                        \cup\fostatessubset}]{\ivar}{\term}
                )  \cup\fostatessubsetafterdas{\ivar}{\term}
                \\
                 & =                \fostatessubsetafterdasp[\glgden{\game}(
                    \mlfppi{\gamma}{\fostatessubsetb}{\glgden{\game}(\fostatessubsetb)
                        \cup\fostatessubset})\cup \fostatessubset]{\ivar}{\term}
                \\
                 & =
                \fostatessubsetafterdasp[\mlfpp  i{\gamma+1}{\fostatessubsetb}{\glgden{\game}(\fostatessubsetb)
                            \cup\fostatessubset}]{\ivar}{\term}
            \end{align*}
            The limit step is immediate by the induction hypothesis on the fixpoint approximations.

            Again, the case that \(\freevars{\term}\cap\boundvars{\game}\neq\emptyset\) and \(\ivar\in\boundvars{\game}\) is immediate by the definition of deterministic assignment in \Cref{sec:deterministicassignment}.
        \end{caselist}
    \end{proofE}

    \subsection{Free Variables in \FirstorderMuCalculus}
    The set of free variables \(\freevars{\atfml}\) of an atomic \(\gsig\)-formula \(\atfml\) are defined as usual to be all the variables appearing in \(\atfml\).
    The set of \emph{syntactically free} variables of a formula of the \firstordermucalculus is defined by induction on the formula as follows:
    \begin{aligntable}[2]
        \renewcommand{\displayitem}[2]{&#1=#2}
        \begin{align*}
            \nextit{\freevars{\pvar}}{\{\pvar\}\qquad}
            \nextit{\freevars{\fml\land\fmlb}}{\freevars{\fml}\cup\freevars{\fmlb}}
            \nextit{\freevars{\lnot\fml}}{\freevars{\fml}}
            \nextit{\freevars{\flfp{\pvar}{\fml}}}{\freevars{\fml}\setminus\{\pvar\}\span\span}
            \lastit{\freevars{\lpossible{\gamesymbat{\ivarseq}{\termseq}}\fml}}{\freevars{\termseq}\cup(\freevars{\fml}\setminus\{\ivarseq\})\span\span}
        \end{align*}
    \end{aligntable}%
    Note that this contains \pvarname{s} and \ivarname{s}.

    \begin{lemmaE}[Coincidence for \folmus][normal]\label{lem:coincidencelmu}
        For every \emph{\folmus formula}~\(\fml\) \emph{without free \pvarname{s}} and \(\fostate,\fostateb\in\fostates\) such that \(\strestrvar{\fostate}{\freevars{\fml}}=\strestrvar{\fostateb}{\freevars{\fml}}\)
        \[\folmusem{\fonstr}{\fostate}{\fml} \quad\iff\quad \folmusem{\fonstr}{\fostateb}{\fml}\]
    \end{lemmaE}
    \begin{proofE}
        Say a set \(\fostatessubset\subseteq\fostates\) is \(Z\subseteq\ivars\)-closed if
        \[\{\fostate : \mexists{\fostateb} \; \strestrvar{\fostate}{Z}=\strestrvar{\fostateb}{Z} \mand \fostateb\in\fostatessubset\}\subseteq \fostatessubset.\]
        By induction on \(\fml\) show that \(\lmden{\intp}{\fml}\) is \(Z\)-closed if \(\intp(\pvar)\) is \(Z\)-closed for all \(\pvar\in\pvars\) and \(\freevars{\fml}\cap\ivars\subseteq Z\).
        The interesting cases are for modalities and fixpoint formulas.

        \begin{caselist}
            \case{\(\atfml\)}
            This is standard coincidence for atomic formulas.

            \case{\(\pvar\)}
            By assumption \(\lmden{\intp}{\pvar}=\intp(\pvar)\) is \(Z\)-closed.

            \case{\(\lnot{\fml}\)}
            Holds since the complement of a \(Z\)-closed set is \(Z\)-close.
            Suppose \(\fostatessubset\) is \(Z\)-closed, but \(\fostates\setminus\fostatessubset\) is not.
            Then there is \(\fostate\in\fostatessubset\) and \(\fostateb\in\fostates\setminus\fostatessubset\) such that \(\strestrvar{\fostate}{Z}=\strestrvar{\fostateb}{Z}\).
            Because \(\fostatessubset\) is \(Z\)-closed then also \(\fostateb\in\fostatessubset\). A contradiction.

            \case{\(\fml\land\fmlb\)}
            Holds as the intersection of \(Z\)-closed sets \(\fostatessubset_1,\fostatessubset_2\) is closed.
            Suppose there is \(\fostateb\in\fostatessubset_1\cap\fostatessubset\) such that \(\strestrvar{\fostate}{Z}=\strestrvar{\fostateb}{Z}\).
            Then \(\fostate\in\fostatessubset_1\), because \(\fostatessubset_1\) is \(Z\)-closed.
            , for \(\fostatessubset_2\), so that \(\fostate\in\fostatessubset_1\cap\fostatessubset_2\).

            \case{\(\lpossible{\gamesymbat{\ivarseq}{\termseq}}\fml\)}
            Suppose \( \strestrvar{\fostate}{Z}=\strestrvar{\fostateb}{Z}\) and \(\fostateb\in\lmdenprop{\intp}{\lpossible{\gamesymbat{\ivarseq}{\termseq}}\fml}\).
            By definition of the semantics there is \(\pstatepre\in\fonstrint{\gamesymb}(\ltden[\fostateb]{\termseq})\) such that \(\replpstate[\fostateb]{\pstatepreatvar{\ivarseq}}\subseteq\lmden{\intp}{\fml}\).
            Because \(\freevars{\termseq}\subseteq Z\) also \(\ltden[\fostateb]{\termseq}=\ltden[\fostate]{\termseq}\).
            Let \(Z'=Z \cup \ivarseq\) and note that \( \strestrvar{(\replpstate[\fostateb]{\pstatepreatvar{\ivarseq}})}{Z'}=\strestrvar{(\replpstate[\fostate]{\pstatepreatvar{\ivarseq}})}{Z'}\). Since \(\freevars{\fml}\subseteq Z'\), it follows with the induction hypothesis that \(\replpstate[\fostate]{\pstatepreatvar{\ivarseq}}\subseteq\lmden{\intp}{\fml}\).
            Hence, also \(\fostate\in\lmdenprop{\intp}{\lpossible{\gamesymbat{\ivarseq}{\termseq}}\fml}\).

            \case{\(\flfp{\pvar}{\fml}\)}
            By induction on ordinals \(\gamma\) the fixpoint iterates \(\mlfpi{\gamma}{\fostatessubset}{\lmden{\intreplaceby{\pvar}{\fostatessubset}}{\fml}}\) are shown to be \(Z\)-closed.
            For the successor case \(\mlfpi{\gamma+1}{\fostatessubset}{\lmden{\intp}{\fml}}=\lmden{\intreplaceby{\pvar}{\mlfpi{\alpha}{\fostatessubset}{\lmden{\intp}{\fml}}}}{\fml}\) is \(Z\)-closed by induction hypothesis.
            For the limit case, arbitrary unions of \(Z\)-closed sets are \(Z\)-closed.
        \end{caselist}
    \end{proofE}

    \subsection{Substitution in \FirstorderMuCalculus}

    Defining substitutions for the \firstordermucalculus is a bit more complex.
    For an atomic formula \(\atfml\) the formula obtained by replacing the variable \(\ivar\) for the term \(\term\) everywhere is denoted \(\fmlreplacevarby[\atfml]{\ivar}{\term}\) and similarly for terms.
    Importantly the substitution of a variable by itself is defined as \(\fmlreplacevarby{\ivar}{\ivar}\synequiv\fml\).
    This extends to \folmus formulas when \(\ivar\notsynequiv\term\) as follows:
    \begin{aligntable}[2]
        \begin{align*}
            \nextit{\fmlreplacevarpby[\pvar]{\ivar}{\term}}{\lpossible{\gassign{\ivar}{\term}}\pvar}
            \nextit{\fmlreplacevarpby[\lnot\fml]{\ivar}{\term}}{\lnot\fmlreplacevarby[\fml]{\ivar}{\term}}
            \nextit{\fmlreplacevarpby[\fml\land\fmlb]{\ivar}{\term}}{\fmlreplacevarby[\fml]{\ivar}{\term}\land\fmlreplacevarby[\fmlb]{\ivar}{\term}}
            \nextit{\fmlreplacevarpby[\lpossible{\gamesymbat{\ivarbseq}{\termbseq}}\fml]{\ivar}{\term}}{\begin{cases}
                                                                                                                \lpossible{\gamesymbat{\ivarbseq}{\fmlreplacevarby[\termbseq]{\ivar}{\term}}}\fml                                 & \text{if }\ivar\in\ivarbseq                                                \\
                                                                                                                \lpossible{\gamesymbat{\ivarbseq}{\fmlreplacevarby[\termbseq]{\ivar}{\term}}}\fmlreplacevarby[\fml]{\ivar}{\term} & \text{if }\ivar\notin\ivarbseq\mand\freevars{\term}\cap\ivarbseq=\emptyset \\
                                                                                                                \lpossible{\gassign{\ivar}{\term}}\lpossible{\gamesymbat{\ivarbseq}{\termbseq}}\fml                               & \text{otherwise}                                                           \\
                                                                                                            \end{cases}}
            \lastit{\fmlreplacevarpby[{\flfp{\pvar}{\fml}}]{\ivar}{\term}}{\lpossible{\gassign{\ivar}{\term}}\flfp{\pvar}{\fml}}
        \end{align*}%
    \end{aligntable}%

    \begin{lemmaE}[\folmus Substitution][normal]\label{lem:substlmu}
        For \emph{\folmus formulas} \(\fml\) and terms~\(\term\):
        \[\fostate\in\lmden{\intp}{\fmlreplacevarby{\ivar}{\term}} \quad\iff\quad \streplaceby{\fostate}{\ivar}{\ltden{\term}}\in\lmden{\intp}{\fml}\]
    \end{lemmaE}
    \begin{proofE}
        By induction on \(\fml\).

        \begin{caselist}
            \case{\(\atfml\)}
            This is standard substitution for atomic formulas.

            \case{\(\pvar\)}
            By the definition of deterministic assignment in \Cref{sec:deterministicassignment}.

            \case{\(\lnot\fml\)}
            Immediate from the induction hypothesis.

            \case{\(\fml\land\fmlb\)}
            Immediate from the induction hypothesis.

            \case{\(\lpossible{\gamesymbat{\ivarbseq}{\termbseq}}\fml\)}
            Suppose first \(\ivar\in\ivarbseq\).
            Then the pairwise equivalences hold:
            {\renewcommand{\iff}{\quad\text{iff}\quad}%
            \begin{align*}%
                \fostate\in\lmden{\intp}{\lpossible{\gamesymbat{\ivarbseq}{\fmlreplacevarby[\termbseq]{\ivar}{\term}}}\fml}
                \iff &
                \mexists{\pstatepre\in\fonstrint{\gamesymb}(\ltden[\fostate]{\fmlreplacevarby[\termbseq]{\ivar}{\term}})}
                \replpstate[\fostate]{\pstatepreatvar{\ivarbseq}}\subseteq\lmden{\intp}{\fml}
                \\
                \iff &
                \mexists{\pstatepre\in\fonstrint{\gamesymb}(\ltden[{\streplaceby{\fostate}{\ivar}{\ltden{\term}}}]{\termbseq})}
                \replpstate[\streplaceby{\fostate}{\ivar}{\ltden{\term}}]{\pstatepreatvar{\ivarbseq}}\subseteq\lmden{\intp}{\fml}
                \\
                \iff &
                \streplaceby{\fostate}{\ivar}{\ltden{\term}}\in\lmden{\intp}{\lpossible{\gamesymbat{\ivarbseq}{\termbseq}}\fml}
            \end{align*}
            }%
            Next suppose \(\ivar\notin\ivarbseq\) and \(\freevars{\term}\cap\ivarbseq=\emptyset\).
            Then the pairwise equivalences hold:
            {\renewcommand{\iff}{\quad\text{iff}\quad}%
            \begin{align*}%
                \fostate\in\lmden{\intp}{\lpossible{\gamesymbat{\ivarbseq}{\fmlreplacevarby[\termbseq]{\ivar}{\term}}}\fmlreplacevarby[\fml]{\ivar}{\term}}
                \iff &
                \mexists{\pstatepre\in\fonstrint{\gamesymb}(\ltden[\fostate]{\fmlreplacevarby[\termbseq]{\ivar}{\term}})}
                \replpstate[\fostate]{\pstatepreatvar{\ivarbseq}}\subseteq\lmden{\intp}{\fmlreplacevarby[\fml]{\ivar}{\term}}
                \\
                \iff &
                \mexists{\pstatepre\in\fonstrint{\gamesymb}(\ltden[{\streplaceby{\fostate}{\ivar}{\ltden{\term}}}]{\termbseq})}
                \streplaceby{(\replpstate[\fostate]{\pstatepreatvar{\ivarbseq}})}{\ivar}{\ltden[{\replpstate[\fostate]{\pstatepreatvar{\ivarbseq}}}]{\term}}
                \subseteq\lmden{\intp}{\fml}
                \\
                \iff &
                \mexists{\pstatepre\in\fonstrint{\gamesymb}(\ltden[{\streplaceby{\fostate}{\ivar}{\ltden{\term}}}]{\termbseq})}
                \replpstate[\streplaceby{\fostate}{\ivar}{\ltden{\term}}]{\pstatepreatvar{\ivarbseq}}\subseteq\lmden{\intp}{\fml}
                \\
                \iff &
                \streplaceby{\fostate}{\ivar}{\ltden{\term}}\in\lmden{\intp}{\lpossible{\gamesymbat{\ivarbseq}{\termbseq}}\fml}
            \end{align*}
            }%
            Finally the case that \(\ivar\notin\ivarbseq\) and \(\freevars{\term}\cap\ivarbseq\neq\emptyset\)
            is immediate by the definition of deterministic assignment in \Cref{sec:deterministicassignment}.

            \case{\(\flfp{\pvar}{\fml}\)}
            By the definition of deterministic assignment in \Cref{sec:deterministicassignment}.
        \end{caselist}
    \end{proofE}

    In the \firstordermucalculus substitutions of \pvarname{s} is also crucial.
    \begin{aligntable}[2]
        \begin{align*}
            \nextit{\fmlreplacepvarby[\pvarb]{\pvar}{\fmlc}}{
                \begin{cases}
                    \fmlc  & \text{if }\pvar=\pvarb \\
                    \pvarb & \text{otherwise}       \\
                \end{cases}
            }
            \nextit{\fmlreplacepvarpby[{\flfp{\pvarb}{\fml}}]{\pvar}{\fmlc}}{
                \begin{cases}
                    \flfp{\pvarb}{\fml}                                  & \text{if }\pvar=\pvarb \\
                    \flfp{\pvarb}{\fmlreplacepvarby[\fml]{\pvar}{\fmlc}} & \text{otherwise}       \\
                \end{cases}
            }
            \nextit{\fmlreplacepvarpby[\lnot\fml]{\pvar}{\fmlc}}{\lnot\fmlreplacepvarby[\fml]{\pvar}{\fmlc}}
            \nextit{\fmlreplacepvarpby[\lpossible{\gamesymbat{\ivarbseq}{\termbseq}}\fml]{\pvar}{\fmlc}}{\lpossible{\gamesymbat{\ivarbseq}{\termbseq}}\fmlreplacepvarby[\fml]{\pvar}{\fmlc}}
            \lastit{\fmlreplacepvarpby[\fml\land\fmlb]{\pvar}{\fmlc}}{\fmlreplacepvarby[\fml]{\pvar}{\fmlc}\land\fmlreplacepvarby[\fmlb]{\pvar}{\fmlc}}
        \end{align*}
    \end{aligntable}%
    \begin{lemmaE}[\folmus Substitution][normal]\label{lem:pvarsubst}
        For \emph{\folmus formulas} \(\fml,\fmlc\):
        \[\lmden{\intp}{\fmlreplacepvarby{\pvar}{\fmlc}}=\lmden{\intreplaceby{\pvar}{\lmden{\intp}{\fmlc}}}{\fml}\]
    \end{lemmaE}
    \begin{proofE}
        By a straightforward induction on \folmus formulas \(\fml\).

        \begin{caselist}
            \case{\(\pvarb\)} Immediate.

            \case{\(\lnot\fml\)} Immediate.

            \case{\(\fml\land\fmlb\)} Immediate.

            \case{\(\lpossible{\gamesymbat{\ivarbseq}{\termbseq}}\fml\)}
            By the induction hypothesis the pairwise equivalences follow:
            \begin{align*}%
                \fostate\in\lmden{\intp}{\lpossible{\gamesymbat{\ivarbseq}{\fmlreplacevarby[\termbseq]{\ivar}{\term}}}\fmlreplacepvarby{\pvar}{\fmlc}}
                \ciff &
                \mexists{\pstatepre\in\fonstrint{\gamesymb}(\ltden[\fostate]{\fmlreplacevarby[\termbseq]{\ivar}{\term}})}
                \replpstate[\fostate]{\pstatepreatvar{\ivarbseq}}\subseteq\lmden{\intp}{\fmlreplacepvarby{\pvar}{\fmlc}}
                \\
                \ciff &
                \mexists{\pstatepre\in\fonstrint{\gamesymb}(\ltden[{\streplaceby{\fostate}{\ivar}{\ltden{\term}}}]{\termbseq})}
                \replpstate[\streplaceby{\fostate}{\ivar}{\ltden{\term}}]{\pstatepreatvar{\ivarbseq}}\subseteq\lmden{\intreplaceby{\pvar}{\lmden{\intp}{\fmlc}}}{\fml}
                \\
                \ciff &
                \fostate\in\lmden{\intreplaceby{\pvar}{\lmden{\intp}{\fmlc}}}{\lpossible{\gamesymbat{\ivarbseq}{\termbseq}}\fml}
            \end{align*}

            \case{\(\flfp{\pvarb}{\fml}\)}
            Suppose first \(\pvar=\pvarb\).
            Then
            \begin{align*}
                \lmden{\intp}{\flfp{\pvarb}{\fml}}
                 & =
                \Intersection\{\fostatessubset : \lmden{\intreplaceby{\pvarb}{\fostatessubset}}{\fml}\subseteq \fostatessubset\}
                =
                \Intersection\{\fostatessubset : \lmden{\intreplaceby[(\intreplaceby{\pvar}{\lmden{\intp}{\fmlc}})]{\pvarb}{\fostatessubset}}{\fml}\subseteq \fostatessubset\}
                \\
                 & 
                =
                \lmden{\intreplaceby{\pvar}{\lmden{\intp}{\fmlc}}}{\flfp{\pvarb}{\fml}}
            \end{align*}
            If \(\pvar\neq\pvarb\)
            {
                    \belowdisplayskip=-5pt
                    \begin{align*}
                        \lmden{\intp}{\flfp{\pvarb}{\fmlreplacepvarby[\fml]{\pvar}{\fmlc}}}
                         & 
                        =
                        \Intersection\{\fostatessubset : \lmden{\intreplaceby{\pvarb}{\fostatessubset}}{\fmlreplacepvarby[\fml]{\pvar}{\fmlc}}\subseteq \fostatessubset\}
                        =
                        \Intersection\{\fostatessubset : \lmden{\intreplaceby[(\intreplaceby{\pvar}{\lmden{\intp}{\fmlc}})]{\pvarb}{\fostatessubset}}{\fml}\subseteq \fostatessubset\}
                        \\
                         & 
                        =
                        \lmden{\intreplaceby{\pvar}{\lmden{\intp}{\fmlc}}}{\flfp{\pvarb}{\fml}}\qedhere
                    \end{align*}
                }
        \end{caselist}%
    \end{proofE}

    \subsection{Vectorial Assignment}\label{sec:vectorialassignment}

    It is convenient to have vectorial assignments \(\gassign{\ivarseq}{\termseq}\), which assign the value of \(\term_i\) to \(\ivar_i\) for all~\(i\).
    The intended semantics is that \(\glgden{\gassign{\ivarseq}{\termseq}}(\fostatessubset)=\{\fostate:\streplaceby{\fostate}{\ivarseq}{\ltden{\termseq}}\}\).
    Vectorial assignments are definable in \fogls and \folmus.
    In \fogls the game \(\gassign{\ivarb_1}{\term_1};\ldots;\gassign{\ivarb_\ell}{\term_\ell};\gassign{\ivar_1}{\ivarb_1};\ldots;\gassign{\ivar_\ell}{\ivarb_\ell}\), where \(\ivarb_1,\ldots,\ivarb_\ell\) are fresh, defines the vectorial assignment \(\gassign{\ivarseq}{\termseq}\).
    Analogously in \folmus the formula \(\lpossible{\gassign{\ivarseq}{\termseq}}\fml\) can be defined as
    \[\lpossible{\gassign{\ivarb_1}{\term_1}}\ldots\lpossible{\gassign{\ivarb_\ell}{\term_\ell}}\lpossible{\gassign{\ivar_1}{\ivarb_1}}\ldots\lpossible{\gassign{\ivar_\ell}{\ivarb_\ell}}\fml.\]

    \section{Properties of Proof Calculi}
    \label{app:conextaxproofs}
    \begin{lemma}
        The \excontextaxiomname axiom \irref{univext} is derivable in \fogls from \irref{univ}.
    \end{lemma}
    \begin{proofE}
        Derive the extended context axiom \irref{univext} by induction on \(\game\).
        The non-trivial case is for games of the form \(\garepeat{\game}\):
        {\renewcommand{\linferPremissSeparation}{\hspace{0.8cm}}
        \begin{sequentdeduction}
            \linfer
            {\linfer[diaind]
                {
                    \linfer
                    {
                        \linfer[univext]
                        {
                            \linfer[mon]
                            {
                                \linfer[loop]
                                {
                                    *
                                }
                                {((\fml\land\fmlb)\lor\lpossible{\game}\lpossible{\garepeat{\game}}(\fmlb\land\fml))\limply\lpossible{\garepeat{\game}}(\fmlb\land\fml)}
                            }
                            {((\fml\land\fmlb)\lor\lpossible{\game}(\fmlb\land\lpossible{\garepeat{\game}}(\fmlb\land\fml)))\limply\lpossible{\garepeat{\game}}(\fmlb\land\fml)}
                        }
                        {((\fml\land\fmlb)\lor(\fmlb\land\lpossible{\game}(\lnot\fmlb\lor\lpossible{\garepeat{\game}}(\fmlb\land\fml))))\limply\lpossible{\garepeat{\game}}(\fmlb\land\fml)}
                    }
                    {(\fml\lor\lpossible{\game}(\lnot\fmlb\lor\lpossible{\garepeat{\game}}(\fmlb\land\fml)))\limply(\lnot\fmlb\lor\lpossible{\garepeat{\game}}(\fmlb\land\fml))}
                }
                {\lpossible{\garepeat{\game}}\fml\limply(\lnot\fmlb\lor\lpossible{\garepeat{\game}}(\fmlb\land\fml))}
            }
            {(\fmlb\land \lpossible{\garepeat{\game}}\fml)\limply\lpossible{\garepeat{\game}}(\fmlb\land\fml)}
        \end{sequentdeduction}
        }%
        The instance of \irref{univext} is by the induction hypothesis on \(\game\).
    \end{proofE}

    \noindent
    Recall \Cref{lem:mcicderivable}, which says that the two rules \irref{monstrong} and \irref{diaindstrong} are derivable for $\freevars{\fmlc}\cap(\boundvars{\game})=\emptyset$.
    
    \begin{proofE}[Proof of \Cref{lem:mcicderivable}]
        The derivation of \irref{monstrong} is immediate from \irref{univext} and \irref{mon}.
        To derive \irref{diaindstrong} note first that \[\provfogl{(\fmlc\land\lpossible{\garepeat{\game}}\fml)\limply \lpossible{\garepeat{(\gtest{\fmlc};\game)}}(\fmlc\land\fml)}\]
        This can be derived as follows:
        {\renewcommand{\linferPremissSeparation}{\hspace{0.8cm}}\footnotesize
        \begin{sequentdeduction}
            \linfer
            {
                \linfer[diaind]
                {\linfer[test]
                    {
                        \linfer[univext+test]
                        {
                            \linfer[loop]
                            {
                            }
                            {((\fmlc\land\fml)\lor\lpossible{\gtest{\fmlc};\game}\lpossible{\garepeat{(\gtest{\fmlc};\game)}}(\fmlc\land\fml))\limply\lpossible{\garepeat{(\gtest{\fmlc};\game)}}(\fmlc\land\fml)}
                        }
                        {((\fmlc\land\fml)\lor(\fmlc\land\lpossible{\game}(\lnot\fmlc\lor\lpossible{\garepeat{(\gtest{\fmlc};\game)}}(\fmlc\land\fml))))\limply\lpossible{\garepeat{(\gtest{\fmlc};\game)}}(\fmlc\land\fml)}
                    }
                    {(\fml\lor\lpossible{\game}(\lnot\fmlc\lor\lpossible{\garepeat{(\gtest{\fmlc};\game)}}(\fmlc\land\fml)))\limply(\lnot\fmlc\lor\lpossible{\garepeat{(\gtest{\fmlc};\game)}}(\fmlc\land\fml))}
                }
                {\lpossible{\garepeat{\game}}\fml\limply (\lnot\fmlc\lor\lpossible{\garepeat{(\gtest{\fmlc};\game)}}(\fmlc\land\fml))}
            }
            {(\fmlc\land\lpossible{\garepeat{\game}}\fml)\limply \lpossible{\garepeat{(\gtest{\fmlc};\game)}}(\fmlc\land\fml)}
        \end{sequentdeduction}
        }%

        Then \irref{monstrong} can be derived thus
        {\belowdisplayskip=-7pt
        {\renewcommand{\linferPremissSeparation}{\hspace{0.8cm}}
        \begin{sequentdeduction}
            \linfer
            {
                \linfer[diaind]
                {\linfer[test]
                    {
                        \linfer
                        {
                            \fmlc\limply((\fml\lor\lpossible{\game}\fmlb)\limply\fmlb)
                        }
                        {(\fmlc\land(\fml\lor\lpossible{\game}\fmlb))\limply\fmlb}
                    }
                    {((\fmlc\land\fml)\lor\lpossible{\gtest{\fmlc};\game}\fmlb)\limply\fmlb}
                }
                {\lpossible{\garepeat{(\gtest{\fmlc};\game)}}(\fmlc\land\fml)\limply\fmlb}
            }
            {\fmlc\limply(\lpossible{\garepeat{\game}}\fml\limply\fmlb)}
        \end{sequentdeduction}
        }%
        }
    \end{proofE}

    \section{Sequence Representations} \label{sec:apptreecoding}

    For the reduction theorem (\Cref{thm:expressivenessreduction}) of \rationaldifferentialgamelogic, a coding of \emph{infinite sequences} in \dL is recalled in this section and a \selection axiom to handle these syntactically is introduced.

    \subsection{Infinite Sequences}
    Natural numbers \(\naturals\) and rational numbers \(\rationals\) are definable in \dL \cite{DBLP:conf/lics/Platzer12b}.
    Moreover, infinite sequences of reals can be characterized in \dL \cite[Corollary A.2]{DBLP:journals/tocl/Platzer17}.
    Let \(\realat{\ivar}{\ivarnat}{\ivarb}\) be the coding \systemicdlmus-formula, that says \(\ivarb\) is the \(\ivarnat\)-th element of the sequence represented by \(\ivar\).
    This encoding has the natural property that \[\provdlmu \lforallnat{\ivarnat}\existsunique{\ivar} \;\realat{\ivar}{\ivarnat}{\ivarb}.\]
    For readability use the notation \(\elof{\ivar}{i}\) for the \(i\)-th element in a formula \(\fml(\elof{\ivar}{i})\) to abbreviate \(\lexists{\ivarb}(\realat{\ivar}{i}{\ivarb}\land\fml(\ivarb))\) for some fresh variable \(\ivarb\).

    For a sequence \(\ivarbseq\) of variables \(\ivarb_1,\ldots,\ivarb_k\) the formula \(\varisseq{\ivar}{\ivarbseq}\) abbreviates \(\elof{\ivar}{1}=\ivarb_1\land\ldots\elof{\ivar}{k}=\ivarb_k\) and the action \(\seqasvar{\ivarbseq}{\ivar}\) in a formula \(\lpossible{\seqasvar{\ivarbseq}{\ivar}}\fml\) abbreviates the formula \(\lpossible{\gassign{\ivarb_1}{\elof{\ivar}{1}}} \ldots\lpossible{\gassign{\ivarb_k}{\elof{\ivar}{k}}}\fml\).
    Because all sequences are infinite, the element \(\elof{\ivar}{0}\) is used for the length of finite sequences and \(\finseqlen[\ivar]\) will be used in place of \(\elof{\ivar}{0}\).
    For an infinite sequence \(\elof{\ivar}{0}\) will be \(-1\) and so the formula \(i\leq \finseqlen[\ivar]\) should be understood as \(\elof{\ivar}{0}=-1\lor i\leq \elof{\ivar}{0}\).
    The formula \(\varincode{\ivar}{\ivarb}\) abbreviates \(\lexistsnat{i} 1 \leq i \leq \finseqlen[\ivarb] \land\ivar=\elof{\ivarb}{i}\).

    \subsection{Axiom of Countable \Selection}

    The proof of \Cref{thm:expressivenessreduction} relies on a version of the axiom of \selection for the \dLmu and \dGL calculus.
    This is added to these calculi in the following countable version:
    \[
        \cinferenceRule[separation|${AC}{}_{\omega}$]{axiom of countable \selection}
        {
            \lforallnat{\ivarnat}\existsatmost[1]{\ivar}\fml\limply{\lexists{\ivarb}\lforall{\ivar} (\ivar\in\ivarb\lbisubjunct \lexistsnat{\ivarnat}\fml)}\quad
        }{\text{$y$ not in $\fml$}}
    \]
    It axiomatizes syntactically, that for any formula \(\fml(\ivarnat,\ivar)\) describing a partial function \(\naturals\to\reals\), there is a \(\ivarb\) encoding this function as a real number.

    \begin{lemma}
        The formula \label{seplem}
        \[
            \lforallrat{\ivarrat}\existsatmost[1]{\ivar}\fml\limply{\lexists{\ivarb}\lforall{\ivar} (\ivar\in\ivarb\lbisubjunct \lexistsrat{\ivarrat}\fml)}\]
        where \(\ivarb\) is not in \(\fml\), is derivable in \dLmu and \dGL.
    \end{lemma}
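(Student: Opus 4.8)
The plan is to produce the witness $\ivarb$ as the real number that \emph{codes the sequence of pairs} whose $n$-th entry records the unique $\ivar$ with $\fml$ when one exists, and to obtain that coding real from the \separation axiom \irref{separation}. First I would isolate the parameters: let $\ivarbseq$ list the free variables of $\fml$ other than $n$ and $\ivar$, and introduce the auxiliary \systemicdlmus formula
\[
    \fmlb(n,p) \synequiv (\lexists{\ivar}(\fml \land p=(0,\ivar))) \lor (\lnot\lexists{\ivar}\fml \land p=(1,0))
\]
expressing that $p$ is the occupied pair $(0,\ivar)$ at the witness when $\fml$ is satisfiable at $n$ and the empty marker $(1,0)$ otherwise. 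Working throughout under the antecedent $\lforallnat{n}\existsatmost[1]{\ivar}\fml$ as a standing hypothesis, the first step is to derive functionality $\provdlmu\lforallnat{n}\existsunique{p}\fmlb(n,p)$: existence follows from the tautology $\lexists{\ivar}\fml\lor\lnot\lexists{\ivar}\fml$ by exhibiting $p$ in each case, and uniqueness follows from $\existsatmost[1]{\ivar}$ together with injectivity of the pairing (two occupied pairs $(0,\ivar),(0,\ivar')$ force $\ivar=\ivar'$, an occupied and an empty pair are separated by the $\lexists{\ivar}\fml$ guard, and two empty pairs coincide).

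With functionality in hand, I would add a fresh function symbol $\funcsymb$ conservatively by the definitional axiom $\lforallnat{n}\fmlb(n,\funcsymb(n,\ivarbseq))$, so that $\funcsymb(n,\ivarbseq)$ names the $n$-th pair; this is a legitimate definitional extension precisely because the unique-existence above has been proved. Instantiating \irref{separation} for this $\funcsymb$ yields a real $\ivarc$ with $\provdlmu\lforallnat{n}\funcsymb(n,\ivarbseq)=\elof{\ivarc}{n}$, and combining with the defining axiom gives $\provdlmu\lforallnat{n}\fmlb(n,\elof{\ivarc}{n})$. I would then take $\ivarb\synequiv\ivarc$, leaving only the task of unfolding the abbreviation $\ivar\in\ivarc$, which stands for $\lexistsnat{n}(\elof{\ivarc}{n}=(0,\ivar))$.

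The equivalence $\ivar\in\ivarc\lbisubjunct\lexistsnat{n}\fml$ then reduces to pure first-order reasoning about $\lforallnat{n}\fmlb(n,\elof{\ivarc}{n})$: if $\elof{\ivarc}{n}=(0,\ivar)$ the empty disjunct is excluded, so the occupied disjunct delivers $\fml$; conversely, if $\fml$ holds at some $n$ then $\ivar$ is the unique witness, so $\fmlb(n,\elof{\ivarc}{n})$ forces $\elof{\ivarc}{n}=(0,\ivar)$ and hence $\ivar\in\ivarc$. Generalizing over $\ivar$ and re-assembling the implication from the standing hypothesis completes the derivation. I expect the genuine obstacle to be the middle step—passing from the formula-defined partial function to an honest function symbol $\funcsymb$ eligible for \irref{separation}—since this is exactly where the proved unique-existence is indispensable, and the attendant bookkeeping of the $(0,\cdot)$ versus $(1,0)$ convention (so that the code's ``no $n$-th element'' marker aligns precisely with the abbreviation defining $\ivar\in\ivarc$) is the only delicate point; the remaining steps are routine first-order manipulations over the sequence coding of \appref{sec:apptreecoding}.
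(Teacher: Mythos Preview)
Your approach is essentially the same as the paper's: encode the $n$-th witness as the pair $(0,\ivar)$ or the marker $(1,0)$, introduce a fresh function symbol $\funcsymb$ for this by definitional extension, apply \irref{separation} to obtain the coding real, and finish by unfolding the $\ivar\in\ivarc$ abbreviation. The paper does exactly this.

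There is one technical wrinkle worth tightening. You establish $\existsunique{p}\fmlb(n,p)$ only \emph{under the standing hypothesis} $\lforallnat{n}\existsatmost[1]{\ivar}\fml$, and then justify the definitional extension by that conditional unique-existence. A conservative extension by a function symbol needs unconditional functionality of the defining formula; otherwise $\funcsymb$ is not a genuine fresh function symbol eligible for \irref{separation}. The paper sidesteps this by building the case split on $\existsunique{\ivar}\fml$ rather than $\lexists{\ivar}\fml$: its defining conditions
\[
\funcsymb(n,\ivarbseq)=(0,\ivar)\lbisubjunct(\fml\land\existsunique{\ivar}\fml)
\quad\text{and}\quad
\funcsymb(n,\ivarbseq)=(1,0)\lbisubjunct\lnot\existsunique{\ivar}\fml
\]
are total and functional outright, and the antecedent $\existsatmost[1]{\ivar}\fml$ is used only at the very end to collapse $\existsunique{\ivar}\fml$ to $\lexists{\ivar}\fml$ in the final equivalence. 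Replacing your $\lexists{\ivar}\fml$ guard by $\existsunique{\ivar}\fml$ in $\fmlb$ fixes this with no change to the rest of your argument.
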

    \begin{proof}
        An immediate consequence of \irref{separation} and the definability of the rationals.
    \end{proof}

    \section{Least Fixpoint Logic and \folmusort}\label{sec:lfpapp}
    The \firstordermucalculus can also be seen as the extension of least fixpoint logic with state-changing modalities that describe \emph{atomic games}.
    Least fixpoint logic (\LFP) \cite{DBLP:journals/bsl/DawarG02} is the extension of first-order logic with the addition that \(\loglfplfp{\relsymb}{\ivarseq}{\fml}{\termseq}\) is a formula, whenever \(\relsymb\) is a \(k\)-ary relation symbol, \(\ivarseq\) and \(\termseq\) are \(k\)-sequences of \ivarname{s} and terms respectively and \(\fml\) is an \LFP-formula in which \(\relsymb\) appears only positively.
    The formula \(\loglfplfp{\relsymb}{\ivarseq}{\fml}{\termseq}\) is interpreted as the least fixpoint:
    \[\lfpden[\fonstr]{\loglfplfp{\relsymb}{\ivarseq}{\fml}{\termseq}} = \{\fostate :\ltden{\term}\in \mlfp{A}{\{\fonstrel : \streplaceby{\fostate}{\ivarseq}{\fonstrel}\in \lfpden[\fonstr_\relsymb^A]{\fml}\}}\}\]
    where \(\fonstr_\relsymb^A\) is \(\fonstr\) with the difference that \(\fonstrint[\fonstr_\relsymb^A]{\relsymb}=A\).
    The remaining first-order connectives are interpreted as in first-order logic.
    For a first-order signature \(\gsig\) let \(\gsig^*\) be the \gamesignature extending \(\gsig\) only by the \gamesymbol~\(\gndassignsymb\).
    \begin{theoremE}[\folmus and \LFP][see full proof]
        Least fixpoint logic in the signature~\(\gsig\) and the \firstordermucalculus in \(\gsig^*\) are equiexpressive.\label{thm:folmuandlfp}
    \end{theoremE}

    \begin{proof} \label{proof:lfpfullproof.}
        Without loss of generality, assume that the formulas in question mention only \ivarname{s} from the fixed finite list of variables~\(\ivarsall\) of length \(\ell\).
        Any \folmus[\gsig^*]-formula can be written equivalently in \LFP by replacing \(\flfp{\pvar}{\fml}\) by \(\loglfplfp{\relsymb_X}{\ivarsall}{\fml}{\ivarsall}\) and \(\pvar\) by \(\relsymb_X(\ivarsall)\).
        This translation is easily seen to be correct.

        The converse translation is proved next.
        Assume without loss of generality, that every relation symbol in an \LFP formula is either used only for a fixpoint or never used as a fixpoint.
        Also assume that every fixpoint relation symbol \(\relsymb\) is only bound once with the same \emph{fixed} formula \(\fml_\relsymb\).
        By bound renaming assume without loss of generality that no free variable in \(\fml_\relsymb\) is bound anywhere in the formula.

        It is first shown that, without loss of generality, the least fixpoint subformulas can be assumed to be of the form \(\loglfplfp{\relsymb}{\freevars{\fml_\relsymb}}{\fml_\relsymb}{\termseq}\), where the fixpoints always ranges over \emph{all} free variables in \(\freevars{\fml_\relsymb}\).
        Fix for every fixpoint relation symbol \(\relsymb\) a fresh \(\realnorm{\freevars{\fml_\relsymb}}\)-ary fixpoint relation symbol \(\bar{\relsymb}\).
        With the following substitutions
        \begin{align*}
             & {\relsymb(\termseq)}\mapsto \fmlreplacevarby[\bar{\relsymb}(\freevars{\fml_\relsymb})]{\ivarseq}{\termseq}
             & {\loglfplfp{\relsymb}{\ivarseq}{\fml}{\termseq}}\mapsto\fmlreplacevarby[\loglfplfp{\bar{\relsymb}}{\freevars{\fml_R}}{\fml_R}{\freevars{\fml_R}}]{\ivarseq}{\termseq}
        \end{align*}
        an equivalent formula with the property that least fixpoints bind \emph{all} free variables in their formula is obtained.

        Fix for any fixpoint relation symbol \(\relsymb\) a \pvarname \(X_\relsymb\).
        For any \LFP formula \(\fml\) define the corresponding \folmus[\gsig^*] formula \(\lfptofml{\fml}\) by making the following substitutions:
        \begin{align*}
             & \lfptofmlp{\relsymb(\termseq)}\synequiv\lpossible{\gassign{\freevars{\fml_R}}{\termseq}}\pvar_{\relsymb}
             & \lfptofmlp{\loglfplfp{\relsymb}{\freevars{\fml_R}}{\fml_R}{\termseq}}\synequiv\lpossible{\gassign{\freevars{\fml_R}}{\termseq}}\flfp{\pvar_{\relsymb}}{{\lfptofmlp{\fml_R}}}
        \end{align*}
        For \(\fostatessubset\subseteq\fostates\) let \(\strestrvar{\fostatessubset}{\ivarseq}=\{\strestrvar{\fostate}{\ivarseq} :\fostate\in\fostatessubset\}\).
        Let \(\fonstr_\intp\) be the \fonstructure \(\fonstr\) where every fixpoint \(\relsymb\) is interpreted by \(\fonstrint[{\fonstr_\intp}]{\relsymb}=\strestrvar{\intp(\pvar_R)}{\freevars{\fml_R}}\).
        For any \(U\subseteq\fonstrdom^{\realnorm{\freevars{\fml_R}}}\) let \(\allexp{U} = \{\fostate:\strestrvar{\fostate}{\ivarseq}\in U\}\).
        Call a set of states \(\fostatessubset\subseteq\fostates\) \emph{\(\ivarseq\)-closed} if \(\fostatessubset = \allexpp{\strestrvar{\fostatessubset}{\ivarseq}}\).
        Call an \interp \(\intp\) \emph{suitable} if every \(\intp(\pvar_\relsymb)\) is \(\freevars{\fml_\relsymb}\)-closed.
        Now by induction on \(\fml\) it is not hard to prove \(\lfpden[\fonstr_\intp]{\fml}=\lmden{\intp}{\lfptofml{\fml}}\) and this set is \(\freevars{\fml}\)-closed for all suitable \(\intp\).
    \end{proof}

    \section{Properties of Continuous Evolutions}

    This section collects some auxiliary lemmas that are needed in the proofs of some of the main results.

    \subsection{Derived Evolution Domain Constraint Axioms}\label{app:evdaxioms}
    The following useful axioms for differential equations are derivable from \irref{evolutiondomain}
    \[
        \cinferenceRule[unpack|unpack]{ODE box unpack axiom}
        {
            \linferenceRule[viuqe]
            {\lnecessary{\stdode}\fml}
            {(\evdfml\limply \lnecessary{\stdode}\fml)}
        }{}
    \]
    \[
        \cinferenceRule[dwaxiom|DW]{Differential Weakening}
        {
            \linferenceRule[viuqe]
            {\lnecessary{\stdode}\fml}
            {\lnecessary{\stdode}(\evdfml\limply\fml)}
        }{}
    \]
    \[
        \cinferenceRule[evdmon|{[$\&$M]}]{evolution domain monotonicity}
        {
        \linferenceRule[impl]
        {\lforall{\odevarsx}(\fmlc\limply\evdfml)}
        {(\lnecessary{\stdode}\fml\limply \lnecessary{\stdodetwith{\fmlc}}\fml)}
        }{}
    \]

    \begin{proof}[Derivation]
        The forward implications of \irref{unpack} is immediate.
        For the backward implication apply \irref{evolutiondomain} and \irref{nabla} using that \(\synreachrelnoderboundsnoevd[\odevarsx,\odevarsx]{\odefof}{0}\) is valid.

        The forward direction of \irref{dwaxiom} follows by \irref{mon}.
        The backward implication follows with \irref{nabla} using that \(\synreachrelnoderbounds{\odefof}{\timevar}\limply \lpossible{\gassign{\odevarsx}{\odevarsb}}\fmlb\) is provable.

        The \irref{evdmon} axiom is easily derived from \irref{evolutiondomain} with \irref{mon}.
    \end{proof}

    \subsection{Technical Lemmas for Robustness}

    The following lemma is of technical utility in the context of \Cref{prop:ODEgfphelpernoq}
    \begin{lemma}
        Let $F:\reals^\ell\to\reals^\ell$ be continuously differentiable, \(X\) compact and \(g_n:X\to\reals^\ell\) a uniformly convergent sequence of continuous functions.
        Then \(F\circ g_n\) also converges uniformly.\label{lem:unifconvhelper}
    \end{lemma}
    \begin{proof}
        Let \(g = \lim_{n\to\infty}g_n\) be the limit and note that as it is continuous, \(g(X)\) is compact and therefore bounded.
        Hence, by uniform convergence, there is some \(K\in\reals\) and some \(N\in\naturals\) such that \(\realnorm{g_n(x)}\leq K\) for all \(x\in X\) and all \(n\geq N\).
        Now note that \(F\) is \(L\)-Lipschitz continuous on the closed ball \(\{x\in\reals^\ell:\realnorm{x}\leq K\}\).
        Hence
        \(\supnorm[X]{F\circ g_n-F\circ g}\leq L \supnorm[X]{g_n-g}\xrightarrow{n\to\infty} 0.\)
    \end{proof}

    For any set \(D\subseteq \reals^\ell\) let \(\setepsminus{D}{\varepsilon}=\{x : \epsnbhd[\varepsilon]{x}\subseteq D\}\).
    The following lemma shows how for compact initial regions and bounded time, certain behaviour of differential equations is robust.

    \begin{lemma}
        Let \(F:\reals^\ell\to\reals^\ell\) be continuously differentiable \(E,D,C\subseteq \reals^\ell\) with \(D\) open and \(C\) compact. \label{lem:continuity}\label{prop:continuityproperties}
        \begin{enumerate}
            \item If \(E\) is open, there is \(\varepsilon>0\), \(T\) and a bounded set \(V\subseteq\reals^\ell\) such that\label{comp1}
                  \[C\subseteq \{\xel: \lexists{\yel\in D}  \contreachin{F}{\xel}{\yel}{E}\}\mimply C\subseteq \{\xel : \lexists{t\leq T,\yel\in \setepsminus{D}{\varepsilon}}  \contreachin[t]{F}{\xel}{\yel}{\setepsminus{E}{\varepsilon}\cap B}\}\]
            \item If \(E\) is compact, there is \(\varepsilon>0\) such that \label{comp3}
                  \[C\subseteq\{\fonstrel: \lforall{\yel,t\leq T}  \contreachin[t]{F}{\fonstrel}{\yel}{E}\mimply \yel\in D\}\mimply C\subseteq \{\fonstrel : \lforall{\yel,t\leq T}  \contreachin[t]{F}{\fonstrel}{\yel}{E}\mimply \yel\in \setepsminus{D}{\varepsilon}\}\]
        \end{enumerate}
    \end{lemma}
    \begin{proof}
        Let \(X_E=\{(\xel,t)\in C \times [0,\infty) : \mexists{\yel}  \contreachin[t]{F}{\xel}{\yel}{E}\}\), where the subscript \(E\) is dropped for \(E=\reals^\ell\).
        Define the \(E\)-flow \(\Phi:X\to\reals^\ell\) by \(\Phi(x,t)=\yel\), where \(\yel\) is the unique element with \(\contreach[t]{F}{\yel}{\yel}\).

        For \pref{comp1}: For every \(\xel\in C\) there is some \(t_\xel\) , such that \(\Phi(x,t_\xel)\in D\) and \(\Phi(x,t)\in E\) for all \(t\in[0,t_\xel]\).
        By openness of \(D\) there is \(\varepsilon_\xel>0\) such that \(\Phi(x,t_\xel)\in \setepsminus{D}{\varepsilon_\xel}\).
        By continuity of the flow, for all \(\xel\in C\) there is some open set \(U_\xel\ni \xel\) such that \(\Phi(\yel,t_\xel)\in \setepsminus{D}{\varepsilon_\xel}\) for all \(\yel\in U_\xel\cap X\).
        By topological regularity there is some open set \(V_\xel\ni\xel\) so that \(\xel\subseteq V_\xel\subseteq \overline{V_\xel}\subseteq U_\xel\).
        Hence \(B_\xel=\{\Phi(\yel,s) : \yel\in \overline{V_\xel}, s\in [0,t_\xel]\}\) is compact.
        Hence \(B_\xel\) is bounded and there is \(\delta_{\xel}\) so that \(\Phi(\yel,s)\in \setepsminus{E}{\delta_{\xel}}\) for all \(\yel\in \overline{V_\xel}, s\in [0,t_\xel]\).
        By compactness of \(C\) there are finitely many \(\xel_1,\ldots, \xel_k\in C\) such that \(V_{\xel_1},\ldots,V_{\xel_k}\) cover \(C\).
        Let \(T=\max_{1\leq i\leq k} t_{\xel_i}\), \(B=\bigcup_{1\leq i \leq k} B_{\xel_i}\) and \(\varepsilon=\min\{\max_{1\leq i\leq k} \varepsilon_{\xel_i},\max_{1\leq i\leq k} {\delta_i}\}\).

        For \pref{comp3}:
        Let \(K=X_E\cap (C\times [0,T])\).
        Then \(K\) is bounded, because \(C\) is bounded.
        Moreover \(K\) is closed.
        To see this, consider a sequence \((\xel_i,t_i)\in K\) with \(\lim_{i\to\infty}(\xel_i,t_i)=(\xel,t)\). Clearly \(\xel\in C\) and \(t_i\in [0,T]\).
        It suffices to show that \((\xel,t)\in X_E\), so suppose this were not the case.
        By closedness of \(E\) then \((\xel,t)\notin X\).
        Hence there must be some \(s\leq t\) such that \((\xel,s)\in X\) and \(\Phi(\xel,s)\notin E\), because the maximal integral curve from \(\xel\) leaves every compact set.
        By continuity of the flow there is some open neighbourhood \(U\ni \xel\) such that \(\Phi(\yel,s)\notin E\) for all \(\yel\in U\).
        Pick large enough \(i\) so that \(\xel_i\) in \(U\), then \(\Phi(\yel,s)\notin E\).
        This contradicts \((\yel,t)\in X_E\).
        Hence \(K\) is compact.

        By assumption \(\Phi(\xel,t)\in D\) for all \((\xel,t)\in K\) and as \(D\) is open, for every \((\xel,t)\in K\) there is some \(\varepsilon_{\xel,t}\) such that \(\Phi(\xel,t)\in \setepsminus{D}{\varepsilon_{\xel,t}}\).
        By continuity of the flow, there is some open neighbourhood \(V_{\xel,t}\ni (\xel,t)\), such that \(\Phi(\yel,s)\in \setepsminus{D}{\varepsilon_{\xel,t}}\) for all \((\yel,s)\in V_{\xel,t}\cap K\).
        By compactness there are finitely many \((x_1,t_1), \ldots, (x_k,t_l)\) such that \(V_{x_1,t_1},\ldots, V_{x_k,t_k}\) cover \(K\).
        Then \(\varepsilon=\min\{\varepsilon_{x_1,t_1},\ldots,\varepsilon_{x_k,t_k}\}\) satisfies the required property.
    \end{proof}

    \ifinlineproofs
    \else
        \section{Proofs}
        \label{sec:proofs}
        \printProofs[proofs]
    \fi
\fi

\end{document}